\newif\iflong
\newcommand{\dtlinkcolor}{{0.8 0.8 1}} 
\theoremstyle{slplain}
\newtheorem{theorem}{Theorem}[section]
\newtheorem{lemma}[theorem]{Lemma}
\newtheorem{claim}[theorem]{Claim}
\newtheorem*{rep@theorem}{\rep@title}
\newcommand{\newreptheorem}[2]{%
\newenvironment{rep#1}[1]{%
 \def\rep@title{#2 \mbox{\ref{##1}}}%
 \begin{rep@theorem}}%
 {\end{rep@theorem}}}
\theoremstyle{sldefinition}
\newtheorem{definition}[theorem]{Definition}
\newtheorem{remark}[theorem]{Remark}
\numberwithin{equation}{section}
\newcommand{\Rbb}{\mathbb{R}}
\newcommand{\Nbb}{\mathbb{N}}
\newcommand{\eps}{\varepsilon}
\newcommand{\set}[1]{\left\{#1\right\}}
\newcommand{\Prb}[2]{\mathrm{Pr}_{#1}\left[#2\right]}
\newcommand{\Exp}[2]{\mathrm{E}_{#1}\left[#2\right]}
\newcommand{\Bin}{\mathrm{Bin}}
\newcommand{\veps}{\varepsilon}
\newcommand{\EX}{\operatornamewithlimits{\mathbb{E}}}
\newcommand{\defeq}{\stackrel{\mathrm{\scriptscriptstyle def}}{=}}
\newcommand{\sgn}{\mathrm{sgn}}
\newcommand{\parhead}[1]{\medskip \noindent {\bfseries\boldmath\ignorespaces #1}\hskip 0.9em plus 0.3em minus 0.3em}
\newcommand{\namedref}[2]{\mbox{\hyperref[#2]{#1~\ref*{#2}}}}
\newcommand{\sectionref}[1]{\namedref{Section}{#1}}
\newcommand{\appendixref}[1]{\namedref{Appendix}{#1}}
\newcommand{\theoremref}[1]{\namedref{Theorem}{#1}}
\newcommand{\definitionref}[1]{\namedref{Definition}{#1}}
\newcommand{\figureref}[1]{\namedref{Figure}{#1}}
\newcommand{\figurerefb}[2]{\hyperref[#1]{Figure~\ref*{#1}#2}}
\newcommand{\lemmaref}[1]{\namedref{Lemma}{#1}}
\newcommand{\claimref}[1]{\namedref{Claim}{#1}}
\newcommand{\equationref}[1]{\mbox{\hyperref[#1]{(\ref*{#1})}}}
\renewcommand{\eqref}{\equationref}
\title{Restricted Isometry Property for General p-Norms%
\thanks{An extended abstract of this paper is to appear at the 31st International Symposium on Computational Geometry - SoCG 2015.}}
\author{
Zeyuan Allen-Zhu\thanks{MIT CSAIL, email: \texttt{zeyuan@csail.mit.edu}, \texttt{\string{gelash,ilyaraz\string}@mit.edu}} \\
\and
Rati Gelashvili\footnotemark[2] \\
\and
Ilya Razenshteyn\footnotemark[2] \\
}
\author[1]{Zeyuan Allen-Zhu}
\author[2]{Rati Gelashvili}
\author[3]{Ilya Razenshteyn}
\affil[1]{MIT, \texttt{\href{mailto:zeyuan@csail.mit.edu}{\color{black}zeyuan@csail.mit.edu}}}
\affil[2]{MIT, \texttt{\href{mailto:gelash@mit.edu}{\color{black}gelash@mit.edu}}}
\affil[3]{MIT, \texttt{\href{mailto:ilyaraz@mit.edu}{\color{black}ilyaraz@mit.edu}}}
\authorrunning{Z.\,Allen-Zhu, R.\,Gelashvili, and I.\,Razenshteyn} 
\keywords{compressive sensing, dimension reduction, linear algebra, high-dimensional geometry}
\begin{document}

    \maketitle
    \begin{abstract}
    The Restricted Isometry Property (RIP) is a fundamental
    property of a matrix which enables sparse recovery. Informally,
    an $m \times n$~matrix satisfies RIP of order~$k$ for the
    $\ell_p$~norm, if $\|Ax\|_p \approx \|x\|_p$ for
    every vector $x$ with at most~$k$ non-zero coordinates.

    For every $1 \leq p < \infty$
    we obtain almost tight bounds on the minimum number of
    rows~$m$ necessary for the RIP property to hold.
    Prior to this work, only the cases $p = 1$,~$1 + 1 / \log k$, and~$2$
    were studied. Interestingly, our results show that the
    case~$p = 2$ is a ``singularity'' point: the optimal number of rows $m$ is $\widetilde{\Theta}(k^{p})$ for all $p\in [1,\infty)\setminus \{2\}$, as opposed to $\widetilde{\Theta}(k)$ for $k=2$.

    We also obtain almost tight bounds for the column sparsity
    of RIP matrices and discuss implications of our results
    for the Stable Sparse Recovery problem.

    \end{abstract}

    \section{Introduction}

The main object of our interest is a matrix
with \emph{Restricted Isometry Property for the $\ell_p$ norm} (RIP-$p$).
Informally speaking, we are interested in
a linear map from $\Rbb^n$ to $\Rbb^m$ with $m \ll n$
that approximately preserves $\ell_p$ norms for \emph{all} vectors that have only few non-zero coordinates.

More precisely, an $m \times n$ matrix $A \in \Rbb^{m \times n}$ is said to have \emph{$(k, D)$-RIP-$p$ property} for sparsity $k \in [n] \defeq \{1,\dots,n\}$, distortion $D > 1$, and the $\ell_p$ norm for $p \in [1,\infty)$,
if for every vector $x \in \Rbb^n$ with at most $k$ non-zero coordinates
one has
$$
\|x\|_p \leq \|Ax\|_p \leq D \cdot \|x\|_p \enspace.
$$

\noindent
In this work we investigate the following question:  given $p \in [1,\infty)$, $n \in \Nbb$, $k \in [n]$, and $D > 1$,
\begin{center}
\emph{What is the smallest $m \in \Nbb$ so that there exists a
$(k, D)$-RIP-$p$ matrix $A \in \Rbb^{m\times n}$?}
\end{center}
Besides that, the following question arises naturally from the complexity of computing $Ax$:
\begin{center}
\emph{
What is the smallest column sparsity $d$ for such a $(k, D)$-RIP-$p$ matrix $A \in \Rbb^{m\times n}$?
}
\end{center}
(Above, we denote by column sparsity the maximum number of non-zero entries in a column of $A$.)
\subsection{Motivation}

\parhead{Why are RIP matrices important?}
RIP-$2$ matrices were introduced by Cand\`{e}s and Tao~\cite{ct-dlp-05} for
decoding a vector $f$ from corrupted linear measurements $Bf + e$
under the assumption that the vector of errors $e$ is sufficiently sparse
(has only few non-zero entries).
Later Cand\`{e}s, Romberg and Tao~\cite{crt-ssrii-06} used RIP-$2$ matrices to solve \emph{the (Noisy) Stable Sparse Recovery} problem, which has since found numerous applications in areas such as compressive sensing of signals~\cite{crt-ssrii-06,d-cs-06},
genetic data analysis~\cite{kbgsw-pspms-10}, and data stream algorithms~\cite{m-dsaa-05,gi-srusm-10}.

The (noisy) stable sparse recovery problem is defined as follows.
The input signal $x \in \Rbb^n$ is assumed to be close to $k$-sparse, that is, to have most of the ``mass'' concentrated on $k$ coordinates.
The goal is to design a set of $m$ linear measurements that can be represented as a single $m \times n$
matrix $A$ such that, given a \emph{noisy sketch} $y = Ax + e\in \Rbb^m$, where $e\in \Rbb^n$ is a noise vector, one can ``approximately'' recover $x$. Formally, the recovered vector $\widehat{x} \in \Rbb^n$ is required to satisfy
    \begin{equation}
        \label{lp_lq}
        \|x - \widehat{x}\|_p \leq C_1 \min_{\text{$k$-sparse $x^*$}} \|x - x^*\|_1 + C_2 \cdot \|e\|_p
    \end{equation}
    for some $C_1, C_2 > 0$, $p \in[1,\infty)$, and $k \in [n]$.

    (In order for~\eqref{lp_lq} to be meaningful, we also require $\|A\|_p \leq 1$ ---or equivalently, $\|Ax\|_p \leq \|x\|_p$
    for all $x$--- since otherwise, by scaling $A$ up, the noise vector $e$ will become negligible.)

    We refer to~\eqref{lp_lq} as the \emph{$\ell_p/\ell_1$ guarantee}.
    The parameters of interest include: the number of measurements $m$,
    the column sparsity of the measurement matrix $A$, the approximation factors $C_1$, $C_2$ and the complexity of the recovery procedure.

Cand\`{e}s, Romberg and Tao~\cite{crt-ssrii-06} proved that if $A$ is $(O(k), 1 + \eps)$-RIP-$2$ for a sufficiently
small $\eps > 0$, then one can achieve the $\ell_2/\ell_1$ guarantee with $C_1 = O(k^{-1/2})$ and $C_2 = O(1)$
in polynomial time.

The $p=1$ case was first studied by Berinde \emph{et~al.}~\cite{bgiks-cgcua-08}.
They prove that
if $A$ is $(O(k), 1 + \eps)$-RIP-$1$ for a sufficiently small
$\eps > 0$ and has a certain additional property, then one can achieve
the
$\ell_1 / \ell_1$ guarantee with $C_1 = O(1)$, $C_2 = O(1)$.

We note that \emph{any} matrix $A$ that allows the (noisy) stable sparse recovery with the $\ell_p / \ell_1$ guarantee \emph{must have the $(k, C_2)$-RIP-$p$ property}. For the proof see \appendixref{appendix_a}.

\parhead{Known constructions and limitations.}
Cand\`{e}s and Tao~\cite{ct-dlp-05} proved that for every $\eps > 0$, a matrix with $m = O(k \log(n / k) / \eps^2)$ rows and $n$ columns whose entries are sampled from i.i.d. Gaussians
 is $(k, 1 + \eps)$-RIP-$2$ with high probability.
Later, a simpler proof of the same result was discovered by Baraniuk \emph{et al.}~\cite{bddw-sprip-08}%
\footnote{This proof has an advantage that it works for any subgaussian random variables,
    such as random $\pm 1$'s.}.
Berinde \emph{et~al.}~\cite{bgiks-cgcua-08} showed that a (scaled) \emph{random sparse binary matrix} with $m = O(k \log(n / k) / \eps^2)$ rows is $(k, 1 + \eps)$-RIP-$1$ with high probability\footnote{In the same paper~\cite{bgiks-cgcua-08} it is observed that the same construction works for $p = 1 + 1 / \log k$.}. 

Since the number of measurements is very important in practice, it is natural to ask, how optimal is the dimension bound $m = O(k \log(n / k))$ that the above constructions achieve?
The results of Do~Ba~\emph{et~al.}~\cite{dipw-lbsr-10} and Cand\'{e}s~\cite{c-ripii-08}
imply the lower bound $m = \Omega(k \log(n /k))$ for $(k, 1 + \eps)$-RIP-$p$ matrices for $p \in \set{1, 2}$, provided that $\eps > 0$ is sufficiently small.

Another important parameter of a measurement matrix $A$ is its~\emph{column sparsity}: the maximum number of non-zero entries in a single column of $A$. If $A$ has column sparsity $d$, then we can perform multiplication $x \mapsto Ax$ in time
$O(n d)$ as opposed to the naive $O(n m)$ bound. Moreover, for sparse matrices $A$, one can maintain the sketch $y = Ax$ very efficiently if we update $x$. Namely, if we set $x \gets x + \alpha \cdot e_i$, where $\alpha \in \Rbb$ and $e_i \in \Rbb^n$ is a basis vector, then we can update $y$ in time $O(d)$ instead of the naive bound $O(m)$.

The aforementioned constructions of RIP matrices exhibit very different behavior with respect to column sparsity.
RIP-$2$ matrices obtained from random Gaussian matrices are obviously dense, whereas the construction of
RIP-$1$ matrices of Berinde \emph{et al.}~\cite{bgiks-cgcua-08} gives very small column sparsity $d=O(\log(n / k) / \eps)$.
It is known that in both cases the bounds on column sparsity are essentially tight.

Indeed, Nelson and Nguy$\tilde{\hat{\mbox{e}}}$n showed~\cite{nn-slbdr-13} that any
non-trivial column sparsity is impossible for RIP-$2$ matrices unless $m$ is much larger than $O(k \log(n / k))$.
Nachin showed~\cite{nachin} that any RIP-$1$ matrix with
$O(k \log(n / k))$ rows must have column sparsity $\Omega(\log(n / k))$.
Besides that, Indyk and Razenshteyn showed~\cite{ir-mbrm-13}
that every RIP-$1$ matrix
`must be sparse': any RIP-$1$ matrix with $O(k \log(n / k))$ rows
can be perturbed slightly and made $O(\log(n / k))$-sparse.

Another notable difference between RIP-$1$ and RIP-$2$ matrices is the following. The construction of
Berinde \emph{et al.}~\cite{bgiks-cgcua-08} provides RIP-$1$ matrices with non-negative entries, whereas Chandar proved~\cite{c-sgccs-10} that any RIP-$2$ matrix with non-negative entries must have $m = \Omega(k^2)$ (and this was later improved to $m=\Omega(k^2 \log (n/k))$~\cite{nn-slbdr-13,AGMS2014}). In other words, negative signs are crucial in the construction of RIP-$2$ matrices but not for the RIP-$1$ case.

\subsection{Our results}

Motivated by these discrepancies between the optimal constructions for RIP-$p$ matrices with $p \in \big\{1, 1 + \frac{1}{\log k}, 2\big\}$, we initiate the study of RIP-$p$ matrices for the general $p \in [1,\infty)$.

Having in mind that the upper bound $m = O(k \log(n / k))$ holds for RIP-$p$ matrices with $p \in \big\{1, 1 + \frac{1}{\log k}, 2\big\}$,
it would be natural to conjecture that the same bound holds at least
for every $p \in (1,2)$.
As we will see, surprisingly, this conjecture is very far from being true.

Also, knowing that the column sparsity $d = O(k \log (n/k))$ can be obtained for $p=2$ while $d = O(\log (n/k))$ can be obtained for $p=1$, it is interesting to ``interpolate'' these two bounds.

Besides the mathematical interest, a more ``applied'' reason to study RIP-$p$ matrices for the general $p$ is
to get new guarantees for the stable sparse recovery. Indeed, we obtain new results in this direction.

\begin{table}
\begin{center}
\renewcommand{\arraystretch}{1.1}
\begin{tabular}[c]{| >{\centering}m{1.5cm} |c|c| >{\arraybackslash}l |}
    \hline
    $p$ & rows $m$ & column sparsity $d$ & references \\
    \hline
    $1$ & $\Theta(k \log(n / k))$ & $\Theta(\log(n / k))$ &
    \cite{bgiks-cgcua-08,dipw-lbsr-10,nachin,ir-mbrm-13} \\[1mm]
    $1 + \frac{1}{\log k}$ & $O(k \log(n / k))$ & $O(\log(n / k))$ &
    \cite{bgiks-cgcua-08} \\[1mm]
    $(1, 2)$ & $\widetilde{\Theta}(k^p)$ &
    $\widetilde{\Theta}(k^{p-1})$& this work \\[1ex]
    $2$ & $\Theta(k \log(n / k))$ & $\Theta(k \log (n / k))$ &
    \begin{minipage}{6cm}
    \cite{ct-dlp-05,crt-ssrii-06,c-ripii-08,bddw-sprip-08}, \\
    \cite{dipw-lbsr-10,c-sgccs-10,nn-slbdr-13}
    \end{minipage} \\[2ex]
    $(2, \infty)$ & $\widetilde{\Theta}(k^p)$ & $\widetilde{\Theta}(k^{p-1})$ & this work\\
    \hline
\end{tabular}
\end{center}
\caption{Prior and new bounds on RIP-$p$ matrices}
\end{table}

\parhead{Our Upper Bounds.}
On the positive side, for all $\eps > 0$ and all $p \in (1,\infty)$, we construct $(k, 1 + \eps)$-RIP-$p$ matrices with $m = \widetilde{O}(k^p)$ rows. Here, we use the $\widetilde{O}(\cdot)$-notation to hide factors that depend on $\eps$, $p$, and are polynomial in $\log n$.
More precisely, we show that a (scaled) \emph{random sparse
$0/1$ matrix} with $\widetilde{O}(k^p)$ rows and column sparsity $\widetilde{O}(k^{p-1})$ has the desired RIP property with high probability.

This construction essentially matches that of Berinde \emph{et al.}~\cite{bgiks-cgcua-08} when $p$ approaches $1$.
At the same time, when $p = 2$, our result matches known constructions of non-negative RIP-$2$ matrices based on the incoherence argument.%
\footnote{That is, a (scaled) random $m \times n$ binary matrix with $m=O(\veps^{-2}k^2\log(n/k))$ rows and sparsity $d=O(\veps^{-1}k\log(n/k))$ satisfies the $(k,1+\veps)$-RIP-2 property. This can be proved using for instance the incoherence argument from \cite{rauhut2010compressive}: any incoherent matrix satisfies the RIP-2 property with certain parameters.
}

\parhead{Our Lower bounds.}
Surprisingly, we show that, despite our upper bounds being suboptimal for $p = 2$,
the are essentially tight for every constant $p \in (1,\infty)$ except $2$.
Namely, they are optimal both in terms of the dimension $m$ and the column sparsity $d$.

More formally, on the dimension side, for every $p \in (1,\infty) \setminus \{2\}$, distortion $D > 1$, and $(k, D)$-RIP-$p$ matrix $A \in \mathbb{R}^{m\times n}$, we show that $m = \Omega(k^p)$, where $\Omega(\cdot)$
hides factors that depend on $p$ and $D$. Note that, it is not hard to extend an argument of Chandar~\cite{c-sgccs-10} and obtain a lower bound $m = \Omega(k^{p-1})$.%
\footnote{Also, the same argument gives the lower bound $\Omega(k^p)$ for \emph{binary} RIP-$p$ matrices for every $p \in [1, \infty)$.}
This additional factor $k$ is exactly what makes our lower bound non-trivial and tight for $p\in (1,\infty)\setminus \{2\}$, and thus enables us to conclude that $p = 2$ is a ``singularity''.%
\footnote{A similar singularity is known to exist for
linear dimension reduction for arbitrary point sets with respect to $\ell_p$~norms~\cite{lmn-msdsa-05}; alas, tight bounds
for that problem are not known.}

As for the column sparsity, we present a simple extension of the argument of Chandar~\cite{c-sgccs-10}
and prove that for every $p \in [1,\infty)$ any $(k, D)$-RIP-$p$ matrix must have column sparsity
$\Omega(k^{p-1})$.


\parhead{RIP matrices and sparse recovery.}
We extend the result of Cand\`{e}s, Romberg and Tao~\cite{crt-ssrii-06} to show that, for every $p > 1$, RIP-$p$ matrices allow the stable sparse recovery with the $\ell_p / \ell_1$
guarantee and approximation factors
$C_1 = O\bigl(k^{-1 + 1/p}\bigr)$, $C_2 = O(1)$ in polynomial time. This extension is quite straightforward and seems to be folklore, but, to the best of our knowledge, it is not recorded anywhere.

On the other hand, for every $p\geq 1$, it is almost immediate that \emph{any} matrix $A$ that allows the stable sparse
recovery with the $\ell_p / \ell_1$ guarantee ---even if it works only for $k$-sparse signals--- \emph{must have the $(k, C_2)$-RIP-$p$ property}.
For the sake of completeness, we have included both the above proofs in \appendixref{appendix_a}.

\parhead{Implications to sparse recovery.}
Using the above equivalent relationship between the stable sparse recovery problem and the RIP-$p$ matrices, we conclude that the stable sparse recovery with the $\ell_p / \ell_1$ guarantee requires $m=\widetilde{\Theta}(k^p)$ measurements for every $p \in [1; \infty) \setminus \set{2}$, and requires $d=\tilde{\Theta}(k^{p-1})$ column sparsity for every $p \in [1, \infty)$. Our results together draw tradeoffs between the following three parameters in stable sparse recovery:
\begin{itemize}[nolistsep]
\item $p$, the $\ell_p/\ell_1$ guarantee for the stable sparse recovery,%
\footnote{We note that the $\ell_p/\ell_1$ and the $\ell_q/\ell_1$ guarantees are incomparable. However, it is often more desirable to have larger $p$ in this $\ell_p/\ell_1$ guarantee to ensure a better recovery quality. This is because, if the noise vector $e=0$, the $\ell_q/\ell_1$ guarantee (with $C_1=O(k^{-1+1/q})$) can be shown to be stronger than the $\ell_p/\ell_1$ one (with $C_1=O(k^{-1+1/p})$) whenever $q>p$. However, when there is a noise term, the guarantee $\|x-\hat{x}\|_p \leq O(1) \cdot \|e\|_p$ is incomparable to $\|x-\hat{x}\|_q \leq O(1) \cdot \|e\|_q$ for $p\neq q$.}
\item $m$, the number of measurements needed for sketching, and
\item $d$, the running time (per input coordinate) needed for sketching.
\end{itemize}

\medskip
It was pointed out by an anonymous referee that for the \emph{noiseless} case ---that is, when the noise vector
$e$ is always zero--- better
upper bounds are possible. Using the result of Gilbert \emph{et al.}~\cite{gstv-osfaf-07}, one can obtain, for every
$p \geq 2$, the noiseless stable sparse recovery procedure with the $\ell_p / \ell_1$ guarantee using only
$m=\widetilde{O}(k^{2 - 2/p})$ measurements.
Therefore, our results also imply a very large gap, both in terms of $m$ and $d$, between the \emph{noiseless} and the \emph{noisy} stable sparse recovery problems.

\subsection{Overview of the proofs}

\parhead{Upper bounds.}
We construct RIP-$p$ matrices as follows.
Beginning with a zero matrix $A$ with $m=\widetilde{O}(k^p)$ rows and $n$ columns, independently for each column of $A$, we choose $d = \widetilde{O}(k^{p-1})$ out of $m$ entries uniformly at random (without replacement), and assign the value $d^{-1/p}$ to those selected entries.
For this construction, we have two very different analyses of its correctness: one works only for $p \geq 2$, and the other works only for $1 < p < 2$.

For $p \geq 2$, the most challenging part is to show that $\|Ax\|_p \leq (1+\veps)\|x\|_p$ holds with high probability, for all $k$-sparse vectors $x$. We reduce this problem to a probabilistic question
\emph{similar in spirit} to the following ``balls and bins'' question.
Consider $n$ bins in which we throw $n$ balls uniformly and independently.
As a result, we get $n$ numbers $X_1$, $X_2$, \ldots, $X_n$, where $X_i$ is the number of balls falling into the
$i$-th bin.
We would like to upper bound the tail $\Prb{}{S \geq 1000 \cdot \Exp{}{S}}$ for the random variable $S = \sum_{i=1}^n X_i^{p-1}$. (Here, the constant $1000$ can be replaced with any large enough one since we do not care about constant factors in this paper.)
The first challenge is that $X_i$'s are not independent.
To deal with this issue we employ
the notion of \emph{negative association} of random variables
introduced by Joag-Dev and Proschan~\cite{jp-narva-83}.
The second problem is that the random variables $X_i^{p-1}$
are heavy tailed: they have tails of the form
$\Pr_{}\big[X_i^{p-1} \geq t \big] \approx \exp(-t^{\frac{1}{p - 1}})$,
so the standard technique of bounding the moment-generating function
does not work.
Instead, we bound the high moments of~$S$ directly,
which introduces certain technical challenges.
Let us remark that sums of i.i.d. heavy-tailed variables were thoroughly studied by Nagaev~\cite{n-iltt1-69,
n-iltt2-69}, but it seems that for the results in these papers the independence of summands is crucial.

One major reason the above approach fails to work for $1 < p < 2$ is that, in this range,
even the best possible tail inequality for $S$ is too weak for our purposes.
Another challenge in this regime is that, to bound the ``lower tail'' of $\|Ax\|_p^p$ (that is, to prove that
$\|Ax\|_p \geq (1 - \eps) \|x\|_p$ holds for all $k$-sparse $x$), the simple argument used for $p\geq 2$ no longer
works.
Our solution to both problems above is to instead build our RIP matrices based on the following general notion of bipartite expanders.

\begin{definition}\label{def:bipartite-expander}
    Let $G = (U, V, E)$ with $|U| = n$, $|V| = m$ and $E \subseteq U \times V$ be a bipartite graph
    such that all vertices from $U$ have the same degree $d$. We say that $G$ is an
    \emph{$(\ell, d, \delta)$-expander}, if for every $S \subseteq U$ with $|S| \leq \ell$ we have
    $$
        \big|\set{v \in V \mid \exists u \in S \enspace (u, v) \in E}\big| \geq (1 - \delta) d |S| \enspace.
    $$
\end{definition}

\noindent
It is known that random $d$-regular graphs are good expanders, and we can take the (scaled) adjacency matrix of such an expander and prove that it satisfies the desired RIP-$p$ property for $1<p<2$.
Our argument can be seen as a subtle interpolation
between the argument from~\cite{bgiks-cgcua-08},
which proves that (scaled) adjacency matrices of $(k, d, \Theta(\eps))$-expanders
(with $\widetilde{O}(k)$ rows)
are $(k, 1 + \eps)$-RIP-$1$ and the one
using incoherence argument,%
\footnote{It is known \cite{rauhut2010compressive} that an incoherent matrix satisfies the RIP-2 property with certain parameters. At the same time, the notion of incoherence can be interpreted as expansion for $\ell=2$.}
 which shows that $(2, d, \Theta(\eps/k))$-expanders
give $(k, 1 + \eps)$-RIP-$2$ matrices (with $\widetilde{O}(k^2)$ rows).

\parhead{Lower bounds.}
Our dimension lower bound $m = \Omega(k^{p})$ is derived essentially from norm inequalities. The high-level idea can be described in four simple steps. Consider any $(k,D)$-RIP-$p$ matrix $A\in\Rbb^{n\times m}$, and assume that $D$ is very close to $1$ in this high-level description.

In the first three steps, we deduce from the RIP property that (a) the sum of the $p$-th powers of all entries in $A$ is approximately $n$, (b) the largest entry in $A$ (i.e., the vector $\ell_\infty$-norm of $A$) is essentially at most $k^{1/p-1}$, and (c) the sum of squares of all entries in $A$ is at least $n \big(\frac{k}{m}\big)^{2/p-1}$ if $p\in (1,2)$, or at most $n \big(\frac{k}{m}\big)^{2/p-1}$ if $p>2$. In the fourth step, we combine (a) (b) and (c) together by arguing about the relationships between the $\ell_p$, $\ell_\infty$ and $\ell_2$ norms of entries of $A$,
and prove the desired lower bound on $m$.

The sparsity lower bound $d = \Omega(k^{p-1})$ can be obtained via a simple extension of the argument of
Chandar~\cite{c-sgccs-10}.
It is possible to extend the techniques of Nelson and Nguy$\tilde{\hat{\mbox{e}}}$n~\cite{nn-slbdr-13} to obtain
a slightly better sparsity lower bound. However, since we were unable to obtain \emph{a tight} bound this way, we decided
not to include it.

\section{RIP Construction for $p \geq 2$}
\label{sec:positive-p-larger-2}
In this section, we construct $(k,1+\veps)$-RIP-$p$ matrices for $p\geq 2$ by proving the following theorem.
\begin{definition}
\label{def:random-matrix}
We say that an $m\times n$ matrix $A$ is a \emph{random binary matrix with sparsity $d\in [m]$}, if $A$ is generated by assigning $d^{-1/p}$ to $d$ random entries per column (selected uniformly at random without replacement), and assigning $0$ to the remaining entries.
\end{definition}
\begin{theorem}
    \label{rip_larger_2}
    For all $n\in \mathbb{Z}_+$, $k \in [n]$, $\eps\in (0,\frac{1}{2})$ and $p \in [2,\infty)$, there exist $m,d \in \mathbb{Z}_+$ with
    $$
        m = p^{O(p)} \cdot \frac{k^p}{\eps^2} \cdot \log^{p-1} n
        \mbox{\quad and \quad}
        d = p^{O(p)} \cdot \frac{k^{p-1}}{\eps} \cdot \log^{p-1} n \leq m
    $$
such that, letting $A$ be a random binary $m \times n$ matrix of sparsity $d$, with probability at least $98\%$, $A$ satisfies
$(1-\veps)\|x\|_p^p \leq \|Ax\|_p^p \leq (1+\veps)\|x\|_p^p$
for all $k$-sparse vectors $x\in \mathbb{R}^n$.
\end{theorem}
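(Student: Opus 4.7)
The plan is to reduce the two-sided bound $(1-\veps)\|x\|_p^p \le \|Ax\|_p^p \le (1+\veps)\|x\|_p^p$ for a fixed $k$-sparse $x$ to a concentration statement about the "collision graph" of $A$, and then take a union bound over all $\binom{n}{k}$ possible supports. Fix $S = \supp(x)$ and let $R_j = \{i : A_{ij} \neq 0\}$ (the $d$ rows chosen by column $j$) and $T_i = S \cap \{j : A_{ij} \neq 0\}$. Then $(Ax)_i = d^{-1/p}\sum_{j\in T_i} x_j$, so Hölder's inequality $|\sum_{j\in T_i} x_j|^p \le |T_i|^{p-1}\sum_{j \in T_i}|x_j|^p$ yields
\[
\|Ax\|_p^p \;\le\; d^{-1}\sum_{j\in S} |x_j|^p \sum_{i \in R_j} |T_i|^{p-1}.
\]
Writing $|T_i| = 1 + X_i^{(j)}$ where $X_i^{(j)} = |T_i\setminus\{j\}|$ and using $(1+y)^{p-1}\le 1 + 2^{p-1}y^{p-1}$ (trivial when $y=0$; since $y\in\Zbb_{\ge 0}$, when $y\ge 1$ use $(1+y)^{p-1}\le(2y)^{p-1}$), the upper bound follows once we establish, for every $j\in S$, that $\sum_{i\in R_j} (X_i^{(j)})^{p-1} \le \veps d / 2^{p-1}$ with probability $1 - n^{-\Omega(k)}$.

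For this, I would first condition on the placement of column $j$, making $R_j$ and the variables $X_i := X_i^{(j)}$ (which depend only on the other $k-1$ columns) independent. It then suffices to control the unconditional sum $S := \sum_{i=1}^m X_i^{p-1}$, since a sampling-without-replacement concentration bound handles the restriction to the random $d$-subset $R_j$. With the parameter choice $m = p^{O(p)}k^p \log^{p-1}n/\veps^2$ and $d = p^{O(p)}k^{p-1}\log^{p-1}n/\veps$, each $X_i$ behaves like a $\mathrm{Bin}(k-1, d/m)$ variable of mean $\mu = (k-1)d/m = \Theta(\veps)$, so $\mathbb{E}[S] = \Theta(m\mu^{p-1})$ lands at the required order. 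The variables $X_i$ are not independent, but each is a sum over $k-1$ columns of indicators that are negatively associated within a column (sampling $d$ positions from $m$ without replacement) and independent across columns; hence $(X_1,\ldots,X_m)$ is negatively associated by Joag-Dev--Proschan. The crux is that $X_i^{p-1}$ has a Poisson-like tail $\Pr[X_i^{p-1}\ge t] \approx \exp(-t^{1/(p-1)})$, so the moment generating function of $S$ diverges and the usual Chernoff argument fails. I would instead bound $\EX[S^q]$ directly for $q = \Theta(k\log n)$: use negative association to pass to independent copies $\widetilde X_i$, expand $(\sum_i \widetilde X_i^{p-1})^q$ multinomially, group monomials by multiplicity pattern, and bound each $\EX[\widetilde X_1^{(p-1)\ell}]$ using standard binomial-moment estimates. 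A careful combinatorial count (each term contributes $\mu^{\#\text{cols touched}} m^{\#\text{rows touched}}$ times a factor of $p^{O(pq)}$) shows $\EX[S^q] \le (Cp^p \veps d/2^p)^q$ for a suitable constant $C$; Markov's inequality then produces the desired tail $n^{-\Omega(k)}$, which survives the union bound over supports $S$ and choices of $j \in S$.

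For the lower bound $\|Ax\|_p^p \ge (1-\veps)\|x\|_p^p$, the direct Hölder trick does not apply (since $(Ax)_i$ can involve cancellations among signed $x_j$'s), so I would instead compute $\EX[\|Ax\|_p^p]$ exactly and then concentrate. Because $kd/m = \Theta(\veps)$ is small, with probability $1-\Theta(\veps)$ a given row has $|T_i|=0$ (contribution $0$) and with probability $\Theta(d/m)$ it has $|T_i|=1$ contributing $d^{-1}|x_j|^p$; summing gives $\EX[\|Ax\|_p^p] = (1\pm O(\veps))\|x\|_p^p$, with the multi-hit rows accounted for by the same moment analysis used in the upper bound. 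To concentrate from below, I would apply Paley--Zygmund to the single-hit portion $\|Ax\|_p^p - (\text{multi-hit remainder})$ and subtract off the already-controlled upper deviation of the remainder.

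The main obstacle is unquestionably Step (the high moment bound on $S$ under negative association). The heavy-tailed nature of $X_i^{p-1}$ forces a purely combinatorial moment calculation; negative association must be invoked in precisely the right form (it gives $\EX[\prod f_i(X_i)] \le \prod \EX[f_i(X_i)]$ for coordinatewise monotone $f_i$'s, which suffices for the monomial expansion but must be checked for each multi-index), and the $p^{O(p)}$ and $\log^{p-1} n$ factors in the target parameters leave very little slack — all the constants $2^{p-1}$, $p^{O(p)}$ accumulated along the way must be absorbed into the stated bound without losing more than a polynomial in $p$ per factor.
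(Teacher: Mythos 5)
Your upper-tail plan is essentially the paper's: the same H\"{o}lder reduction to controlling $\sum_{i\in R_j}|T_i|^{p-1}$, the same identification of $X_i \sim \Bin(k-1,d/m)$, the same recognition that the $X_i$ are negatively associated (Joag-Dev--Proschan) and that the heavy tail of $X_i^{p-1}$ kills the MGF approach, and the same resolution via high moments and Markov. The paper packages the moment computation through Lata{\l}a's theorem on $\EX[(\sum \widetilde{Y}_i)^t]$ instead of hand-expanding the multinomial, but that is cosmetic. Two small remarks: the extra step of first controlling $S = \sum_{i=1}^m X_i^{p-1}$ over all $m$ rows and then invoking a sampling-without-replacement bound to restrict to $R_j$ is an unnecessary detour --- once you condition on $R_j = \{1,\dots,d\}$, the $d$ variables $X_1,\dots,X_d$ are already negatively associated with the right marginals, so you can bound $\sum_{i=1}^d X_i^{p-1}$ directly; and the paper works with $(X_i+1)^{p-1}-1$ instead of $2^{p-1}X_i^{p-1}$, which avoids an extra $2^{p-1}$ but changes nothing essential.

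The lower tail is where your plan genuinely breaks. The statement requires $\|Ax\|_p^p \geq (1-\veps)\|x\|_p^p$ \emph{simultaneously for all $k$-sparse $x$}, and for the lower direction the summand $(Ax)_i$ can have cancellations among the signed $x_j$'s, so the bound does not reduce to a support-only statement the way the upper bound does. Your proposal --- compute $\EX[\|Ax\|_p^p]$ for a fixed $x$ and concentrate from below via Paley--Zygmund --- has two problems. First, Paley--Zygmund is an \emph{anti-concentration} inequality: it lower-bounds $\Pr[Z \geq \theta \EX Z]$ by $(1-\theta)^2(\EX Z)^2/\EX[Z^2]$, i.e., it shows the event has at least constant probability, whereas you need the \emph{complementary} event to have probability roughly $n^{-\Omega(k)}$ to survive the union bound. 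It is simply the wrong tool. Second, even with a correct one-sided concentration inequality, you would still only have a per-vector bound, and for a fixed support there is a continuum of $k$-sparse $x$; you would need a net over the $\ell_p$-unit sphere in $\Rbb^k$, which you do not mention and which introduces its own overhead. The paper sidesteps all of this with a one-shot incoherence argument (Claim~\ref{claim:incoherent}): with probability $0.99$, every pair of columns $i\neq j$ satisfies $|S_i \cap S_j| \leq \veps d/k$, hence for any $k$ columns each has $\geq(1-\veps)d$ rows private to it within that set, and restricting $\|Ax\|_p^p$ to those private rows gives $\|Ax\|_p^p \geq \frac{1}{d}\sum_j |S_j'||x_j|^p \geq (1-\veps)\|x\|_p^p$ \emph{deterministically for every} $k$-sparse $x$. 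This requires $m = \Omega(k^2\log n/\veps^2)$, which is free for $p\geq 2$. You should replace your lower-tail step with this argument.
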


Our proof is divided into two steps: (1) the ``lower-tail step'', that is, with probability at least $0.99$ we have $\|Ax\|_p^p \geq (1-\veps)\|x\|_p^p$ for all $k$-sparse $x$, and (2) the ``upper-tail step'', that is, with probability at least $0.99$, we have $\|Ax\|_p^p \leq (1+\veps)\|x\|_p^p$.

For every $j \in [n]$, let us denote by $S_j \subseteq [m]$ the set of non-zero rows of the $j$-th column of $A$.
\subsection{The Lower-Tail Step}
To lower-tail step is very simple. It suffices to show that, with high probability, $|S_i \cap S_j|$ is small for every pair of different $i,j \in [n]$, which will then imply that if only $k$ columns of $A$ are considered, every $S_i$ has to be almost disjoint from the union of the $S_j$ of the $k-1$ remaining columns. This can be summarized by the following claim.

    \begin{claim}
    \label{claim:incoherent}
        If $d \geq C \veps^{-1} k \log n $ and $m \geq 2dk / \eps$, where $C$ is some large enough constant, then
        $$
            \Pr \Big[ \forall 1 \leq i < j \leq n \quad |S_i \cap S_j| \leq \frac{\eps d}{k} \Big] \geq 0.99 \enspace.
        $$
    \end{claim}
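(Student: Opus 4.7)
The plan is to fix a single pair $1 \le i < j \le n$, analyze $|S_i \cap S_j|$ via a standard hypergeometric concentration bound, and then apply a union bound over all $\binom{n}{2}$ pairs.

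First I would observe that, conditioned on $S_i$, the set $S_j$ is a uniformly random $d$-subset of $[m]$ drawn independently, so $|S_i \cap S_j|$ is a hypergeometric random variable with population size $m$, $d$ ``successes'' in the population, and $d$ draws. Its expectation is
\[
\mu \;=\; \frac{d^2}{m} \;\leq\; \frac{d^2}{2dk/\eps} \;=\; \frac{\eps d}{2k},
\]
using the hypothesis $m \geq 2dk/\eps$. In particular, the target threshold $t \defeq \eps d / k$ satisfies $t \geq 2\mu$.

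Next I would apply a Chernoff-type tail bound. Since $|S_i \cap S_j|$ can be written as a sum of indicator variables that are negatively associated (see \cite{jp-narva-83}, or alternatively invoke Hoeffding's classical reduction from sampling without replacement to sampling with replacement), the standard multiplicative Chernoff inequality applies. Writing $t = (1+\delta)\mu$ with $\delta \geq 1$, the bound $\Pr[|S_i \cap S_j|\geq t] \leq \exp(-\delta \mu / 3) \leq \exp(-t/6)$ yields
\[
\Pr\!\left[|S_i \cap S_j| \geq \frac{\eps d}{k}\right] \;\leq\; \exp\!\left(-\frac{\eps d}{6k}\right).
\]
Plugging in the hypothesis $d \geq C \eps^{-1} k \log n$, this probability is at most $n^{-C/6}$.

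Finally, I would take a union bound over the at most $n^2$ pairs $(i,j)$. Choosing the absolute constant $C$ large enough (say $C \geq 60$) makes the total failure probability $n^{2 - C/6} \leq 0.01$, yielding the claim. I do not foresee any serious obstacle; the only minor point of care is justifying the Chernoff bound for sampling without replacement, which is standard via either negative association or Hoeffding's reduction.
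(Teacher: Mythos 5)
Your proposal matches the paper's proof essentially line for line: you condition on $S_i$, observe $|S_i\cap S_j|$ is a sum of negatively dependent indicators with mean $d^2/m \leq \eps d/(2k)$, invoke a Chernoff bound valid under negative dependence to get an exponentially small tail, and union bound over the $\binom{n}{2}$ pairs. The only cosmetic difference is that you spell out the explicit Chernoff constants and mention Hoeffding's reduction as an alternate justification, whereas the paper cites a reference and leaves the $\Omega(\cdot)$ implicit; both are correct.
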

    \begin{proof}
     Let us first upper bound the probability that $S_i$ and $S_j$ intersect by more than $\frac{\veps d}{k}$
     elements. For notational simplicity suppose that $S_i = \{1,\dots,d\}$, and let the random variable $X_k$ be $1$ if $S_j$ contains $k$, and $0$ if not.
     Under this definition, we have $|S_i \cap S_j|=\sum_{i=1}^{d} X_i$.

     Noticing that the expectation $\EX[X_1+\cdots+X_d]=\frac{d}{m} \cdot d = \frac{d^2}{m}$, and $\frac{\veps d}{k} \geq 2\cdot \frac{d^2}{m}$ is twice as large as the expectation, we apply Chernoff bound
     for negatively correlated binary random variables~\cite{ps-rdece-97}
     and obtain
     $$\Pr \Big[|S_i\cap S_j|>\frac{\veps d}{k} \Big] = \Pr \Big[X_1+\cdots+X_d > \frac{\veps d}{k}\Big] < e^{-\Omega(\veps d/k)} \leq \frac{1}{100 n^2} \enspace,$$
     where the last inequality is true by our choice of $d \geq C \veps^{-1} k \log n$ for some large enough constant $C$. Finally, by union bound, we have $\Pr \big[ \exists i,j\in [n] \mbox{ with $i \ne j$}, \, |S_i\cap S_j| > \frac{\veps d}{k} \big]\leq 0.01$.
    \end{proof}

Now, to prove the lower tail, without loss of generality, let us assume that $x$ is supported on $[k]$, the first $k$ coordinates. For every $j \in [k]$, we denote by $S_j' = S_j \setminus \bigcup_{j' \in [k]\setminus \{j\}} S_{j'}$, the set of non-zero rows in column $j$ that are not shared with the supports of
other columns in $[k]\setminus \{j\}$. If the event in \claimref{claim:incoherent} holds, then for every $j \in [k]$, we have $|S_j'| \geq (1 - \eps) d$. Thus, we can lower bound $\|A x\|_p$ as
\begin{equation}\label{eqn:p-more-than-2:lower-tail}
        \|Ax\|_p^p =
\frac{1}{d} \cdot \sum_{i=1}^m \bigg|\sum_{{j \in [k]: i \in S_j}} x_j\bigg|^p
\geq \frac{1}{d} \cdot \sum_{i=1}^m \bigg|\sum_{{j \in [k]:  i \in S_j'}} x_j\bigg|^p
        =
        \frac{1}{d} \cdot \sum_{j \in [k]} |S_j'| \cdot |x_j|^p \geq (1 - \eps) \|x\|_p^p \enspace.
\end{equation}

\begin{remark}
The above claim only works when $m=\Omega(k^2 \log n/\veps^2)$, and therefore we cannot use it in for the case of $1<p<2$.
\end{remark}

\subsection{The Upper-Tail Step}
Below we describe the framework of our proof for the upper-tail step, deferring all technical details to \sectionref{app:positive-p-larger-2}.

Suppose again that $x$ is supported on $[k]$. Then, we upper bound $\|A x\|_p^p$ as
\begin{align}
    \|Ax\|_p^p =
    \frac{1}{d} \cdot \sum_{i=1}^m \bigg|\sum_{j \in [k]: i \in S_j} x_j\bigg|^p
&\leq
\frac{1}{d} \cdot \sum_{i=1}^m \big|\set{j' \in [k] \mid i \in S_{j'}}\big|^{p-1}
    \cdot \sum_{j \in [k]: i \in S_j } |x_j|^p \nonumber \\
    &= \frac{1}{d} \cdot \textstyle{\sum}_{j=1}^k |x_j|^p \cdot \textstyle{\sum}_{i \in S_j} \big|\set{j' \in [k] \mid i \in S_{j'}}\big|^{p-1} \enspace, \label{eqn:p-more-than-2:upper-tail}
\end{align}
where the first inequality follows from the fact that $(a_1+\cdots+a_N)^p \leq N^{p-1} (a_1^p + \cdots + a_N^p)$ for any sequence of $N$ non-negative reals $a_1,\dots,a_N$. Note that the quantity $\big|\set{j' \in [k] \mid i \in S_{j'}}\big| \in [k]$ captures the number of non-zeros of $A$ in the $i$-th row and the first $k$ columns. From now on, in order to prove the desired upper tail, it suffices to show that, with high probability
\begin{equation}\label{eqn:binomial-goal}
  \forall j\in [k],\quad    \textstyle{\sum}_{i \in S_j} \big|\set{j' \in [k] \mid i \in S_{j'}}\big|^{p-1} \leq (1 + \eps) d \enspace.
\end{equation}

To prove this, let us fix some $j^* \in [k]$ and upper bound the probability that \equationref{eqn:binomial-goal} holds for $j=j^*$, and then take a union bound over the choices of $j^*$. Without loss of generality, assume that $S_{j^*}=\{1,2,\dots,d\}$, consisting of the first $d$ rows. For every $i \in S_{j^*}$, define
a random variable $X_i \defeq \big|\set{j' \in [k] \mid i \in S_{j'}}\big| - 1$. It is easy to see that $X_i$ is distributed as $\Bin(k - 1, d / m)$, the binomial distribution that is the sum of $k-1$ i.i.d. random $0/1$
variables, each being $1$ with probability $d/m$.
For notational simplicity, let us define $\delta \defeq dk/m$. We will later choose $\delta < \veps$ to be very small. Our goal in \equationref{eqn:binomial-goal} can now be reformulated as follows: upper bound the probability
$$ \Pr \Big[ \; \textstyle{\sum}_{i=1}^d ((X_i+1)^{p-1} - 1) > \veps d \; \Big] \enspace. $$

We begin with a lemma showing an upper bound on the moments of each $Y_i \defeq (X_i+1)^{p-1} - 1$.
\begin{lemma}
\label{lemma:p-more-than-2:single-moment}
There exists a constant $C \geq 1$ such that, if $X$ is drawn from the binomial distribution $\Bin(k-1,\delta/k)$ for some $\delta < 1/(2e^{2})$, and $p \geq 2$, then for any real $\ell \geq 1$,
$$ \EX[((X+1)^{p-1}-1)^\ell] \leq  C \cdot \delta (\ell(p-1)+1)^{\ell(p-1)+1} \enspace. $$
\end{lemma}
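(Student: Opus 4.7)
My plan is to bound the expectation by a sum indexed by the value of $X$, and then dispatch that sum via a two-regime split at $t \approx q+1$, where $q \defeq \ell(p-1)$.

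First, since $(X+1)^{p-1} - 1$ vanishes at $X = 0$ and lies in $[0, (X+1)^{p-1}]$ for $X \geq 1$, one has $((X+1)^{p-1} - 1)^\ell \leq (X+1)^q$ pointwise. Combining this with the standard Poisson-type bound
$\Pr[X = t] = \binom{k-1}{t}(\delta/k)^t (1 - \delta/k)^{k-1-t} \leq \delta^t / t!$
(whose proof uses only $\binom{k-1}{t} \leq k^t/t!$) yields
\begin{equation*}
\EX\!\left[((X+1)^{p-1} - 1)^\ell\right] \leq \sum_{t \geq 1} \frac{\delta^t (t+1)^q}{t!}.
\end{equation*}
It therefore suffices to show the right-hand side is at most $C \cdot \delta \cdot (q+1)^{q+1}$ for some absolute constant $C$.

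I would split the sum at $t = q+1$. For the \emph{head} $1 \leq t \leq q+1$, I would bound the polynomial factor uniformly by
$(t+1)^q \leq (q+2)^q = (q+1)^q \cdot (1 + 1/(q+1))^q \leq e\,(q+1)^q$,
and use $\sum_{t \geq 1} \delta^t/t! = e^\delta - 1 \leq e\delta$ (valid because $\delta < 1$); together these give a head contribution of at most $e^2 \delta (q+1)^q \leq e^2 \delta (q+1)^{q+1}$. For the \emph{tail} $t \geq q+2$, I would apply Stirling $t! \geq (t/e)^t$ and observe that $q \leq t-1$ in this range, so $(t+1)^q \leq (t+1)^{t-1}$ and hence $(t+1)^q / t^t \leq (1 + 1/t)^{t-1}/t \leq e/t$. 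This reduces each tail summand to at most $e (e\delta)^t / t$, and the hypothesis $\delta < 1/(2e^2)$ makes $e\delta < 1/(2e)$, forcing geometric decay. The tail then sums to $O(\delta)$ (in fact exponentially small in $q$), which is trivially bounded by $O(\delta (q+1)^{q+1})$. Adding the head and tail gives the claim with $C$ an absolute constant.

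The one step that requires care is the head regime: a careless estimate such as $(t+1)^q \leq (2t)^q$ for $t$ comparable to $q$ would lose a spurious factor of $2^q$, turning the final bound into $2^q (q+1)^{q+1}$ and destroying the application in \eqref{eqn:binomial-goal}. The computation $(q+2)^q \leq e(q+1)^q$ is precisely what keeps the polynomial base at $q+1$ throughout the head, so the announced $(q+1)^{q+1}$ emerges cleanly; once this split is set up, everything else is routine arithmetic.
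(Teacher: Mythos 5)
Your proposal is correct and follows essentially the same route as the paper. Both arguments (i) bound $((X+1)^{p-1}-1)^\ell$ pointwise by $(X+1)^q$ with $q=\ell(p-1)$, (ii) replace the binomial probabilities by a Poisson-type estimate (the paper uses $\binom{k-1}{i}(\delta/k)^i \le (e\delta/i)^i$, you use $\delta^t/t!$; these agree up to the Stirling factor), and (iii) argue that the resulting series $\sum_t (t+1)^q\cdot\delta^t/t!$ is dominated by its terms with $t\lesssim q$, each of size $O(\delta(q+1)^q)$, giving the stated $O(\delta(q+1)^{q+1})$. The only difference is in the bookkeeping: the paper locates the peak by controlling the ratio $a_i/a_{i-1}$ and then sums $\sim q+1$ terms of size $\max_i a_i$, whereas you split at $t=q+1$ and use $\sum_t\delta^t/t!\le e\delta$ to compress the head, plus a crude Stirling estimate for the geometric tail. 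Both are sound; your head estimate is in fact slightly sharper (it yields $(q+1)^q$, one power below what is needed), but the final bound is the same.
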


Next, we note that although the random variables $X_i$'s are dependent, they can be verified to be \emph{negatively associated}, a notion introduced by Joag-Dev and Proschan~\cite{jp-narva-83}. This theory allows us to conclude the following bound on the moments.

\begin{lemma}
\label{lemma:negative-asso}
Let $\widetilde{X}_1,\dots,\widetilde{X}_d$ be $d$ random variables, each drawn \emph{independently} from $\Bin(k - 1, \delta/k)$. Then, for every integer $t \geq 1$ we have
$$
    \EX \left[\left(\textstyle{\sum}_{i=1}^d ((X_i + 1)^{p-1} - 1)\right)^t\right]
    \leq
    \EX \left[\left(\textstyle{\sum}_{i=1}^d ((\widetilde{X}_i + 1)^{p-1} - 1)\right)^t\right] \enspace.
$$
\end{lemma}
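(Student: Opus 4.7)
The plan is to first verify that the random variables $X_1,\dots,X_d$ are negatively associated (NA) in the sense of Joag-Dev and Proschan~\cite{jp-narva-83}, and then to use the definition of NA, together with nonnegativity and monotonicity of $y \mapsto (y+1)^{p-1}-1$, to dominate each term of the multinomial expansion of the $t$-th moment by the corresponding term in the independent case.

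\textbf{Step 1 (structural decomposition).} Recall we have conditioned on $S_{j^*} = \{1,\dots,d\}$. For each $j' \in [k] \setminus \{j^*\}$ and each row $i \in [d]$, let $Z_{j',i} \defeq \mathbf{1}[i \in S_{j'}]$. Then by construction
\[
 X_i \;=\; \sum_{j' \in [k]\setminus\{j^*\}} Z_{j',i}\,,
\]
and the marginal distribution of $X_i$ is $\Bin(k-1, d/m) = \Bin(k-1,\delta/k)$, matching the distribution of $\widetilde X_i$.

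\textbf{Step 2 (NA of the indicators).} Since $S_{j'}$ is a uniform random $d$-subset of $[m]$, the length-$m$ vector $(Z_{j',1},\dots,Z_{j',m})$ is a $0/1$ vector with a fixed sum $d$ obtained by sampling without replacement. By the classical result of Joag-Dev and Proschan~\cite{jp-narva-83}, such a vector is NA; in particular, its restriction $(Z_{j',1},\dots,Z_{j',d})$ to the first $d$ coordinates is NA. Because the columns $S_{j'}$ are generated independently across different $j' \in [k]\setminus\{j^*\}$, and independent NA families combined together are still NA, the full array $\{Z_{j',i}\}_{j' \neq j^*,\; i \in [d]}$ is NA.

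\textbf{Step 3 (NA of the $X_i$).} For each $i \in [d]$, $X_i$ is a nondecreasing function of the ``column'' $\{Z_{j',i}\}_{j' \neq j^*}$, and these columns are pairwise disjoint as subsets of the index set of the $Z$-array. Invoking the standard closure property of NA --- that nondecreasing functions of disjoint subsets of NA variables are themselves NA --- we conclude that $X_1,\dots,X_d$ are NA.

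\textbf{Step 4 (moment comparison).} Let $Y_i \defeq (X_i+1)^{p-1}-1 \geq 0$ and $\widetilde Y_i \defeq (\widetilde X_i + 1)^{p-1}-1$. Expand by the multinomial theorem:
\[
 \EX\Bigl[\bigl(\textstyle\sum_{i=1}^d Y_i\bigr)^t\Bigr] \;=\; \sum_{c_1+\dots+c_d=t} \binom{t}{c_1,\dots,c_d} \,\EX\Bigl[\textstyle\prod_{i=1}^d Y_i^{c_i}\Bigr]\,.
\]
For each fixed nonnegative tuple $(c_1,\dots,c_d)$, the maps $y \mapsto y^{c_i}$ composed with $x \mapsto ((x+1)^{p-1}-1)^{c_i}$ are nondecreasing and nonnegative on the range of $X_i$. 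Applying the defining inequality of NA iteratively to the (disjointly supported) nondecreasing nonnegative functions $Y_i^{c_i}$ of the $X_i$, we get
\[
 \EX\Bigl[\textstyle\prod_{i=1}^d Y_i^{c_i}\Bigr] \;\leq\; \prod_{i=1}^d \EX\bigl[Y_i^{c_i}\bigr] \;=\; \prod_{i=1}^d \EX\bigl[\widetilde Y_i^{c_i}\bigr] \;=\; \EX\Bigl[\textstyle\prod_{i=1}^d \widetilde Y_i^{c_i}\Bigr]\,,
\]
using identical marginals in the middle and independence of the $\widetilde X_i$ at the end. Summing this inequality over the multinomial expansion yields the claim.

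\textbf{Anticipated obstacle.} Steps 1 and 4 are conceptually straightforward; the only delicate point is Step 2 together with the disjoint-subset closure used in Step 3. One needs to be careful that the NA property for sampling without replacement, once restricted to a subset of coordinates, still yields NA, and that combining \emph{independent} NA families preserves NA --- both are standard properties from~\cite{jp-narva-83}, and I would simply cite them rather than reprove them. The multinomial manipulation in Step 4 is routine once NA has been established.
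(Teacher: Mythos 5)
Your proposal is correct and follows essentially the same route as the paper: express each $X_i$ as a sum of column-indicator variables, establish negative association of the indicators via the sampling-without-replacement/permutation distribution argument of Joag-Dev and Proschan, use the disjoint-subset closure property to pass NA to the $X_i$'s (and hence to the $Y_i = (X_i+1)^{p-1}-1$), and finish with the multinomial expansion plus the NA product inequality. The only cosmetic difference is that the paper invokes Theorem~2.8 of~\cite{jp-narva-83} (uniform on $\{0,1\}^m$ conditioned on the sum, via the P\'olya-frequency-function argument) whereas you cite the sampling-without-replacement formulation; these are equivalent justifications for the same NA claim.
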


Now, using the moments of random variables $Y_i = (X_i+1)^{p-1}-1$ from \lemmaref{lemma:p-more-than-2:single-moment}, as well as \lemmaref{lemma:negative-asso}, we can compute the tail bound of the sum $\sum_{i=1}^d Y_i$. Our proof of the following Lemma uses the result of Lata{\l}a~\cite{l-emsir-97}.

\begin{lemma}\label{lemma:p-more-than-2:together-moment}
There exists constants $C \geq 1$ such that, whenever $\delta \leq \veps/p^{C p}$ and
$d \geq p^{Cp} / \eps$, we have
$$ \Pr \left[ \textstyle{\sum}_{i=1}^d ((X_i+1)^{p-1} - 1) > \veps d \right] \leq e^{-\Omega(\frac{(\veps d)^{1/(p-1)}}{p})} \enspace.$$
\end{lemma}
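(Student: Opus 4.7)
The strategy is to combine the three tools assembled so far: de-correlate via negative association, bound the moments of the sum via Lata{\l}a's inequality for sums of i.i.d.~nonnegative variables, and convert a well-chosen moment bound to a tail bound by Markov. Concretely, let $Y_i \defeq (X_i+1)^{p-1}-1$ and let $\widetilde{Y}_i \defeq (\widetilde{X}_i+1)^{p-1}-1$ be the analogous quantities built from i.i.d.\ copies $\widetilde{X}_i \sim \Bin(k-1,\delta/k)$. By \lemmaref{lemma:negative-asso} we have $\EX\bigl[\bigl(\sum_i Y_i\bigr)^t\bigr] \leq \EX\bigl[\bigl(\sum_i \widetilde{Y}_i\bigr)^t\bigr]$ for every positive integer $t$, so it suffices to bound the moments of the independent sum.

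For the independent sum, I would apply the Lata{\l}a moment inequality~\cite{l-emsir-97}, which states that for nonnegative i.i.d.\ $\widetilde{Y}_i$ and $t \geq 2$,
\[
    \Bigl\|\textstyle{\sum}_{i=1}^d \widetilde{Y}_i\Bigr\|_t \;\leq\; C_0 \sup_{s\,\in\,[\max(2,t/d),\,t]} \frac{t}{s}\Bigl(\frac{d}{t}\Bigr)^{1/s} \|\widetilde{Y}_i\|_s \enspace.
\]
\lemmaref{lemma:p-more-than-2:single-moment} gives $\|\widetilde{Y}_i\|_s \leq C^{1/s}\delta^{1/s}(s(p-1)+1)^{(p-1)+1/s}$, and since $(s(p-1)+1)^{1/s} = O(p)$ for $s\geq 1$, this simplifies to $\|\widetilde{Y}_i\|_s \leq O(p)\cdot (sp)^{p-1}\cdot \delta^{1/s}$. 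Substituting, the supremum is dominated (up to a constant depending on $p$) by the $s=t$ endpoint, yielding $\bigl\|\sum_i \widetilde{Y}_i\bigr\|_t \leq O(p^p)\cdot t^{p-1}\cdot (d\delta/t)^{1/t}$.

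Next I would convert this to a tail bound by Markov. Writing $\Pr[\sum_i Y_i > \veps d] \leq (\veps d)^{-t}\EX[(\sum_i Y_i)^t]$ and optimizing, the natural choice is
\[
    t \;\asymp\; \frac{(\veps d)^{1/(p-1)}}{p^{p/(p-1)}} \;\geq\; \Omega\!\Bigl(\frac{(\veps d)^{1/(p-1)}}{p}\Bigr) \enspace,
\]
so that $p^p t^{p-1} \leq \frac{1}{2C_0} \veps d$, giving a per-moment ratio of at most $1/2$ and hence $\Pr[\cdot]\leq 2^{-t} = e^{-\Omega((\veps d)^{1/(p-1)}/p)}$ as desired. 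The two hypotheses ``$\delta \leq \veps/p^{Cp}$'' and ``$d \geq p^{Cp}/\veps$'' are exactly what is needed so that this choice of $t$ is at least $2$, so that the factor $(d\delta/t)^{1/t} = O(1)$, and so that the endpoint $s=t$ actually dominates the Lata{\l}a sup (one should also verify that the $s=\max(2,t/d)$ endpoint is no larger, which is routine given the explicit form of $\|\widetilde{Y}_i\|_s$).

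The main obstacle is really the heavy-tailed character of $Y_i$: its tail decays only like $\exp(-y^{1/(p-1)})$, so the usual moment-generating-function approach (Chernoff, Bernstein) blows up and one is forced to work moment by moment. Lata{\l}a's inequality is precisely the right black box here, but it must be fed the correct scaling of $\|Y_i\|_s$ --- a factor of $p$ mis-handled inside the bracket $(s(p-1)+1)^{(p-1)+1/s}$ would spoil the final exponent. The second subtlety is verifying that the supremum in Lata{\l}a is realized (up to constants) at $s=t$ rather than at the small-$s$ end; this amounts to checking that the function $s \mapsto s^{p-2}(d\delta/t)^{1/s}$ is increasing in $s$ throughout the allowed range, which holds under the stated smallness of $\delta$ and largeness of $d$.
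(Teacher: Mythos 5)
Your proposal is correct and follows the same route as the paper: negative association (\lemmaref{lemma:negative-asso}) to pass to the i.i.d.\ case, Lata{\l}a's moment inequality fed with the single-variable moment bound of \lemmaref{lemma:p-more-than-2:single-moment}, and Markov's inequality at $t \asymp (\veps d)^{1/(p-1)}/p$. The only difference is cosmetic: you invoke the sup-form corollary of Lata{\l}a's result, while the paper works directly with the inf-form (\theoremref{thm:latala}) and expands $\EX_{Y\sim\mathcal{D}}\big[(1+Y/u)^t\big]$ by the binomial theorem --- the two are equivalent and yield the same moment bound $\lesssim p^{O(p)}t^{p-1}$; your aside that $s\mapsto s^{p-2}(d\delta/t)^{1/s}$ is monotone increasing is not quite right (it can be U-shaped when $d\delta>t$), but the sup over $[2,t]$ is still attained at an endpoint and the comparison $g(2)\lesssim g(t)$ does go through under the stated hypotheses, so the conclusion stands.
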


\noindent
Finally, we are ready to prove \theoremref{rip_larger_2}.

\begin{proof}[Proof of \theoremref{rip_larger_2}]
We can choose $d = \Theta(p)^{p-1} \cdot \frac{k^{p-1}}{\eps} \cdot \log^{p-1} n$ so that $e^{-\Omega(\frac{(\veps d)^{1/(p-1)}}{p})} < \frac{1}{100} \frac{1}{k\binom{n}{k}}$. Since our choice of $m = \frac{dk p^{\Theta(p)}}{\veps}$ ensures that $\delta = dk/m \leq \veps/p^{Cp}$, and our choice of $d$ ensures $d \geq p^{Cp} / \eps$, we can apply \mbox{\lemmaref{lemma:p-more-than-2:together-moment}}  and conclude that with probability at least $1-\frac{1}{100} \frac{1}{k\binom{n}{k}}$ one has
\begin{equation*}
  \textstyle{\sum}_{i \in S_{j^*}} \big|\set{j' \in [k] \mid i \in S_{j'}}\big|^{p-1} = \textstyle{\sum}_{i=1}^d (X_i+1)^{p-1} \leq (1 + \eps) d \enspace.
\end{equation*}

Therefore, by applying the union bound
over all $j^* \in [k]$, we conclude that with probability at least $1 - \frac{1}{100} \frac{1}{\binom{n}{k}}$, the
desired inequality \equationref{eqn:binomial-goal} is satisfied for all $j\in [k]$.

Recall that, owing to \equationref{eqn:p-more-than-2:upper-tail}, the inequality \equationref{eqn:binomial-goal} implies that $\|Ax\|_p^p \leq (1+\veps)\|x\|_p^p$ for every $x\in \mathbb{R}^n$ that is supported on the \emph{first} $k$ coordinates. By another union bound over the choices of all possible $\binom{n}{k}$ subsets of $[n]$, we conclude that with probability at least $0.99$, we have $\|Ax\|_p^p \leq (1+\veps)\|x\|_p^p$ for all $k$-sparse vectors
$x$.

On the other hand, since our choice of $d$ and $m$ satisfies the assumptions $d \geq \Omega(k\log n/\veps)$ and $m\geq 2dk/\veps$ in \claimref{claim:incoherent}, the lower tail $\|Ax\|_p^p \geq (1-\veps)\|x\|_p^p$ also holds with probability at least $0.99$.  Overall we conclude that with probability at least $0.98$, we have $\|Ax\|_p^p \in (1\pm \veps)\|x\|_p^p$ for every $k$-sparse vector $x \in \mathbb{R}^n$.
\end{proof}

\subsubsection{Missing Proofs}
\label{app:positive-p-larger-2}

\begin{replemma}{lemma:p-more-than-2:single-moment}
There exists some constant $C \geq 1$ such that, if $X$ is drawn from the binomial distribution $\Bin(k-1,\delta/k)$ for some $\delta < 1/(2e^{2})$, and $p \geq 2$, then for any real $\ell \geq 1$,
$$ \EX[((X+1)^{p-1}-1)^\ell] \leq  C \cdot \delta (\ell(p-1)+1)^{\ell(p-1)+1} \enspace. $$
\end{replemma}
\begin{proof}
We first expand the expectation using the definition of $\Bin(k-1,\delta/k)$.
\begin{align*}
\EX[((X+1)^{p-1}-1)^\ell]
&= \sum_{i=0}^{k-1} ((i+1)^{p-1}-1)^\ell \binom{k-1}{i} \left(1-\frac{\delta}{k}\right)^{k-1-i} \left( \frac{\delta}{k} \right)^i \\
&\leq \sum_{i=1}^{k-1} ((i+1)^{p-1}-1)^\ell \Big(\frac{e(k-1)}{i}\Big)^i \left( \frac{\delta}{k} \right)^i \\
&\leq \sum_{i=1}^{\infty} (i+1)^{\ell(p-1)} \left( \frac{e \delta}{i} \right)^i \enspace.
\end{align*}
Let us denote by $a_i \defeq (i+1)^{\ell(p-1)} \big( \frac{e \delta}{i} \big)^i$ the $i$-th term of the above infinite sum.
We have
$$
    \frac{a_i}{a_{i-1}} = \left(1+\frac{1}{i}\right)^{\ell(p-1)}\frac{e \delta}{i}\left(1-\frac{1}{i}\right)^{i-1}
    \leq \delta \cdot e^{\ell(p-1)/i+1} \enspace.
$$
Since $\delta < 1/(2e^{2})$, we have $a_i / a_{i-1} < 1/2$ for every $i \geq \max\{\ell(p-1),2\}$. Therefore, the largest $\max_{i\geq 1} a_i$ is obtained when $i=i^*<\max\{\ell(p-1),2\}$, which implies $1 \leq i^* \leq \ell(p-1)$ because $\ell(p-1)\geq 1$ (here we crucially use that $p \geq 2$ and $\ell \geq 1$).
Therefore,
$$
    \max_{i\geq 1} a_i \leq \Big(\frac{e\delta}{i^*}\Big)^{i^*} \cdot (\ell(p-1)+1)^{\ell(p-1)} \leq e \delta \cdot (\ell(p-1)+1)^{\ell(p-1)}  \enspace,
$$
and the second inequality is because $e\delta < 1$.
Overall,
$$
    \EX[((X+1)^{p-1}-1)^{\ell}] \leq \sum_{i=1}^{\infty} a_i \leq \left(\ell(p-1)+\sum_{j=1}^{\infty}2^{-j}\right) \cdot \max_i a_i
    \leq
    O(\delta) \cdot (\ell(p-1)+1)^{\ell(p-1)+1} \enspace. \qedhere
$$
\end{proof}

\begin{replemma}{lemma:negative-asso}
Letting $\widetilde{X}_1,\dots,\widetilde{X}_d$ be $d$ random variables, each drawn \emph{independently} from $\Bin(k - 1, \delta/k)$. Then, for every integer $t \geq 1$ we have
$$
    \EX \left[\left(\sum_{i=1}^d ((X_i + 1)^{p-1} - 1)\right)^t\right]
    \leq
    \EX \left[\left(\sum_{i=1}^d ((\widetilde{X}_i + 1)^{p-1} - 1)\right)^t\right] \enspace.
$$
\end{replemma}
\begin{proof}
This lemma follows from the theory of \emph{negatively associated} random variables \cite{jp-narva-83} (see also~\cite{e-ippfe-65}).

For every $i\in [m]$ and $j\in [k]$, let the random variable $Z_{ij} = 1$ if $i \in S_j$ and $0$ otherwise. The random variables across columns are independent: that is, $\{Z_{1,j},\dots,Z_{m,j}\}$ and $\{Z_{1,j'},\dots,Z_{m,j'}\}$ are independent if $j \neq j'$.

However, within a single column, $Z_{1j},\dots,Z_{mj}$ are not independent. In fact, this distribution
can be seen as the uniform distribution over $\set{0, 1}^m$ conditioned on $Z_{1j} + \ldots + Z_{mj} = d$. Thus, applying \cite[Theorem 2.8]{jp-narva-83}, we have that $Z_{1j},Z_{2j},\dots,Z_{mj}$ are negatively associated (this uses the fact that the Bernoulli distribution is a \emph{P\'{o}lya frequency function of order two}).
Now, combining the $k$ independent columns, we get
that the variables $\{Z_{ij}\}_{i\in[m],j\in[k]}$ are negatively associated altogether.

Next, we want to show that the variables $\{X_1,\dots,X_d\}$ are also negatively associated. By definition, we have $X_i = \sum_{j \neq j^*} Z_{i,j}$. Therefore, $X_1,\dots,X_d$ is a sequence of random variables, each being a partial sum of $\{Z_{ij}\}_{i\in[m],j\in[k]}$, and different $X_i$'s cover disjoint subsets of $\{Z_{ij}\}_{i\in[m],j\in[k]}$. Applying \cite[Property $P_6$]{jp-narva-83}, we have that the variables $\{X_1,\dots,X_d\}$ are negatively associated.

Finally, since the function $f(x) = (x+1)^{p-1} - 1$ is non-decreasing for $p\geq 1$, we apply \cite[Property $P_6$]{jp-narva-83} and conclude that the variables $\{ (X_i+1)^{p-1} - 1 \}_{i\in [d]}$ are also negatively associated. Letting $Y_i \defeq (X_i+1)^{p-1} - 1$, then \cite[Property $P_2$]{jp-narva-83} gives that $\EX[\prod_{i\in S} Y_i^{r_i}] \leq \prod_{i\in S} \EX[Y_i^{r_i}]$ for any subset $S\subseteq [d]$ and any sequence of powers $r_1,\dots,r_d\in \mathbb{Z}_{\geq 0}$.

As a result, we conclude that, letting $\widetilde{Y}_i \defeq (\widetilde{X}_i+1)^{p-1}-1$,
\begin{align*}
\EX \Big[ \big(\sum_{i=1}^d Y_i \big)^t\Big]
&= \EX\Big[ \sum_{\substack{r_1,\dots,r_d \in \{0,1,\dots,d\} \\ r_1 + \cdots + r_d = t}} \binom{t}{r_1,\dots,r_d} Y_1^{r_1} Y_2^{r_2} \cdots Y_d^{r_d} \Big] \\
&\leq
\sum_{r_1 + \cdots + r_d = t} \binom{t}{r_1,\dots,r_d} \EX[Y_1^{r_1}] \EX[Y_2^{r_2}] \cdots \EX[Y_d^{r_d}] \\
&=
\sum_{r_1 + \cdots + r_d = t} \binom{t}{r_1,\dots,r_d} \EX[\widetilde{Y}_1^{r_1}] \EX[\widetilde{Y}_2^{r_2}] \cdots \EX[\widetilde{Y}_d^{r_d}]
=  \EX \left[\left(\sum_{i=1}^d \widetilde{Y}_i\right)^t\right] \enspace. \tag*{\qedhere}
\end{align*}
\end{proof}

To prove \lemmaref{lemma:p-more-than-2:together-moment}, we need the following theorem of Lata{\l}a on the moments of sums of i.i.d. non-negative random variables.

\begin{theorem}[\cite{l-emsir-97}, Theorem 1]
\label{thm:latala}
Let $Y_1,\dots,Y_d$ be a sequence of independent non-negative random variables from distribution $\mathcal{D}$ and $t\geq 1$. Then
$$ \EX[(Y_1+\dots+Y_d)^t]^{1/t} \leq e \cdot \inf \bigg\{u>0 \,:\, \EX_{Y\sim \mathcal{D}}\Big[\left(1+\frac{Y}{u}\right)^t\Big] \leq e^{t/d} \bigg\} \enspace.$$
\end{theorem}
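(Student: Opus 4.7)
The plan is to exploit the independence of the $Y_i$'s together with the elementary inequality
$\prod_{i=1}^d (1+a_i) \geq 1 + \sum_{i=1}^d a_i$,
valid for any non-negative reals $a_1,\dots,a_d$. Raised to the $t$-th power this yields a pointwise upper bound on $\bigl(\sum_i Y_i\bigr)^t$ in terms of a product that factors nicely under independence, and the specific threshold $e^{t/d}$ appearing in the definition of $u$ is tailored so that after the factoring one obtains exactly the constant $e^t$.

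More concretely, let $u^* \defeq \inf\bigl\{u > 0 : \EX_{Y \sim \mathcal{D}}[(1+Y/u)^t] \leq e^{t/d}\bigr\}$. The map $u \mapsto \EX[(1+Y/u)^t]$ is continuous and strictly decreasing from $+\infty$ down to $1$ on $(0,\infty)$ whenever $\EX[Y^t] < \infty$, so the infimum is attained and $\EX[(1+Y/u^*)^t] = e^{t/d}$. Then by independence of $Y_1,\dots,Y_d$,
$$\EX\Big[\prod_{i=1}^d (1+Y_i/u^*)^t\Big] = \Big(\EX[(1+Y/u^*)^t]\Big)^d = \big(e^{t/d}\big)^d = e^t.$$

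Next, because each $Y_i \geq 0$, the product-sum inequality gives pointwise $\prod_{i=1}^d (1+Y_i/u^*) \geq 1 + \sum_{i=1}^d Y_i/u^* \geq \sum_{i=1}^d Y_i/u^*$. Raising to the $t$-th power and taking expectations yields
$$\EX\Big[\Big(\sum_{i=1}^d Y_i\Big)^t\Big] \leq (u^*)^t \cdot \EX\Big[\prod_{i=1}^d (1+Y_i/u^*)^t\Big] \leq (e u^*)^t,$$
and taking $t$-th roots gives exactly the advertised bound.

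The argument above is short once one sees the product-sum inequality; the real insight---and I expect this to be the main conceptual hurdle if approaching the problem cold---is recognising that the functional $u \mapsto \EX[(1+Y/u)^t]$ with threshold $e^{t/d}$ is the right quantity to infimise, precisely because this is what makes the factorisation under independence and the final $e$-factor in the conclusion agree. The degenerate cases ($Y \equiv 0$, giving $u^* = 0$ and both sides zero; or $\EX[Y^t] = \infty$, which makes the bound vacuous) are minor boundary issues that do not affect the main line of reasoning.
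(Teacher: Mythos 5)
The paper does not prove this statement; it cites it directly from Lata{\l}a~\cite{l-emsir-97} and uses it as a black box in the proof of \lemmaref{lemma:p-more-than-2:together-moment}. So there is no in-paper proof to compare against; what you have done is reconstruct a proof of a cited external result.

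Your argument is correct, and it is essentially the standard proof of the upper-bound half of Lata{\l}a's Theorem~1. The chain of steps is sound: the infimum $u^*$ is attained with $\EX[(1+Y/u^*)^t]=e^{t/d}$ by continuity and monotonicity of $u\mapsto\EX[(1+Y/u)^t]$ (provided $\EX[Y^t]<\infty$ and $\Pr[Y>0]>0$, the degenerate cases being handled as you note); independence gives $\EX\bigl[\prod_i(1+Y_i/u^*)^t\bigr]=\bigl(\EX[(1+Y/u^*)^t]\bigr)^d=e^t$; the Weierstrass product inequality $\prod_i(1+a_i)\ge 1+\sum_i a_i\ge\sum_i a_i$ for $a_i\ge 0$ gives the pointwise domination; raising to the $t$-th power (legal since $t\ge1$ and both sides are non-negative) and taking expectations yields $\EX\bigl[(\sum_i Y_i)^t\bigr]\le(u^*)^t e^t$. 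One small point worth flagging: the cited statement in the paper requires only independence and a common distribution $\mathcal{D}$, and your factorization $\EX\bigl[\prod_i(1+Y_i/u^*)^t\bigr]=\prod_i\EX[(1+Y_i/u^*)^t]$ uses both; you implicitly use the identical-distribution assumption to collapse the product to a $d$-th power, which is fine since the theorem as stated assumes it. Note also that your argument only establishes the upper bound quoted in the paper; Lata{\l}a's Theorem~1 is two-sided, with the lower bound $\|\,\sum_i Y_i\,\|_t\ge u^*/(2e)$ requiring a separate, harder argument, but the paper only invokes the upper direction, so your proof covers exactly what is needed.
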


\begin{replemma}{lemma:p-more-than-2:together-moment}
There exists a constant $C \geq 1$ such that, whenever $\delta \leq \veps/p^{C p}$
and $d \geq p^{Cp} / \eps$, we have
$$ \Pr \left[ \textstyle{\sum}_{i=1}^d ((X_i+1)^{p-1} - 1) > \veps d \right] \leq e^{-\Omega(\frac{(\veps d)^{1/(p-1)}}{p})} \enspace.$$
\end{replemma}
\begin{proof}
Denote by $Y$ the random variable whose value $Y = (X+1)^{p-1}-1$, where $X$ is drawn from the binomial distribution $\Bin(k-1,\delta/k)$, and $\mathcal{D}$ the distribution for $Y$.
We wish to apply \theoremref{thm:latala} for the case of $d$ independent samples from $\mathcal{D}$, and let us compute the value of $u$ from the statement of \theoremref{thm:latala} as follows. For every integer $t\geq 1$,
\begin{align*}
\EX_{Y\sim \mathcal{D}}\Big[\big(1+\frac{Y}{u}\big)^t\Big]
= 1 + \sum_{\ell=1}^t \binom{t}{\ell} \frac{\EX[Y^\ell]}{u^\ell}
\leq 1 + \sum_{\ell=1}^t \big(\frac{et}{\ell}\big)^\ell \frac{\EX[Y^\ell]}{u^\ell} \enspace.
\end{align*}
This sum, owing to \lemmaref{lemma:p-more-than-2:single-moment}, can be upper bounded as
\begin{align*}
\sum_{\ell=1}^{t} \big(\frac{et}{u\ell}\big)^\ell \EX[Y^\ell]
&\leq O(\delta) \cdot \sum_{\ell=1}^{t} \big(\frac{et}{u\ell}\big)^\ell \cdot (\ell(p-1)+1)^{\ell(p-1)+1}
\leq O(\delta) \cdot \sum_{\ell=1}^{t} \big(\frac{et}{u\ell}\big)^\ell \cdot (\ell p)^{\ell(p-1)+1} \\
&\leq O(\delta p) \cdot \sum_{\ell=1}^{t} \ell \big(\frac{et}{u\ell}\big)^\ell \cdot (\ell p)^{\ell(p-1)}
= O(\delta p) \cdot \sum_{\ell=1}^{t} \ell \big(\frac{ep^{p-1} \cdot t \ell^{p-2} }{u}\big)^\ell \\
&= O(\frac{\delta p^p t}{u}) \cdot \sum_{\ell=1}^{t} \ell^{p-1} \big(\frac{ep^{p-1} \cdot t \ell^{p-2} }{u}\big)^{\ell-1}
\leq O(\frac{\delta p^p t}{u}) \cdot \sum_{\ell=1}^{t} \ell^{p-1} \big(\frac{ep^{p-1} \cdot t^{p-1} }{u}\big)^{\ell-1} \enspace.
\end{align*}
Above, the last inequality has used the fact that $p\geq 2$. Now, by choosing $u \geq 2ep^{p-1} \cdot t^{p-1}$ we have that
\begin{align}
\sum_{\ell=1}^{t} \big(\frac{et}{u\ell}\big)^\ell \EX[Y^\ell]
\leq O(\frac{\delta p^p t}{u}) \cdot \sum_{\ell=1}^{t} \ell^{p-1} \frac{1}{2^{\ell-1}} \enspace.
\label{eqn:p-more-than-2:latala-sum}
\end{align}
Since
$$
    \sum_{\ell=1}^{\infty}\ell^{p-1} \frac{1}{2^{\ell-1}}\leq 2 \cdot \int_{0}^{\infty} x^{p-1} \cdot 2^{-x} \; dx
    \leq 2^{O(p)} \cdot \Gamma(p) \leq p^{O(p)} \enspace,
$$
we conclude that the right hand side of \equationref{eqn:p-more-than-2:latala-sum} is upper bounded by $O(\frac{\delta p^{O(p)} t}{u})$. In sum, we conclude that when $u = 2ep^{p-1} \cdot t^{p-1}$ and $t^{p-1} \geq \delta p^{\Omega(p)} d$, we have
\begin{align*}
\EX_{Y\sim \mathcal{D}}\Big[\big(1+\frac{Y}{u}\big)^t\Big]
\leq 1 + O(\frac{\delta p^{O(p)} t}{u}) \leq 1 + \frac{t}{d} < e^{t/d} \enspace.
\end{align*}
Invoking \theoremref{thm:latala} for this choice of $u = 2ep^{p-1} \cdot t^{p-1}$ and for any integer $t\geq 1$ satisfying $t^{p-1} \geq \delta p^{\Omega(p)} d$, we have
$$\EX \left[ \left(\sum_{i=1}^d ((\widetilde{X}_i+1)^{p-1} - 1) \right)^t \right]^{1/t} \leq 2e^2p^{p-1} \cdot t^{p-1} \enspace,$$
where each $\widetilde{X}_i$ is an i.i.d. sample from $\Bin(k-1,\delta/k)$.  Invoking \lemmaref{lemma:negative-asso}, we obtain the same moment bound on $X_1,\dots,X_d$.
$$\EX \left[ \left(\sum_{i=1}^d ((X_i+1)^{p-1} - 1) \right)^t \right]^{1/t} \leq 2e^2p^{p-1} \cdot t^{p-1} \enspace.$$

Using Markov's inequality, we have for any integer $t\geq 1$ satisfying $t^{p-1} \geq \delta p^{\Omega(p)} d$,
$$\Pr \left[ \sum_{i=1}^d ((X_i+1)^{p-1} - 1) > \veps d \right] \leq \frac{1}{(\veps d)^t} \EX \left[ \left(\sum_{i=1}^d ((X_i+1)^{p-1} - 1)\right)^t \right] \leq \left(\frac{2e^2p^{p-1} \cdot t^{p-1}}{\veps d} \right)^t \enspace.$$

By the assumption $d \geq p^{C p} / \veps$, so let us choose $t$ to be the largest positive integer such that
$\frac{2e^2p^{p-1} \cdot t^{p-1}}{\veps d} \leq \frac{1}{2}$.
That is, $t = \Theta(\frac{(\veps d)^{1/(p-1)}}{p})$.
Since $\delta < \eps / p^{Cp}$, we have $t^{p-1} \geq \delta p^{\Omega(p)} d$. Thus,
$$ \Pr \left[ \sum_{i=1}^d ((X_i+1)^{p-1} - 1) > \veps d \right] \leq 2^{-\Omega(t)}  \leq e^{-\Omega(\frac{(\veps d)^{1/(p-1)}}{p})} \enspace.\qedhere$$
\end{proof}

\section{RIP Construction for $1<p<2$}
In this section, we construct $(k,1+\veps)$-RIP-$p$ matrices for $1 < p < 2$ by proving the following theorem.

We assume that $1 + \tau \leq p \leq 2 - \tau$ for some $\tau > 0$, and whenever we write $O_{\tau}(\cdot)$, we assume that some factor that depends on $\tau$ is hidden. (For instance, factors of $p/(1-p)$ may be hidden.)

\begin{theorem}
\label{rip_upper_12}
For every $n \in \mathbb{Z}_+$, $k\in [n]$, $0 < \eps < 1/2$ and $1 + \tau  \leq p \leq 2 - \tau$,
there exist $m,d\in\mathbb{Z}_+$ with
$$m = O_{\tau}\left(k^p \frac{\log n}{\veps^2} + k^{4-2/p-p} \frac{\log n}{\veps^{2/(p-1)}} \right) \mbox{\quad and \quad} d = O_{\tau}\left(\frac{k^{p-1} \cdot \log n}{\eps} + \frac{k^{(p-1)/p}\cdot \log n}{\veps^{1/(p-1)}} \right)$$
such that, letting $A$ be a random binary $m \times n$ matrix of sparsity $d$, with probability at least $98\%$, $A$ satisfies
$(1-\veps)\|x\|_p^p \leq \|Ax\|_p^p \leq (1+\veps)\|x\|_p^p$
for all $k$-sparse vectors $x\in \mathbb{R}^n$.

Note that, when $k\geq \veps^{-\frac{p(2-p)}{(p-1)^3}}$, the above bounds on $m$ and $k$ can be simplified as
$$m = O_{\tau}\Big(\frac{k^p \cdot \log n}{\veps^2}\Big) \quad\text{and}\quad d=O_{\tau}\Big(\frac{k^{p-1} \cdot \log n}{\veps} \Big) \enspace.$$
\end{theorem}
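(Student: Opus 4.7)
The plan is to take $A$ to be the scaled adjacency matrix of a random bipartite graph $G=(U,V,E)$ with $|U|=n$, $|V|=m$, and each $u\in U$ having $d$ uniformly random neighbors. The argument proceeds in two phases. First, I would show that with probability at least $0.99$, $G$ is a good \emph{multi-scale expander}: for every $\ell\in\{1,\dots,k\}$, $G$ is an $(\ell,d,\delta_\ell)$-expander in the sense of \definitionref{def:bipartite-expander}, where $\delta_\ell$ interpolates between $\delta_k=\Theta(\veps)$ (matching the RIP-$1$ regime of Berinde \emph{et al.}~\cite{bgiks-cgcua-08}) and $\delta_2\approx\veps/k^{p-1}$ (matching the incoherence regime used for RIP-$2$). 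A natural choice is $\delta_\ell=\Theta(\veps)(\ell/k)^{p-1}$; the standard Chernoff-plus-union-bound analysis over all $\binom{n}{\ell}$ subsets at each scale $\ell$ then gives the desired expansion for our choices of $m$ and $d$. The two summands in the expressions for $m$ and $d$ correspond to satisfying the ``bulk'' expansion at $\ell\approx k$ and the ``incoherent'' expansion at $\ell=O(1)$ simultaneously; the simplified bound in the theorem is what results when $k$ is large enough that the former dominates.

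For the upper tail $\|Ax\|_p^p\leq(1+\veps)\|x\|_p^p$, I would start from the same decomposition as in \equationref{eqn:p-more-than-2:upper-tail}, namely $\|Ax\|_p^p\leq d^{-1}\sum_{j\in T}|x_j|^p\cdot\sum_{i\in S_j}r_i^{p-1}$, where $T$ is the support of $x$ and $r_i=|\{j'\in T:i\in S_{j'}\}|$. The moment approach of \sectionref{sec:positive-p-larger-2} yields tails that are too weak for $p<2$, so instead I would use the multi-scale expansion directly: partition the rows in $S_j$ by the dyadic value of $r_i$, and for each dyadic scale $r_i\approx 2^t$, use the $(\ell,d,\delta_\ell)$-expander property at the matching size $\ell\approx k/2^t$ to upper-bound the number of rows at that collision level. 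The scaling $\delta_\ell\propto(\ell/k)^{p-1}$ is engineered so that, when weighted by the $r_i^{p-1}$ penalty and summed across dyadic scales, the total yields $\sum_{i\in S_j}r_i^{p-1}\leq(1+\veps)d$ for every $j\in T$.

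The lower tail is the main obstacle. Since $A$ has nonnegative entries while $x$ may be signed, cancellations in $Ax$ must be controlled, and the crude incoherence route of \claimref{claim:incoherent} forces $m=\Omega(k^2/\veps^2)$, which violates the $m=\widetilde{O}(k^p)$ target. My plan is to decompose the support $T$ into dyadic level sets $T_s=\{j\in T:|x_j|\in[2^{-s-1}\|x\|_\infty,2^{-s}\|x\|_\infty]\}$ and treat each level via the appropriate expansion scale. For each $s$, the multi-scale expansion guarantees that a $(1-\delta_{|T_s|})$ fraction of the $d|T_s|$ edges leaving $T_s$ hit rows not touched by any other column of $T$, and those ``private'' rows contribute cleanly $\sum_{j\in T_s}|x_j|^p$ to $\|Ax\|_p^p$; the remaining ``collision'' rows can reduce the norm by at most $\delta_{|T_s|}d\cdot(\max_{j\in T_s}|x_j|)^p$, plus cross-level interaction terms from pairs $(T_s,T_{s'})$ with $s\neq s'$. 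Summing across levels, the scaling $\delta_\ell\propto(\ell/k)^{p-1}$ is precisely what keeps the total cancellation below $\veps\|x\|_p^p$ without losing a polynomial factor in $k$. The technical heart of the argument, and the main obstacle, will be verifying that this interpolation balances contributions across all dyadic levels cleanly, and that the second summands in the definitions of $m$ and $d$ arise exactly to cover the regime where only $O(1)$ dyadic levels are populated and the incoherent expansion becomes the binding constraint. A final union bound over the $\binom{n}{k}$ choices of $T$ completes the proof.
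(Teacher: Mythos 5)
Your overall strategy --- a multi-scale expander with $\delta_\ell\propto(\ell/k)^{p-1}$ combined with a dyadic decomposition of the support and of the collision levels --- is genuinely different from the paper's, which relies on a \emph{single}-scale $(2\ell,d,\delta)$-expander with $\ell=\Theta_\tau(k^{2-p})$ and a fixed $\delta$, and, crucially, on the deterministic inequality $\big||a+b|^p-|a|^p-|b|^p\big|\leq C|a||b|^{p-1}$ (Claim~\ref{claim:main_p_1_2_ineq}). That inequality gives two-sided control of $\big|\|Ax\|_p^p-\|x\|_p^p\big|$ in one shot, so the paper never needs to separate the upper tail from the lower tail; the whole problem reduces to bounding the cross terms $\sum_{i,j}|A_{ij}x_j|\cdot\big|\sum_{j'>j}A_{ij'}x_{j'}\big|^{p-1}$ block by block, via the primary/secondary/tertiary decomposition of entries and two applications of H\"older. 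You correctly identified that the $p\geq2$ moment method and the raw incoherence route both fail here, and your parameter counting for $m$ and $d$ at the extreme scales is consistent, but you missed the key structural simplification that makes the paper's proof manageable.

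The lower-tail step, which you yourself flag as the main obstacle, contains a genuine gap. You assert that $(|T_s|,d,\delta_{|T_s|})$-expansion at scale $|T_s|$ guarantees that a $(1-\delta_{|T_s|})$ fraction of the edges leaving $T_s$ hit rows \emph{not touched by any other column of $T$}. But expansion of $T_s$ controls collisions only \emph{within} $T_s$; it says nothing about collisions between $T_s$ and $T\setminus T_s$. To get truly private rows you would need to invoke expansion of all of $T$, which is governed by $\delta_k=\Theta(\veps)$ --- the \emph{largest} loss parameter in your hierarchy --- so the finer scales buy you nothing for exactly the cross-level cancellations you need to control, and it is far from clear that $\delta_\ell\propto(\ell/k)^{p-1}$ is the right scaling once interactions between a small heavy level $T_s$ and a large light level $T_{s'}$ are weighted by $|x|^p$. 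The upper-tail step has a similar issue: you map a collision level $r_i\approx2^t$ to an expansion scale $\ell\approx k/2^t$, but $(\ell,d,\delta_\ell)$-expansion bounds the total collision count inside any $\ell$-subset, not the number of rows hit by at least $2^t$ columns, so that correspondence is not established. The paper sidesteps all of this: Claim~\ref{claim:main_p_1_2_ineq} bounds the signed error directly, Lemma~\ref{piotr_modified} converts the expansion at the single scale $\ell=\Theta_\tau(k^{2-p})$ into an $\ell_1$-mass bound on the secondary/tertiary entries, and the separate, sharper treatment of the first block (Lemma~\ref{first_block}) --- not a ``low-population dyadic regime'' --- is what produces the second summands in the stated $m$ and $d$.
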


Our proof of the above theorem is based on the existence of $(\ell, d, \delta)$ bipartite expanders (recall the definition of such expanders from \definitionref{def:bipartite-expander}):
\begin{lemma}{\cite[Lemma 3.10]{bmrv-bo-02}}
    \label{expanders_exist}
    For every $\delta \in (0,\frac{1}{2})$, and $\ell \in [n]$, there exist $(\ell, d, \delta)$-expanders with
    $d = O\big(\frac{\log n}{\delta}\big)$ and $m = O(dl / \delta) =
    O\big(\frac{\ell \log n}{\delta^2}\big)$.
\end{lemma}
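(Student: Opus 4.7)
The plan is to apply the probabilistic method to the following random model: independently for each left vertex $u\in U$, sample $d$ right-neighbors uniformly from $V$ with replacement (repetitions are harmless because any slack in degree can be patched afterwards by adding arbitrary extra edges, which only enlarges neighborhoods and hence only helps expansion). The claim is that for $d=\Theta(\log n/\delta)$ and $m=\Theta(d\ell/\delta)$, the resulting graph $G$ is an $(\ell,d,\delta)$-expander with probability bounded away from $0$, which is all that is needed for existence.

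For a single set $S\subseteq U$ with $|S|=s\leq \ell$, the $ds$ edges incident to $S$ are $ds$ i.i.d.\ uniform samples $X_1,\ldots,X_{ds}\in V$, so the bad event $|N(S)|<(1-\delta)ds$ forces at least $\delta ds$ of the samples to be \emph{collisions}, where a sample is called a collision if it lands on a value produced by an earlier sample. Union-bounding over the $\binom{ds}{\delta ds}$ choices of positions for the collisions, and using that conditional on any history a new sample hits a previously-seen vertex with probability at most $ds/m$, yields
$$\Pr\bigl[\,|N(S)|<(1-\delta)ds\,\bigr] \;\leq\; \binom{ds}{\delta ds}\left(\frac{ds}{m}\right)^{\delta ds} \;\leq\; \left(\frac{e\,ds}{\delta\, m}\right)^{\delta ds}.$$
Setting $m=Cd\ell/\delta$ for a sufficiently large absolute constant $C$ drives the bracket below $1/2$ for every $s\leq \ell$, reducing the single-set bound to $2^{-\delta ds}$. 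A second union bound over the $\binom{n}{s}\leq(en/s)^s$ subsets of size $s$, and then over $s\in\{1,\ldots,\ell\}$, gives overall failure probability at most $\sum_{s=1}^{\ell}(en/s)^s\,2^{-\delta ds}$; taking $d\geq C'\log n/\delta$ for a sufficiently large $C'$ makes $\delta d - \log_2(en/s)\geq 2\log_2 n$ uniformly in $s$, so each summand is at most $n^{-2s}$ and the total is $o(1)$. The probabilistic method then delivers a graph meeting the expansion condition, and padding left-degrees up to exactly $d$ preserves it.

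The main technical point to get right is which single-set tail bound to use. The naive route of union-bounding over the candidate image set $T\subseteq V$ of size $(1-\delta)ds$ only gives
$$\binom{m}{(1-\delta)ds}\left(\frac{(1-\delta)ds}{m}\right)^{ds} \;\leq\; e^{(1-\delta)ds}\left(\frac{(1-\delta)ds}{m}\right)^{\delta ds},$$
and balancing against the outer $(en/s)^s$ would force $m$ to be exponential in $1/\delta$, which is vastly weaker than what is stated. The collision-counting bound above instead pays only the cheap $(e/\delta)^{\delta ds}$ combinatorial cost, which is exactly matched by the $(ds/m)^{\delta ds}$ probability factor and yields the linear-in-$1/\delta$ dependence on $m$ needed to reach the advertised parameters $d=O(\log n/\delta)$ and $m=O(d\ell/\delta)=O(\ell\log n/\delta^2)$.
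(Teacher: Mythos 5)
The paper does not actually prove this lemma --- it is quoted from \cite[Lemma 3.10]{bmrv-bo-02} --- but your argument is the standard probabilistic-method proof of that result (sample $d$ neighbors per left vertex, count collisions for a fixed set $S$, union bound over positions of collisions and then over sets), and it is correct, including the padding step to restore exact left-degree $d$ and the observation that collision counting, rather than guessing the image set $T$, is what yields the $O(1/\delta)$ rather than $e^{\Omega(1/\delta)}$ dependence of $m$ on $\delta$. This matches the argument behind the cited lemma and the analogous expander-existence proofs in the sparse-recovery literature, so there is nothing to correct.
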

In fact, the proof of \lemmaref{expanders_exist} implies a simple
probabilistic construction of such expanders: with probability at least $98\%$, a random binary matrix $A$ of sparsity $d$ is the adjacency matrix of a $(2\ell, d, \delta)$-expander scaled by $d^{-1/p}$, for $\delta = \Theta(\frac{\log n}{d})$ and $\ell=\Theta(\frac{\delta m}{d})$.

Therefore, we will assume that $A$ is the (scaled) adjacency matrix of a $(2\ell, d, \delta)$-expander, for parameters of $\ell$ and $\delta$ that we will specify in the end of this section.%
\footnote{In fact, we will choose $l=\Theta_{\tau}(k^{2-p})$. Therefore, our construction confirms our description in the introduction: it interpolates between the expander construction of RIP-1 matrices from~\cite{bgiks-cgcua-08} that uses $\ell=k$, and the construction of RIP-2 matrices using incoherence argument that essentially corresponds to $\ell=2$.}

\subsection{High-Level Proof Idea}
\label{sec:positive-p-smaller-2:idea}
The goal is to show that $\big|\|Ax\|_p^p - 1\big| \leq \veps$ for every $k$-sparse vector $x$ that satisfies $\|x\|_p=1$. Without loss of generality, let us assume that $x$ is supported on $[k]$, the first $k$ coordinates among $[n]$, and $|x_1| \geq |x_2| \geq \ldots \geq |x_k|$.

\begin{figure}[t]
\center
\includegraphics[width=0.8\textwidth]{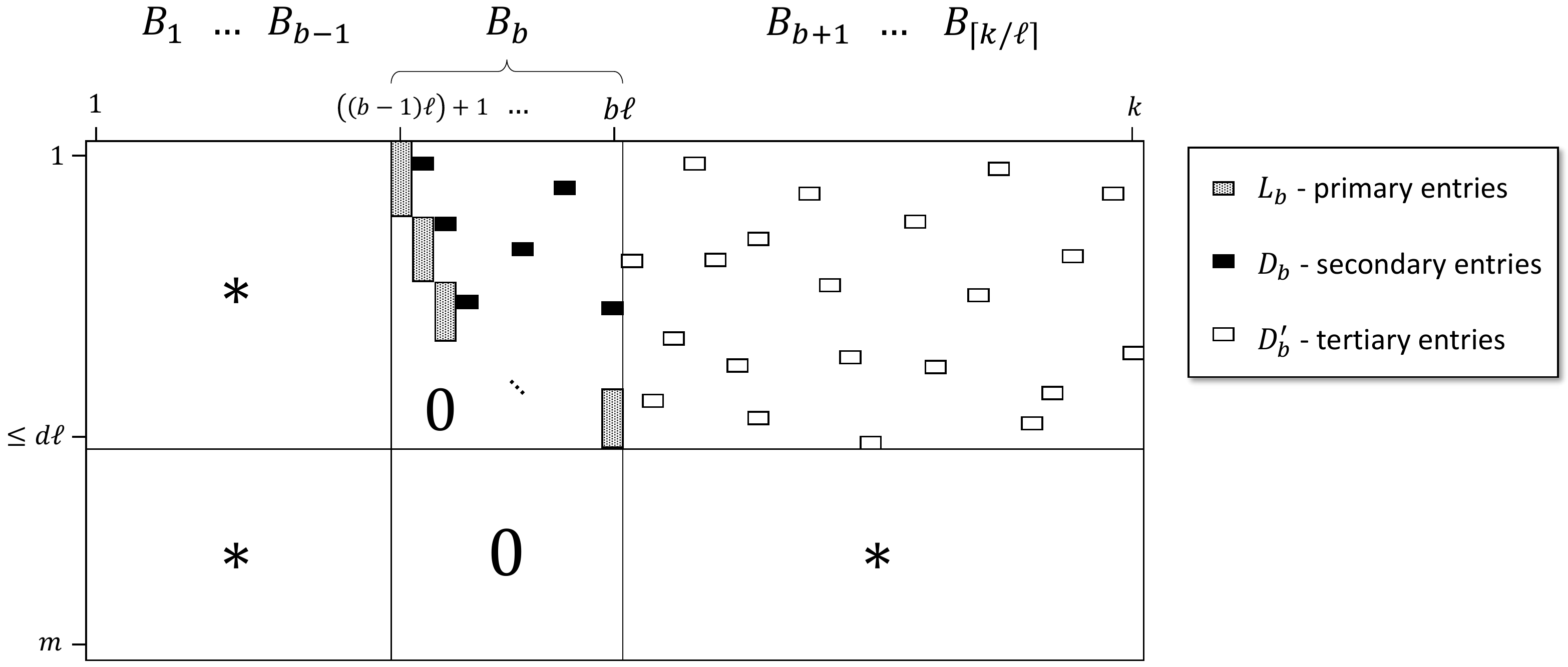}
\caption{Illustrating the definitions of $B_b \subseteq [k]$, $L_b$, $D_b$ and $D_b'$. \label{fig:lead-dots}}
\end{figure}

We partition the $k$ columns into $\lceil k/\ell \rceil$ blocks each of size $\ell$, and denote them by $B_1=\{1,2,\dots,\ell\}$, $B_2 = \set{\ell+1, \ell+2, \ldots, 2 \ell}$, and so on.
    With this definition, we can expand $\|Ax\|_p^p$ as follows:
    \begin{multline}
        \big| \|Ax\|_p^p - 1\big|
        = \Bigg| \sum_{i=1}^m \bigg|\sum_{j=1}^k A_{ij} x_j\bigg|^p - \|x\|_p^p \Bigg|
        = \Bigg| \sum_{i=1}^m \bigg|\sum_{j=1}^k A_{ij} x_j\bigg|^p - \sum_{i=1}^m \sum_{j=1}^k |A_{ij} x_j|^p \Bigg| \\
        \leq O(1) \cdot \sum_{i=1}^m \sum_{j=1}^k \Bigg(|A_{ij} x_j| \cdot \bigg|\sum_{j' = j + 1}^k A_{ij'} x_{j'}\bigg|^{p-1} \Bigg)
        = O(1) \cdot \sum_{b=1}^{\lceil k / \ell \rceil}
\sum_{i=1}^m \sum_{j \in B_b} \Bigg(|A_{ij} x_j| \cdot \bigg|\sum_{j' = j + 1}^k A_{ij'} x_{j'}\bigg|^{p-1}\Bigg) \enspace,
        \label{eqn:main_p_1_2_ineq}
    \end{multline}
    where the inequality follows from \claimref{claim:main_p_1_2_ineq}, a tight bound on the difference between `the $p$-th power of the sum' and `the sum of the $p$-th powers'.

To upper bound the right-hand side of \equationref{eqn:main_p_1_2_ineq},
we fix a block $B_b=\{(b-1)\ell+1,\dots,b\ell\}$
and consider three groups of non-zero entries of $A$: \emph{`primary'},
\emph{`secondary'} and \emph{`tertiary'} entries.

Let us first define primary and secondary entries: together they form a partition of non-zero entries in
the columns of the block $B_b$.
We define \emph{primary} entries $L_b \subseteq [m] \times B_b$ using the following procedure.
For every row of $A$ that has non-zero entries in the columns of $B_b$, we pick the non-zero
entry with the smallest column index and add it to the set of primary entries $L_b$.
We define \emph{secondary} entries $D_b \subseteq [m] \times B_b$ to be the remaining non-zero entries in the columns of  $B_b$.
Finally, we define \emph{tertiary} entries $D_b' \subseteq [m]
\times (B_{b+1} \cup \ldots \cup B_{\lceil k/l\rceil})$ as the set of non-zero entries that lie
in the same row as some primary entry from $L_b$ and in some block $B_{b'}$ for $b' > b$ (see~\figureref{fig:lead-dots}, where we permute rows of $A$ for the sake of illustration).

Next, let us sketch how we upper bound the right-hand side of~\equationref{eqn:main_p_1_2_ineq}.
First, along the way we use crucially the simple estimate $|x_j| \leq j^{-1/p}$ for every $j \in [k]$.
Second, we upper bound the following partial sum of \equationref{eqn:main_p_1_2_ineq} for each $b$ separately:
$$
\sum_{i=1}^m \sum_{j \in B_b} \Bigg(|A_{ij} x_j| \cdot \bigg|\sum_{j' = j + 1}^k A_{ij'} x_{j'}\bigg|^{p-1}\Bigg) \enspace.
$$
We further decompose this sum with respect to $(i, j)$ that are primary (i.e., in $L_b$) or secondary (i.e., in $D_b$), and notice that the pairs $(i,j')$ are either secondary or tertiary (i.e., in $D_b\cup D_b'$).
The crucial observation in our proof is that the entries in $D_b \cup D_b'$ are very sparse and spread across the columns
due to the expansion property of $A$.
Another observation is that for $L_b$, we have at most $d$ entries per column, so we can control the magnitudes
of $|x_j|\leq j^{-1/p}$ for $(i, j) \in L_b$ fairly well.
Overall, the proof of upper bounding the right hand side of \equationref{eqn:main_p_1_2_ineq} boils down to the careful exploitation of these observations and several applications of
H\"{o}lder's inequality.
The details are somewhat lengthy: in particular, we have to treat the case $b = 1$ separately,
and carefully choose all the parameters.
The rest of this section contains the full analysis of this high-level proof idea.

\subsection{Preliminaries}
\begin{claim}
    \label{claim:main_p_1_2_ineq}
    There exists an absolute positive constant $C > 0$ such that%
    \footnote{In fact, choosing $C=3$ should suffice for this claim, but that will make the proof significantly longer.}
    for every $a, b \in \Rbb$ and $1 \leq p \leq 2$ one has
    \begin{equation}
        \label{weird_lp}
        \big| |a+b|^p - |a|^p - |b|^p\big| \leq C |a| |b|^{p-1} \enspace.
    \end{equation}

\noindent
    Therefore, by induction, we obtain that for every $a_1, a_2, \ldots, a_n \in \Rbb$ and $1 \leq p \leq 2$, it satisfies that
    $$
        \bigg|\sum_{i=1}^n a_i\bigg|^p \in \sum_{i=1}^n |a_i|^p \pm C \cdot \sum_{i=1}^{n-1}
        |a_i| \cdot \bigg|\sum_{j=i+1}^n a_j\bigg|^{p-1} \enspace.
    $$
\end{claim}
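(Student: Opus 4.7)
I would first reduce~\eqref{weird_lp} to a one-variable inequality. Both sides are invariant under $(a,b)\mapsto(-a,-b)$ and positively $p$-homogeneous in $(a,b)$, so I may assume $b\geq 0$, and ---after handling the trivial case $b=0$--- rescale so that $b=1$. The task then reduces to showing $\bigl||a+1|^p-|a|^p-1\bigr|\leq C|a|$ for every $a\in\Rbb$ and $p\in[1,2]$.

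I would then dispatch three cases based on the sign and size of $a$. For $a\geq 0$, set $\phi(a)=(a+1)^p-a^p-1-pa$; then $\phi(0)=0$ and $\phi'(a)=p\bigl[(a+1)^{p-1}-a^{p-1}-1\bigr]\leq 0$, since $x\mapsto x^{p-1}$ is concave and vanishes at $0$ for $p-1\in[0,1]$, hence subadditive. Thus $(a+1)^p-a^p-1\in[0,\,pa]$, the lower bound coming from superadditivity of $x\mapsto x^p$ for $p\geq 1$. For $-1\leq a\leq 0$, set $v=-a\in[0,1]$; then $(1-v)^p+v^p-1\leq 0$ since each summand is bounded by its base and $v+(1-v)=1$, and the tangent inequality $(1-v)^p\geq 1-pv$ (from convexity at $v=0$) bounds the magnitude $1-(1-v)^p-v^p$ by $pv-v^p\leq p|a|$. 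For $a<-1$, set $w=-a>1$; the tangent inequality $(w-1)^p\geq w^p-pw^{p-1}$ gives $w^p-(w-1)^p\leq pw^{p-1}\leq pw$ (using $w\geq 1$ and $p\leq 2$), and adding $1\leq w$ bounds the magnitude $w^p+1-(w-1)^p$ by $(p+1)w=(p+1)|a|$. Hence $C=p+1\leq 3$ works in every case.

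The $n$-term version then follows by a direct telescoping induction: for $i=1,\ldots,n-1$, apply the base inequality with $(a,b)=\bigl(a_i,\,\sum_{j>i}a_j\bigr)$, which peels off the $|a_i|^p$ term from $\bigl|\sum_{j\geq i}a_j\bigr|^p$ and introduces an error of absolute value at most $C\,|a_i|\cdot\bigl|\sum_{j>i}a_j\bigr|^{p-1}$. Summing these errors gives exactly the stated bound.

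The one subtle point to watch is the \emph{asymmetric} RHS $|a|\,|b|^{p-1}$ of~\eqref{weird_lp}: in the third case above ($|a|>|b|$, opposite signs) a symmetric treatment would more naturally produce the strictly smaller $|b|\,|a|^{p-1}$. The reason the asymmetric bound still holds is precisely that $2-p\geq 0$, so $|a|\,|b|^{p-1}\geq|b|\,|a|^{p-1}$ whenever $|a|\geq|b|$; this slack is exactly what absorbs the extra additive $1$ in that case and keeps $C$ an absolute constant. (The footnote's remark that a larger $C$ allows a shorter proof is consistent with our $C=p+1\leq 3$, which we reach only through the above three-way case split.)
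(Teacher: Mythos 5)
Your proof is correct in substance and is close in spirit to the paper's own argument (both reduce by homogeneity and then do a case analysis by sign), but it differs in the details in a way worth noting. The paper normalizes $a=1$ and splits on the sign of $b$, handling the negative case by a compactness/continuity argument on $[b_1,b_2]\times[1,2]$ together with asymptotic bounds via Bernoulli's inequality; this is slick but does not produce an explicit constant. You instead normalize $b=1$, split into three ranges of $a$, and in each range use a direct tangent/convexity inequality, which yields the explicit $C=p+1\leq 3$ matching the paper's footnote. Your telescoping derivation of the $n$-term version is the same as what the paper intends.

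One small bug to fix in the middle case: for $-1\leq a\leq 0$ and $v=-a\in[0,1]$, the quantity inside the absolute value is $(1-v)^p-v^p-1$, which is nonpositive, so its \emph{magnitude} is $1+v^p-(1-v)^p$, not $1-(1-v)^p-v^p$ as you wrote (that latter expression equals $0$ at $p=1$, for instance, whereas the true magnitude is $2v$). With the correct expression, your tangent bound $(1-v)^p\geq 1-pv$ gives magnitude $\leq pv+v^p\leq (p+1)v=(p+1)|a|$, still consistent with your final $C=p+1$. Also, the concluding remark about the asymmetry $|a|\,|b|^{p-1}$ vs.\ $|b|\,|a|^{p-1}$ is a correct observation but is not load-bearing in your argument, since your case $3$ directly establishes the bound $(p+1)|a|\cdot|b|^{p-1}$ without detouring through the symmetric version.
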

\begin{proof}
    If $a = 0$ or $b = 0$, then~\eqref{weird_lp} is true for any $C > 0$.
    Otherwise, by homogeneity we can assume that $a = 1$.
    We prove~\eqref{weird_lp} separately for $b < 0$ and $b > 0$.

    \parhead{The Case $b<0$.}
    First, consider the case $b < 0$.
    Our goal is to prove that
    \begin{equation}
        \label{weird_lp_ratio}
        \frac{1 + |b|^p - |1 + b|^p}{|b|^{p-1}}
    \end{equation}
    is bounded from above by a constant for $b < 0$ and $1 \leq p \leq 2$ (obviously,~\eqref{weird_lp_ratio}
    is non-negative).
    Since~\eqref{weird_lp_ratio} is continuous for these values of $b$ and $p$, it is sufficient
    to prove that~\eqref{weird_lp_ratio} is bounded in each of the following two cases:
    \begin{itemize}
        \item $b < b_1$ and $1 \leq p \leq 2$;
        \item $b_2 < b < 0$ and $1 \leq p \leq 2$,
    \end{itemize}
    where $b_1 < b_2 < 0$ are arbitrary constants (one should think of $b_1$ having a large absolute value and $b_2$ being close to zero).

    First, let us consider the case $b < b_1$.
    Denoting by $x = -b > 0$, we need to prove that for sufficiently large $x$ and $1 \leq p \leq 2$,
    $$
        \frac{1 + x^p - (x-1)^p}{x^{p-1}}
    $$
    is bounded. We have
    $$
        \frac{1 + x^p - (x-1)^p}{x^{p-1}} =
        \frac{1 + x^p - x^p \cdot (1-1/x)^p}{x^{p-1}} \leq
        \frac{1 + x^p - x^p \cdot (1-p/x)}{x^{p-1}} \leq
        p + 1 \enspace,
    $$
    where the first inequality is due to the generalized Bernoulli's inequality,
    and the second inequality holds, if $x$ is sufficiently large.

    Now, let us consider the case $b_2 < b < 0$. Defining $\eps = -b > 0$, we need to prove that for sufficiently small $\eps > 0$ and $1 \leq p \leq 2$,
    $$
        \frac{1 + \eps^p - (1 - \eps)^p}{\eps^{p - 1}}
    $$
    is bounded. We have
    $$
        \frac{1 + \eps^p - (1 - \eps)^p}{\eps^{p - 1}}
        \leq \frac{1 + \eps^p - (1 - p \eps)}{\eps^{p - 1}}
        \leq \frac{(p+1)\eps}{\eps^{p-1}} \leq p + 1,
    $$
    where the first inequality is due to the generalized Bernoulli's inequality,
    the second and third inequalities follow from the fact that $1 \leq p \leq 2$.

\parhead{The Case $b>0$.}
    Let us now handle the case $b > 0$.
    It is sufficient to check that for every $b \geq 0$ and $1 \leq p \leq 2$ we have
    $$
        (1 + b)^p \leq 1 + b^p + p b^{p-1} \enspace.
    $$
    This inequality is trivially true when $b=0$, and therefore, it is enough to check that for every $b > 0$ and $1 \leq p \leq 2$,
    $$
        \frac{\partial}{\partial b} ((1 + b)^p - 1 - b^p - pb^{p-1}) = p (1+b)^{p-1} - pb^{p-1} - p(p-1)b^{p-2} \leq 0 \enspace
    $$
    or equivalently
    $$
        \left(1 + \frac{1}{b}\right)^{p-1} \leq 1 + \frac{p-1}{b} \enspace.
    $$
    But the latter follows from the generalized Bernoulli's inequality.
\end{proof}
\begin{lemma}
    \label{holder_1}
    For every $a, c \in \Rbb_{\geq 0}^N$ and $1 \leq p \leq 2$, we have
    $$
        \sum_{i=1}^N c_i a_i^{p - 1} \leq \|c\|_{1/(2 - p)} \cdot \|a\|_1^{p-1} \enspace.
    $$
    In particular, if $c_1=\dots=c_N=1$, we have
    $$
        \sum_{i=1}^N a_i^{p - 1} \leq N^{2-p}
        \cdot \left(\sum_{i=1}^N a_i\right)^{p - 1} \enspace.
    $$
\end{lemma}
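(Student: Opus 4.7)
The plan is to recognize this as a direct application of H\"{o}lder's inequality with carefully chosen exponents. The target inequality $\sum_{i=1}^N c_i a_i^{p-1} \leq \|c\|_{1/(2-p)} \cdot \|a\|_1^{p-1}$ has the right-hand side in the form of a product of two norms of $c$ and $a$, which strongly suggests a factorization of the summand into two pieces and then H\"{o}lder.

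Specifically, I would write $c_i a_i^{p-1}$ as a product $f_i \cdot g_i$ with $f_i = c_i$ and $g_i = a_i^{p-1}$, and apply H\"{o}lder's inequality with exponents $s$ and $t$ satisfying $1/s + 1/t = 1$. In order for $\bigl(\sum_i g_i^t\bigr)^{1/t}$ to equal $\|a\|_1^{p-1}$, we need $t(p-1)=1$ and the whole factor to be raised to the $(p-1)$-th power, so we choose $t = 1/(p-1)$ (which is at least $1$ since $p \leq 2$) and correspondingly $s = 1/(2-p)$ (which is at least $1$ since $p \geq 1$). Then H\"{o}lder gives
\[
\sum_{i=1}^N c_i a_i^{p-1} \leq \Big(\sum_{i=1}^N c_i^{1/(2-p)}\Big)^{2-p} \Big(\sum_{i=1}^N a_i\Big)^{p-1} = \|c\|_{1/(2-p)} \cdot \|a\|_1^{p-1},
\]
which is exactly the claim.

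For the special case $c_1 = \cdots = c_N = 1$, the first factor becomes $\|c\|_{1/(2-p)} = N^{2-p}$, yielding the stated consequence. The only minor care I would take is with the boundary values $p = 1$ (where the inequality reduces to $\sum c_i \leq \|c\|_1$, an equality) and $p = 2$ (where the exponent $1/(2-p)$ should be interpreted as $\infty$ and the inequality reduces to $\sum c_i a_i \leq \|c\|_\infty \|a\|_1$, which is obvious); both limits can be verified directly or handled by a continuity argument. Since the proof is a one-line application of H\"{o}lder, there is no real obstacle; the only ``work'' is matching exponents, and the entire argument should fit in a few lines.
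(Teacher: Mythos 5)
Your proof is correct and matches the paper's proof exactly: both apply H\"{o}lder's inequality with the conjugate exponents $1/(2-p)$ and $1/(p-1)$. You also sensibly note the boundary cases $p=1$ and $p=2$, which the paper silently takes for granted.
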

\begin{proof}
    This is just an application of H\"{o}lder's inequality for norms $1 / (2 - p)$ and $1 / (p-1)$.
\end{proof}

\subsection{From Expansion Property to the Primary-Secondary-Tertiary Decomposition}
Using the notation from~\sectionref{sec:positive-p-smaller-2:idea}, let us translate the expansion property into a cardinality upper bound on the sets of secondary and tertiary entries.

    \begin{lemma}
        \label{expansion}
        For every integer $1 \leq b \leq \lceil k / \ell\rceil$, we have for every integer $t \geq 1$,
        \begin{equation}
            \label{prefix_expansion}
            \Big|\set{(i, j) \in D_b \cup D'_b \mid j \leq (b-1)\ell+t}\Big| \leq
            \begin{cases}
                0, & t = 1,\\
                3 \delta dt, & t > 1.
            \end{cases}
        \end{equation}
        In addition, we have $|D_b| \leq \delta d \ell$.
    \end{lemma}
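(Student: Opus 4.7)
The proof rests on two applications of the $(2\ell, d, \delta)$-expansion property of the bipartite graph underlying $A$: one applied to $B_b$ alone (to control primary/secondary counts), and one applied to $B_b$ together with small sets of later columns (to control tertiary counts). Throughout, for a column set $S$ let $V_S$ denote the set of rows touched by non-zeros in $S$, and let $c_i^S$ denote the number of non-zeros of row $i$ inside $S$, so that $\sum_i c_i^S = d|S|$. The bound $|D_b| \leq \delta d \ell$ is then immediate: expansion applied to $B_b$ (which has $\ell \leq 2\ell$ columns) gives $|R_b| = |V_{B_b}| \geq (1 - \delta) d\ell$, and since each row of $R_b$ contains exactly one primary (its leftmost non-zero in $B_b$), $|D_b| = d\ell - |R_b| \leq \delta d\ell$. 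The case $t = 1$ is also immediate: the only column in $T := \{(b-1)\ell+1,\dots,(b-1)\ell+t\}$ is the leftmost column of $B_b$, where every non-zero is automatically primary, and $D_b'$ lives only in later blocks.

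For $1 < t \leq 2\ell$, I will prove the sharper bound $\delta d t$. The key observation is that every $(i,j) \in (D_b \cup D_b') \cap T$ has $i \in R_b$ and is a non-primary entry, and moreover whenever $i \in R_b$ has a non-zero inside $T$, its primary also lies inside $T$ (when $T \supseteq B_b$ this is automatic; when $T \subsetneq B_b$ is a prefix of $B_b$, the leftmost non-zero of row $i$ inside $B_b$ coincides with its leftmost non-zero inside $T$). Therefore the count equals $\sum_{i \in R_b \cap V_T}(c_i^T - 1)$, which I upper bound by $\sum_{i \in V_T}(c_i^T - 1) = td - |V_T|$. Since $|T| = t \leq 2\ell$, expansion yields $|V_T| \geq (1 - \delta)dt$, and the count is at most $\delta d t$.

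The main obstacle is the regime $t > 2\ell$, where expansion is no longer available on $T$ itself. I handle it by chunking. Write $T' := T \setminus B_b$ of size $t - \ell$, and partition $T'$ into $r := \lceil (t-\ell)/\ell \rceil \leq t/\ell$ chunks $S_1, \dots, S_r$, each of size at most $\ell$. For each $q$, the column set $B_b \cup S_q$ has size at most $2\ell$, so expansion gives $|V_{B_b \cup S_q}| \geq (1-\delta) d(\ell + |S_q|)$, and the same collision-counting argument as in the previous paragraph, applied to $B_b \cup S_q$, produces
\[
|D_b| + |D_b' \cap S_q| \;=\; \sum_{i \in R_b}(c_i^{B_b \cup S_q} - 1) \;\leq\; d(\ell + |S_q|) - |V_{B_b \cup S_q}| \;\leq\; \delta d(\ell + |S_q|),
\]
using that $D_b'$ restricted to $S_q \subseteq T'$ is exactly the set of non-zeros in rows $R_b$ and columns $S_q$. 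Dropping $|D_b| \geq 0$ from the left-hand side and summing over $q$ yields $|D_b' \cap T'| \leq \delta d(r\ell + (t-\ell))$. Combined with $|D_b| \leq \delta d\ell$ and $r\ell \leq t$, the total count is at most $\delta d\ell + \delta d(t + (t-\ell)) = 2\delta d t \leq 3\delta d t$. The factor $3$ in the statement absorbs slack from the chunking rounding and from handling $|D_b|$ separately, and is not tight.
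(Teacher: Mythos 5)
Your proof is correct and follows essentially the same approach as the paper: bound $|D_b|$ by applying expansion to $B_b$, then bound the tertiary contribution by applying expansion to $B_b$ together with chunks of at most $\ell$ later columns (so the union has size at most $2\ell$) and summing over chunks. The only differences are cosmetic — you handle $1 < t \leq 2\ell$ with a single application of expansion to $T$ (getting $\delta d t$, sharper than needed), and your uniform chunking of $T \setminus B_b$ yields the constant $2$ where the paper's block-aligned chunking gives $3$.
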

    \begin{proof}
        First of all, $|D_b| \leq \delta d \ell$ is an immediate corollary of the expansion property.
        Recall that $A$ is the (scaled) adjacency matrix of a $(2\ell, d,\delta)$-expander and therefore $|D_b| = d|B_b| - |\bigcup_{j\in B_b} S_j| \leq d|B_b| - (1-\delta)d|B_b| = \delta d |B_b| \leq \delta d \ell$.

        \eqref{prefix_expansion} for $t=1$ is obvious, because in the column of $(b-1)\ell+1$, there are only primary entries but not secondary or tertiary ones (see \figureref{fig:lead-dots}).

        For any integer $t$ between $2$ and $\ell$, we observe that the left hand side of \eqref{prefix_expansion} consists only of secondary entries in $D_b$, and moreover,
        $$\Big|\set{(i, j) \in D_b \mid j \leq (b-1)\ell+t}\Big| = dt - \Big| \bigcup_{j=(b-1)\ell + 1}^{j=(b-1)\ell+t} S_j \Big| \leq dt - (1-\delta)dt = \delta dt \leq 3\delta dt \enspace.$$

        For any $t > \ell$, we argue as follows. Since the expander property of $A$ ensures that the union of any $2\ell$ distinct $S_j$'s have at least $(1-\delta)2d\ell$ distinct elements, we conclude that for every $b^* > b$:
        $$
            \Big|\set{(i, j) \in D'_b \mid j \in B_{b^*}}\Big| \leq 2d\ell - \Big| \bigcup_{j\in B_b \cup B_{b^*}} S_j \Big| \leq 2\delta d \ell \enspace.
        $$
        Therefore, for any integer $t>\ell$, suppose that $(b-1)\ell + t\in B_{b'}$ for $b' = b+\lfloor \frac{t-1}\ell \rfloor >b$, then we have
        $$\Big|\set{(i, j) \in D_b \cup D'_b \mid j \leq (b-1)\ell+t}\Big| \leq |D_b| + (b'-b) \cdot 2\delta d\ell \leq \delta d\ell + \frac{t-1}{\ell} \cdot 2\delta d \ell \leq 3 \delta d t \enspace.$$
        This finishes all the cases of \lemmaref{expansion}.
    \end{proof}

The expansion property implies the following useful inequality that will be used extensively in the proof.
    \begin{lemma}
        \label{piotr_modified}
        For every integer $1 \leq b \leq \lceil k / \ell \rceil$, we have
        \begin{equation}
            \label{l1_mass}
            \sum_{(i, j) \in D_b \cup D'_b} A_{ij} |x_j| \leq 3 \delta (dk)^{1-1/p} =: S.
        \end{equation}
        We denote by $S = 3 \delta (dk)^{1-1/p}$ the right-hand side of~\equationref{l1_mass}.
    \end{lemma}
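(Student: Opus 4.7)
The plan is to exploit the fact that every non-zero entry of $A$ equals $d^{-1/p}$, combine this with the cumulative cardinality bound from Lemma~\ref{expansion}, and leverage that $|x_1| \geq |x_2| \geq \cdots \geq |x_k|$.

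First, I would factor out $d^{-1/p}$ and group the contributions by column: letting $c_t$ denote the number of entries of $D_b \cup D'_b$ in column $(b-1)\ell+t$ and $y_t := |x_{(b-1)\ell+t}|$, one writes
$$\sum_{(i,j) \in D_b \cup D'_b} A_{ij}\,|x_j| \;=\; d^{-1/p} \sum_{t=1}^{T_0} c_t\, y_t,$$
where $T_0 := k - (b-1)\ell$ is the largest relevant shift: columns beyond $k$ contribute zero since $x$ is supported on $[k]$, and by construction $D_b \cup D'_b$ has no entries in columns strictly below $(b-1)\ell+1$.

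Next I would bring in the cumulative expansion bound. Setting $N(t) := \sum_{s=1}^t c_s$, Lemma~\ref{expansion} provides $N(t) \leq 3\delta d t$ uniformly for $t \geq 1$ (the case $t=1$ is even better, $N(1) = 0$). Since $y_t$ is non-increasing, Abel summation gives
$$\sum_{t=1}^{T_0} c_t\, y_t \;=\; \sum_{t=1}^{T_0-1} N(t)\,(y_t - y_{t+1}) \;+\; N(T_0)\,y_{T_0},$$
in which every summand is non-negative. Replacing each $N(t)$ with the upper bound $3\delta d t$ and telescoping produces $3\delta d \sum_{t=1}^{T_0} y_t$.

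Finally, since $\|x\|_p = 1$ and $x$ is $k$-sparse, Hölder's inequality yields $\sum_{t=1}^{T_0} y_t \leq \sum_{j=1}^{k} |x_j| \leq k^{1-1/p}$. Combining the pieces gives
$$d^{-1/p} \cdot 3\delta d \cdot k^{1-1/p} \;=\; 3\delta\,(dk)^{1-1/p} \;=\; S,$$
which is the desired bound. The only subtlety is recognizing Abel summation as the right tool to couple the \emph{prefix-type} cardinality bound supplied by Lemma~\ref{expansion} with the monotonicity of the $|x_j|$; once that is in place, the rest is just telescoping and a standard norm inequality.
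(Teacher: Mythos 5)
Your proof is correct and follows essentially the same route as the paper's: factor out $d^{-1/p}$, group by column, combine the prefix bound $\sum_{s\le t}c_s\le 3\delta d t$ from Lemma~\ref{expansion} with the monotonicity of $|x_j|$, and finish with $\|x\|_1 \le k^{1-1/p}$. The only difference is cosmetic—you make the rearrangement step explicit via Abel summation, whereas the paper informally asserts that the sum is maximized when every $a_t$ equals $3\delta d$—so your writeup is a cleaner, more rigorous rendering of the same argument.
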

    \begin{proof}%
    \footnote{This is a simple modification of~\cite[Lemma 9]{bgiks-cgcua-08}. However, that lemma does not directly apply for our scenario, because it assumes $A$ expanding any subsets of size at most $k$.}
        Since each non-zero entry of $A$ equals to $d^{-1/p}$, the left hand side of the desired inequality is
        $$d^{-1/p} \cdot \sum_{(i, j) \in D_b \cup D'_b} |x_j| = d^{-1/p} \cdot \sum_{j \geq (b-1)\ell+1} |x_j| \cdot \Big|\set{i \mid (i, j) \in D_b \cup D'_b} \Big| \enspace.$$

        Let us denote by $a_j = \big|\set{i \mid (i, j) \in D_b \cup D'_b} \big|$, the number of distinct nonzero elements in the $j$-th column of $A$ that share rows with the primary entries $L_b$ of the block $b$.
        Then, the above sum equals to
        $$d^{-1/p} \cdot \sum_{j \geq (b-1)\ell+1} |x_j| \cdot a_j = d^{-1/p} \cdot \sum_{t\geq 1} \big|x_{(b-1)\ell+t}\big| \cdot a_{ (b-1)\ell+t}\enspace.$$

        We now observe that, $a_{ (b-1)\ell+1} + \cdots + a_{ (b-1)\ell+t} \leq 3\delta d t$ for every $t\geq 1$ according to \lemmaref{expansion}, while at the same time, $\big|x_{(b-1)\ell+t}\big|$ is assumed to be non-increasing as $t$ increases. Therefore, it one can see that the right hand side of the above sum is maximized when
        $$ a_{ (b-1)\ell+1} = \cdots = a_{ (b-1)\ell+t} = \cdots = 3\delta d \enspace,$$
        and therefore, we conclude that
        $$
            \sum_{(i, j) \in D_b \cup D'_b} A_{ij} |x_j| \leq d^{-1/p} \cdot 3 \delta d  \cdot \|x\|_1
            \leq 3 \delta (dk)^{1-1/p} \enspace,
        $$
        where the last inequality follows from the relation between $\ell_1$ and $\ell_p$ norms, that is $\|x\|_1 \leq k^{1-1/p} \cdot \|x\|_p = k^{1-1/p} $.
    \end{proof}

\subsection{Bounding Equation (\ref{eqn:main_p_1_2_ineq}) for $b > 1$}
    The following estimate upper bounds the right hand side of \equationref{eqn:main_p_1_2_ineq} for any block $b\geq 1$, but we will use
    it eventually only for $b > 1$. For $b = 1$, we will need a separate estimate.

    \begin{lemma}
        \label{second_block}
        For every integer $1 \leq b \leq \lceil k / \ell \rceil$, we have
        $$
            \sum_{i=1}^m \sum_{j \in B_b} \left(|A_{ij} x_j| \cdot \bigg|\sum_{j' = j + 1}^k A_{ij'} x_{j'}\bigg|^{p-1}\right)
            \leq ((3\delta d k)^{2-p} + (\delta d)^{2-p} \ell)\cdot \frac{S^{p-1}}{d^{1/p} \cdot ((b-1)\ell+1)^{1/p}},
        $$
        where $S$ is defined in the statement of \lemmaref{piotr_modified}.
    \end{lemma}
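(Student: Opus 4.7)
The plan is to first apply the triangle inequality (together with monotonicity of $t\mapsto t^{p-1}$ for $p\in(1,2)$) to pass to
$\bigl|\sum_{j'>j} A_{ij'} x_{j'}\bigr|^{p-1} \leq \bigl(\sum_{j'>j,\,A_{ij'}\neq 0} A_{ij'}|x_{j'}|\bigr)^{p-1}$. Then, for any $(i,j)\in L_b \cup D_b$, the definition of the primary--secondary--tertiary decomposition forces every later nonzero pair $(i,j')$ with $j'\leq k$ to lie in $D_b\cup D_b'$ (since the row $i$ already has its leading $B_b$-entry at or before column $j$). Hence each inner quantity is bounded by $T_i := \sum_{(i,j')\in D_b\cup D_b'} A_{ij'}|x_{j'}|$, and splitting the outer sum over $L_b$ and $D_b$ separately reduces the task to bounding
\begin{equation*}
I := \sum_{(i,j)\in L_b} A_{ij}|x_j|\cdot T_i^{p-1} \quad\text{and}\quad II := \sum_{(i,j)\in D_b} A_{ij}|x_j|\cdot T_i^{p-1}.
\end{equation*}
In both sums I pull out the uniform pointwise estimate $A_{ij}|x_j| \leq d^{-1/p}\cdot ((b-1)\ell+1)^{-1/p}$, which comes from $A_{ij}=d^{-1/p}$, $j\geq (b-1)\ell+1$, and the ordering bound $|x_j|\leq j^{-1/p}$.

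To bound $I$, I first use that each row contains at most one pair from $L_b$ to write $\sum_{(i,j)\in L_b} T_i^{p-1} \leq \sum_i T_i^{p-1}$. The key observation is that the support of $i\mapsto T_i$ is contained in the set of rows touched by $D_b\cup D_b'$, which has size at most $|D_b\cup D_b'|\leq 3\delta d k$ by \lemmaref{expansion} (applied with $t=k-(b-1)\ell$). Combining this support bound with \lemmaref{piotr_modified} ($\sum_i T_i\leq S$) in the second form of \lemmaref{holder_1} yields $\sum_i T_i^{p-1} \leq (3\delta dk)^{2-p}\cdot S^{p-1}$, which accounts for the summand $(3\delta dk)^{2-p}$ in the target.

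To bound $II$, I rewrite $\sum_{(i,j)\in D_b} T_i^{p-1} = \sum_i r_i T_i^{p-1}$, where $r_i := |\{j:(i,j)\in D_b\}|$. Applying the first form of \lemmaref{holder_1} with $c_i = r_i$ and $a_i = T_i$ reduces the task to bounding $\|r\|_{1/(2-p)}$; \lemmaref{piotr_modified} then gives $\|a\|_1 \leq S$ as before. I expect this interpolation step to be the main obstacle: using only the cardinality bound $\sum_i r_i = |D_b|\leq \delta d\ell$ or only the per-row bound $r_i\leq |B_b|=\ell$ produces the wrong exponent on $\ell$. The correct move is the hybrid estimate $r_i^{1/(2-p)} \leq r_i \cdot \ell^{(p-1)/(2-p)}$, which yields $\sum_i r_i^{1/(2-p)} \leq \ell^{(p-1)/(2-p)}\cdot \delta d\ell$ and hence $\|r\|_{1/(2-p)} \leq (\delta d)^{2-p}\cdot \ell$. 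This produces the summand $(\delta d)^{2-p}\ell$, and adding the estimates of $I$ and $II$ completes the proof.
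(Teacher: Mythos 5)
Your proof is correct and follows essentially the same route as the paper's: split into the $L_b$ and $D_b$ parts, pull out the pointwise bound $A_{ij}|x_j|\le d^{-1/p}((b-1)\ell+1)^{-1/p}$, apply \lemmaref{holder_1} and \lemmaref{piotr_modified} to each part, and interpolate $\|r\|_{1/(2-p)}$ between $\|r\|_1\le\delta d\ell$ and $\|r\|_\infty\le\ell$. Your ``hybrid estimate'' $r_i^{1/(2-p)}\le r_i\cdot\ell^{(p-1)/(2-p)}$ is exactly the paper's inequality $\|c\|_{1/(2-p)}\le\|c\|_\infty^{p-1}\|c\|_1^{2-p}$ unwound, and your introduction of $T_i$ is a clean notational repackaging of the same argument.
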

    \begin{proof}
        Let us partition the sum of interest into primary and secondary entries:
        \begin{multline}
            \label{decomposition}
            \sum_{i=1}^m \sum_{j \in B_b} \left(|A_{ij} x_j| \cdot \bigg|\sum_{j' = j + 1}^k A_{ij'} x_{j'}\bigg|^{p-1}\right)
            \\
            \leq
            \sum_{(i, j) \in L_b} \left(|A_{ij} x_j| \cdot \bigg|\sum_{j' = j + 1}^k A_{ij'} x_{j'}\bigg|^{p-1}\right)
            +
            \sum_{(i, j) \in D_b} \left(|A_{ij} x_j| \cdot \bigg|\sum_{j' = j + 1}^k A_{ij'} x_{j'}\bigg|^{p-1}\right)
            = : I + I'.
        \end{multline}
        Now, we upper bound $I$ as follows:
        \begin{equation*}
            \sum_{(i, j) \in L_b} \left(|A_{ij} x_j| \cdot \bigg|\sum_{j' = j + 1}^k A_{ij'} x_{j'}\bigg|^{p-1}\right)
            \leq \frac{1}{d^{1/p} ((b-1)\ell+1)^{1/p}} \cdot
            \sum_{(i, j) \in L_b} \bigg|\sum_{j' = j + 1}^k A_{ij'} x_{j'}\bigg|^{p-1},
        \end{equation*}
        where the inequality follows from the fact that $|x_j| \leq \frac{1}{j^{1/p}}$ (since the coordinates of $x$ are sorted in the decreasing order of their absolute values). We observe that we can apply \lemmaref{holder_1} to the sum $\sum_{(i, j) \in L_b} \big|\sum_{j' = j + 1}^k A_{ij'} x_{j'}\big|^{p-1}$, where the outer sum has at most $3 \delta d k$ non-zero terms.%
        \footnote{This holds, since for every $i, j, j'$ such that $(i, j) \in L_b$, $j' > j$ and $A_{ij'} \ne 0$
        we have $(i, j') \in D_b \cup D_b'$ (see \figureref{fig:lead-dots}). Due to \lemmaref{expansion} we have
        $|D_b \cup D_b'| \leq 3 \delta d k$.}
        As a result, we have
\begin{multline}
            \label{leading}
            \sum_{(i, j) \in L_b} \left(|A_{ij} x_j| \cdot \bigg|\sum_{j' = j + 1}^k A_{ij'} x_{j'}\bigg|^{p-1}\right)        \leq \frac{(3\delta d k)^{2-p}}{d^{1/p} ((b-1)\ell+1)^{1/p}} \cdot
            \left(\sum_{(i, j) \in L_b} \sum_{j'=j+1}^k |A_{ij'} x_{j'}|\right)^{p-1}
            \\\leq \frac{(3\delta d k)^{2-p} \cdot S^{p-1}}{d^{1/p} ((b-1)\ell+1)^{1/p}}
\end{multline}
        where the second inequality follows from \lemmaref{piotr_modified}, as $\sum_{(i, j) \in L_b} \sum_{j'=j+1}^k |A_{ij'} x_{j'}| = \sum_{(i, j) \in D_b \cup D'_b} A_{ij} |x_j| $.

        Next, we upper bound $I'$. For $i \in [m]$ we define
        $c_i := |\set{j \mid (i, j) \in D_b}| \leq \ell$.
        We have
        \begin{align*}
            \sum_{(i, j) \in D_b} \left(|A_{ij} x_j| \cdot \bigg|\sum_{j' = j + 1}^k A_{ij'} x_{j'}\bigg|^{p-1}\right)
            &\overset{\text{\ding{172}}}\leq \frac{1}{d^{1/p} ((b-1)\ell+1)^{1/p}} \cdot \sum_{(i, j) \in D_b} \bigg|\sum_{j' = j + 1}^k A_{ij'} x_{j'}\bigg|^{p-1}
            \\
            &\overset{\text{\ding{173}}}= \frac{1}{d^{1/p} ((b-1)\ell+1)^{1/p}} \cdot \sum_{i=1}^m c_i \bigg|\sum_{j' = j + 1}^k A_{ij'} x_{j'}\bigg|^{p-1}
            \\
            &\overset{\text{\ding{174}}}\leq \frac{ \|c\|_{1/(2-p)} }{d^{1/p} ((b-1)\ell+1)^{1/p}} \cdot
            \left(\sum_{(i, j) \in D_b} \sum_{j'=j+1}^k |A_{ij'} x_{j'}|\right)^{p-1} \\
            &\overset{\text{\ding{175}}}\leq \frac{\|c\|_{1/(2-p)} \cdot S^{p-1}}{d^{1/p} ((b-1)\ell+1)^{1/p}}.
        \end{align*}
        Here, inequality \ding{172} follows from the fact that $|x_j| \leq \frac{1}{j^{1/p}}$, equality \ding{173} follows from the definition of $c_i$, inequality \ding{174} follows from \lemmaref{holder_1}, and inequality \ding{175} follows from \lemmaref{piotr_modified}.

        Observe that for every $1 \leq q \leq \infty$ we have
        $\|c\|_q \leq \|c\|_{\infty}^{1 - 1/q} \|c\|_1^{1/q}$.
        From \lemmaref{expansion} we have $\|c\|_1 = |D_b| \leq \delta d \ell$,
        also by definition of $c_i$ we have $\|c\|_{\infty} \leq \ell$. Overall, we obtain
        \begin{equation}
            \label{geometric_mean}
            \|c\|_{1/(2-p)} \leq \|c\|_{\infty}^{p - 1} \cdot \|c\|_1^{2 - p}
            \leq (\delta d)^{2 - p} \cdot \ell \enspace.
        \end{equation}
        We conclude by combining the upper bound on $I'$ and~\eqref{geometric_mean} as follows:
        \begin{equation}
            \label{dots}
            \sum_{(i, j) \in D_b} \left(|A_{ij} x_j| \cdot \bigg|\sum_{j' = j + 1}^k A_{ij'} x_{j'}\bigg|^{p-1}\right)
            \leq \frac{ (\delta d)^{2-p} \ell \cdot S^{p-1}}{d^{1/p} ((b-1)\ell+1)^{1/p}} \enspace.
        \end{equation}
        Combining~\equationref{decomposition},~\equationref{leading} and~\equationref{dots}, we get the desired inequality.
    \end{proof}

\subsection{Bounding Equation (\ref{eqn:main_p_1_2_ineq}) for $b = 1$}
    The following estimate upper bounds the right hand side of \equationref{eqn:main_p_1_2_ineq} for the block $b=1$. It is tighter than that of \lemmaref{second_block}.

    \begin{lemma}
        \label{first_block}
        For $b = 1$ one has
        $$
            \sum_{i=1}^m \sum_{j \in B_b} \left(|A_{ij} x_j| \cdot \bigg|\sum_{j' = j + 1}^k A_{ij'} x_{j'}\bigg|^{p-1}\right)
            \leq O_{\tau}\left(d^{2-p} \cdot \frac{S^{p-1}}{d^{1/p}} + \delta \ell^{1-1/p} k^{(p-1)^2/p} \right) ,
        $$
        where $S$ is from the statement of \lemmaref{piotr_modified}.
    \end{lemma}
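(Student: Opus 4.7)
The plan is to mirror the primary/secondary split of \lemmaref{second_block}: write the left-hand side as $I + I'$ with $I := \sum_{(i,j) \in L_1} |A_{ij} x_j|\,|Y_{ij}|^{p-1}$ and $I' := \sum_{(i,j) \in D_1} |A_{ij} x_j|\,|Y_{ij}|^{p-1}$, where $Y_{ij} := \sum_{j' > j} A_{ij'} x_{j'}$. The reason \lemmaref{second_block} is not tight for $b=1$ is that the factor $((b-1)\ell+1)^{-1/p}$ that it used is now vacuous, so I retain the $|x_j|$'s as H\"{o}lder weights instead of pulling out a uniform bound. For the primary part $I$, I apply \lemmaref{holder_1} with $c_{(i,j)} = |x_j|$ and $a_{(i,j)} = |Y_{ij}|$, obtaining $I \leq d^{-1/p}\,\|c\|_{1/(2-p)}\,\|a\|_1^{p-1}$. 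Because each column of $A$ contains at most $d$ nonzero entries, $\sum_{(i,j) \in L_1} |x_j|^{1/(2-p)} \leq d \sum_{j} |x_j|^{1/(2-p)}$; since $1/(2-p) \geq p$ (equivalent to $(p-1)^2 \geq 0$), monotonicity of $\ell_q$ norms gives $\sum_j |x_j|^{1/(2-p)} \leq \|x\|_p^{1/(2-p)} = 1$, hence $\|c\|_{1/(2-p)} \leq d^{2-p}$. Combined with $\|a\|_1 \leq S$ (from the same pairing argument as in \lemmaref{piotr_modified}: every entry of $D_1 \cup D_1'$ sits in a row with a unique primary entry), this yields $I \leq d^{-1/p}\, d^{2-p}\, S^{p-1}$, which is the first term of the claim.

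For the secondary part, let $a_j := |D_{1,j}|$ denote the number of secondary entries in column $j$. Grouping the sum by column and applying \lemmaref{holder_1} to each inner sum gives $\sum_{i \in D_{1,j}} |Y_{ij}|^{p-1} \leq a_j^{2-p}\, S^{p-1}$ (once more $\sum_{i \in D_{1,j}} |Y_{ij}| \leq S$ by pairing). Using $|x_j| \leq j^{-1/p}$, the task reduces to proving the estimate
\[
    \sum_{j=1}^\ell j^{-1/p}\, a_j^{2-p} \;\leq\; O_\tau\big((\delta d)^{2-p}\,\ell^{1-1/p}\big).
\]
I establish this by a dyadic decomposition $[1,\ell] = \bigcup_r I_r$ with $I_r = \{2^r,\ldots,\min(2^{r+1}-1,\ell)\}$. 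On each $I_r$, Jensen's inequality for the concave function $x \mapsto x^{2-p}$ gives $\sum_{j \in I_r} a_j^{2-p} \leq |I_r|\,\bar{a}_r^{2-p}$, while the partial-sum bound $\sum_{j \leq 2^{r+1}} a_j \leq O(\delta d \cdot 2^r)$ from \lemmaref{expansion} forces $\bar{a}_r = O(\delta d)$. The contribution of $I_r$ is therefore at most $2^{-r/p}\cdot 2^r \cdot (O(\delta d))^{2-p} = O\big((\delta d)^{2-p}\, 2^{r(1-1/p)}\big)$, a geometric series in $2^{1-1/p}>1$ that sums (up to a $\tau$-dependent constant) to $O_\tau\big((\delta d)^{2-p}\,\ell^{1-1/p}\big)$. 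Substituting $S = 3\delta (dk)^{1-1/p}$ and using the algebraic identity $p(2-p)+(p-1)^2 = 1$ (which exactly kills the overall $d$-exponent) converts the bound on $I'$ into $O_\tau\big(\delta\,\ell^{1-1/p}\, k^{(p-1)^2/p}\big)$, the second term of the claim.

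The main obstacle is the dyadic estimate: a direct application of H\"{o}lder with dual exponents $1/(p-1)$ and $1/(2-p)$ only produces $\ell^{2-p}$, which is polynomially worse than $\ell^{1-1/p}$ whenever $p < (1+\sqrt{5})/2$. The dyadic decomposition succeeds because it simultaneously exploits the concavity of $x \mapsto x^{2-p}$ (which, combined with the local partial-sum constraint, makes the uniform-$\delta d$ profile of $a_j$ the worst case) and the multiplicative decay $|x_j| \leq j^{-1/p}$ across scales (which supplies the geometric weighting $2^{-r/p}$). The hypothesis $1+\tau \leq p \leq 2-\tau$ is essential here: the lower bound keeps $2^{1-1/p}-1$ bounded away from $0$ so that the geometric series converges with $\tau$-controlled rate, and the upper bound keeps $2-p$ bounded away from $0$ so that the Jensen step is non-degenerate.
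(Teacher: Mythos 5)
Your proof is correct, but the secondary-entries part ($I'$) takes a genuinely different route from the paper. For the primary part $I$, both you and the paper start by applying \lemmaref{holder_1} with weights $c_{(i,j)} = |x_j|$, then bound $\|c\|_{1/(2-p)}$; the paper does this via $|x_j|\le j^{-1/p}$ and the convergent series $\sum_j j^{-1/(p(2-p))}=O_\tau(1)$, while you simply invoke monotonicity of $\ell_q$ norms, $\|x\|_{1/(2-p)}\le\|x\|_p=1$ (valid since $1/(2-p)\ge p$). Your variant is slightly cleaner and in fact removes a $\tau$-dependent constant from this piece. For $I'$, however, the paper's argument is much more blunt: it bounds the inner factor by the crude $\ell_\infty$ estimate $\max_{(i,j)\in D_1}\big|\sum_{j'>j}A_{ij'}x_{j'}\big|\le \|x\|_1/d^{1/p}\le k^{1-1/p}/d^{1/p}$ and multiplies by the expansion-based $\ell_1$-mass bound $\sum_{(i,j)\in D_1}|A_{ij}x_j|\le O_\tau(\delta(d\ell)^{1-1/p})$, giving the second term in one line. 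You instead keep the $|x_j|\le j^{-1/p}$ decay inside the sum, apply H\"older column-by-column with the counts $a_j=|D_{1,j}|$, and then control $\sum_j j^{-1/p}a_j^{2-p}$ by a dyadic decomposition plus the prefix-sum constraint from \lemmaref{expansion}. Both arrive at exactly the same bound $O_\tau(\delta\ell^{1-1/p}k^{(p-1)^2/p})$, so your more refined exploitation of the per-column structure buys nothing here; still, the observation that the extremal profile $a_j\equiv\delta d$ saturates the constraint and the dyadic geometric series is a sound (if longer) alternative. One small imprecision in your discussion: the hypothesis $p\le 2-\tau$ is not actually what keeps the Jensen/H\"older step from degenerating (as $p\to 2$ that step becomes a trivial equality); the only place $\tau$ genuinely enters your argument is the lower bound $p\ge 1+\tau$ controlling the ratio $2^{1-1/p}$ of the geometric series, and the dependence on $p\le 2-\tau$ lives elsewhere in the theorem (e.g.\ in the choice $\ell=\Theta_\tau(k^{2-p})$).
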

    \begin{proof}
        We again decompose the sum according to the primary and secondary entries:
        \begin{multline}
            \label{decomposition_1}
            \sum_{i=1}^m \sum_{j \in B_b} \left(|A_{ij} x_j| \cdot \bigg|\sum_{j' = j + 1}^k A_{ij'} x_{j'}\bigg|^{p-1}\right)
            \\
            \leq
            \sum_{(i, j) \in L_b} \left(|A_{ij} x_j| \cdot \bigg|\sum_{j' = j + 1}^k A_{ij'} x_{j'}\bigg|^{p-1}\right)
            +
            \sum_{(i, j) \in D_b} \left(|A_{ij} x_j| \cdot \bigg|\sum_{j' = j + 1}^k A_{ij'} x_{j'}\bigg|^{p-1}\right)
            = : I + I'.
        \end{multline}

\noindent
        First, let us upper bound $I$.
        \begin{align}
            \sum_{(i, j) \in L_b} \left(|A_{ij} x_j| \cdot \bigg|\sum_{j' = j + 1}^k A_{ij'} x_{j'}\bigg|^{p-1}\right)
            &\overset{\text{\ding{172}}}\leq \bigg( \sum_{(i,j)\in L_b} |A_{ij} x_j|^{1/(2-p)} \bigg)^{2-p} \cdot \bigg(\sum_{(i, j) \in L_b} \sum_{j' = j + 1}^k |A_{ij'} x_{j'}| \bigg)^{p-1}
            \nonumber\\
            &\overset{\text{\ding{173}}}\leq \bigg( \sum_{(i,j)\in L_b} (A_{ij} \cdot j^{-1/p} )^{1/(2-p)} \bigg)^{2-p} \cdot S^{p-1}
            \nonumber\\
            &\overset{\text{\ding{174}}}\leq \frac{1}{d^{1/p}} \cdot \bigg|\sum_{j=1}^{\ell} d \cdot j^{-\frac{1}{p(2-p)}}\bigg|^{2 - p} \cdot S^{p-1}
            \overset{\text{\ding{175}}}\leq \frac{S^{p-1}}{d^{1/p}} \cdot O_{\tau}(d^{2-p}) .
            \label{leading_1}
        \end{align}
        Here, inequality \ding{172} follows from \lemmaref{holder_1}.
        Inequality \ding{173} follows from the fact that $|x_j| \leq \frac{1}{j^{1/p}}$, and \lemmaref{piotr_modified} (since $\sum_{(i, j) \in L_b} \sum_{j'=j+1}^k |A_{ij'} x_{j'}| = \sum_{(i, j) \in D_b \cup D'_b} A_{ij} |x_j| $).
        Inequality \ding{174} follows from the fact that there are at most $d$ primary entries in the $j$-th column of the matrix for each $j \in B_1 = \{1,\dots,\ell\}$.
        Inequality \ding{175} follows from the fact that $\sum_{j=1}^{\ell} j^{-\frac{1}{p(2-p)}} = O_\tau(1)$ when $\frac{1}{p(2-p)}\geq 1 + \Omega_{\tau}(1)$ (which is true because $1 + \tau \leq p \leq 2 - \tau$).

        Next, let us upper bound $I'$. We note that
        \begin{multline*}
\sum_{(i, j) \in D_b} \left(|A_{ij} x_j| \cdot \bigg|\sum_{j' = j + 1}^k A_{ij'} x_{j'}\bigg|^{p-1}\right)
\overset{\text{\ding{172}}}\leq \left( \sum_{(i, j) \in D_b} |A_{ij} x_j| \right) \cdot \max_{(i,j)\in D_b} \bigg|\sum_{j' = j + 1}^k A_{ij'} x_{j'}\bigg|^{p-1} \\
\overset{\text{\ding{173}}}\leq \left( \sum_{(i, j) \in D_b} |A_{ij} x_j| \right) \cdot \left(\frac{\|x\|_1}{d^{1/p}}\right)^{p-1}
\overset{\text{\ding{174}}}\leq \left( \sum_{(i, j) \in D_b} |A_{ij} x_j| \right) \cdot \left(\frac{k^{1-1/p}}{d^{1/p}}\right)^{p-1} .
        \end{multline*}
        Here, inequality \ding{172} is obvious, inequality \ding{173} follows from $\big|\sum_{j'=1}^k A_{ij} x_j \big| \leq \frac{1}{d^{1/p}} \|x\|_1$, and inequality \ding{174} follows from $\|x\|_1 \leq k^{1-1/p}$.
        Since $A$ expands $B_1$, by~\cite[Lemma 9]{bgiks-cgcua-08} we have%
        \footnote{The proof of this is similar to that of \lemmaref{piotr_modified}. In short, $\sum_{(i, j) \in D_b} |A_{ij} x_j| = d^{-1/p} \cdot \sum_{j=1}^{\ell} |x_j| \cdot \big|\set{i \mid (i,j) \in D_b}\big|$. Denoting by $a_j = \big|\set{i \mid (i,j) \in D_b}\big|$, we can rewrite this sum as $d^{-1/p} \cdot \sum_{j=1}^{\ell} a_j |x_j|$. Now, due to the expansion of $A$, we have $a_1+\cdots+a_t \leq \delta dt$ for every $t$; on the other hand, $|x_j|$ is non-increasing as $j$ increases. Overall, we conclude that this sum is maximized when $a_1=\cdots=a_t = \delta d$, and therefore, we obtain $\sum_{(i, j) \in D_b} |A_{ij} x_j| \leq \delta d^{1-1/p} \sum_{j=1}^\ell |x_j| $.}
        $$
            \sum_{(i, j) \in D_b} |A_{ij} x_j| \leq \delta d^{1 - 1/p} \|x_{B_1}\|_1
            \leq  \delta d^{1-1/p} \cdot \sum_{t=1}^\ell t^{-1/p} \leq \delta d^{1-1/p} \cdot \int_{0}^\ell x^{-1/p} \; dx = \frac{\delta(d\ell)^{1-1/p}}{1-1/p} \enspace,
        $$
        where the second inequality follows from $|x_j| \leq j^{-1/p}$.
        In sum, we have
        \begin{equation}
            \label{dots_1}
            \sum_{(i, j) \in D_b} \left(|A_{ij} x_j| \cdot \bigg|\sum_{j' = j + 1}^k A_{ij'} x_{j'}\bigg|^{p-1}\right)
            \leq
            O_{\tau}\left( \delta (d\ell)^{1-1/p} \cdot \left(\frac{k^{1-1/p}}{d^{1/p}}\right)^{p-1} \right)
            =
            O_{\tau}\left( \delta \ell^{1-1/p} k^{(p-1)^2/p} \right).
        \end{equation}
        Finally, combining~\equationref{decomposition_1},~\equationref{leading_1} and~\equationref{dots_1},
        we get the desired inequality.
    \end{proof}

\subsection{Proof of Theorem~\ref{rip_upper_12}}
    Finally, we are ready to prove \theoremref{rip_upper_12}. We begin with a simple claim.

    \begin{claim}
        \label{integral}
        One has
        $$
            \sum_{b=2}^{\lceil k/\ell \rceil} ((b-1)\ell+1)^{-1/p}
            \leq O_{\tau}\left(\frac{k^{1-1/p}}{\ell}\right) \enspace.
        $$
    \end{claim}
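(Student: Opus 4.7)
The plan is to reduce this to a straightforward integral comparison after a change of index. First I would reindex the sum by setting $c = b-1$, which rewrites it as $\sum_{c=1}^{\lceil k/\ell \rceil - 1} (c\ell + 1)^{-1/p}$. Since $1/p \in (1/2, 1)$ (because $1 + \tau \le p \le 2 - \tau$), the summand is decreasing in $c$, and I would first use the crude bound $(c\ell+1)^{-1/p} \le (c\ell)^{-1/p} = \ell^{-1/p} c^{-1/p}$ to pull the $\ell$-dependence out of the sum.

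Next I would bound the remaining sum $\sum_{c=1}^{\lceil k/\ell \rceil - 1} c^{-1/p}$ by an integral. Since $1/p < 1$, the function $x^{-1/p}$ is integrable at $0$, and monotone decreasing on $[1, \infty)$, so by the standard integral comparison we have
\[
\sum_{c=1}^{B} c^{-1/p} \;\le\; 1 + \int_1^{B} x^{-1/p} \, dx \;\le\; 1 + \frac{B^{1 - 1/p}}{1 - 1/p} \;=\; O_{\tau}\bigl(B^{1-1/p}\bigr),
\]
where the hidden constant depends on $1 - 1/p$, which is bounded away from $0$ because $p \ge 1 + \tau$. Applying this with $B = \lceil k/\ell\rceil$ (and noting $B \le 2k/\ell$ whenever $k \ge \ell$, which is the interesting case) yields $O_{\tau}\bigl((k/\ell)^{1-1/p}\bigr)$.

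Finally, I would multiply by the prefactor $\ell^{-1/p}$ and simplify: $\ell^{-1/p} \cdot (k/\ell)^{1-1/p} = k^{1-1/p} \cdot \ell^{-1/p - (1 - 1/p)} = k^{1-1/p} \cdot \ell^{-1}$, giving exactly $O_{\tau}(k^{1-1/p}/\ell)$ as claimed. There is no real obstacle here; the only subtlety is ensuring that the constant $1/(1 - 1/p)$ from the integral evaluation is absorbed into the $O_{\tau}(\cdot)$ notation, which is permitted by the convention established at the beginning of the section that $O_{\tau}$ hides factors depending on $\tau$ (and hence on $p$ in the specified range).
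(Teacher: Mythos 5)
Your proposal is correct and takes essentially the same approach as the paper: both bound the sum by an integral comparison, with the hidden constant depending only on $1-1/p \geq \tau/(1+\tau)$. The only cosmetic difference is that you drop the $+1$ to factor out $\ell^{-1/p}$ before comparing to an integral, whereas the paper keeps $((x-1)\ell+1)^{-1/p}$ inside the integral and performs the substitution $u=(x-1)\ell+1$ to extract the factor of $1/\ell$.
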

    \begin{proof}
        $$
            \sum_{b=2}^{\lceil k/\ell \rceil} ((b-1)\ell+1)^{-1/p}
            \leq
            \int_1^{\lceil k/\ell \rceil} \frac{dx}{((x-1)\ell+1)^{1/p}}
            \leq
            \frac{1}{\ell} \cdot \int_1^{2k} \frac{du}{u^{1/p}}
            \leq O_{\tau}\left(\frac{k^{1-1/p}}{\ell}\right)\enspace. \qedhere
        $$
    \end{proof}

\begin{proof}[Proof of \theoremref{rip_upper_12}]
    Combining~\equationref{eqn:main_p_1_2_ineq}, \lemmaref{second_block}, \lemmaref{first_block}, \claimref{integral}, and that $S = 3 \delta (dk)^{1-1/p}$, we get
    \begin{align*}
        &\;\quad \big|\|Ax\|_p^p - 1 \big|  \\
        &\leq
O(1) \cdot \sum_{b=1}^{\lceil k / \ell \rceil}
\sum_{i=1}^m \sum_{j \in B_b} \Bigg(|A_{ij} x_j| \cdot \bigg|\sum_{j' = j + 1}^k A_{ij'} x_{j'}\bigg|^{p-1}\Bigg)
\tag{using \equationref{eqn:main_p_1_2_ineq}}\\
&\leq \sum_{b=2}^{\lceil k / \ell \rceil} O_{\tau} \bigg(((3\delta d k)^{2-p} + (\delta d)^{2-p} \ell)\cdot \frac{S^{p-1}}{d^{1/p} \cdot ((b-1)\ell+1)^{1/p}} \bigg) +
O_{\tau}\bigg(d^{2-p} \cdot \frac{S^{p-1}}{d^{1/p}} + \delta \ell^{1-1/p} k^{(p-1)^2/p} \bigg)
\tag{using \lemmaref{second_block} and \lemmaref{first_block}} \\
        &\leq O_{\tau}\left(\left(\frac{k^{1-1/p}}{\ell} \cdot \Big((\delta d k)^{2-p} + (\delta d)^{2-p} \ell\Big) + d^{2-p} \right)\cdot \frac{S^{p-1}}{d^{1/p}} + \delta \ell^{1-1/p} k^{(p-1)^2/p}\right)
\tag{using \claimref{integral}}
        \\
        &=
        O_{\tau}\left(\left(\frac{k^{1-1/p}}{\ell} \cdot \big((\delta d k)^{2-p} + (\delta d)^{2-p} \ell\big) + d^{2-p} \right)\cdot \delta^{p-1} \cdot d^{p-2} \cdot k^{(p-1)^2/p} + \delta \ell^{1-1/p} k^{(p-1)^2/p}\right)
        \\
        &=
        O_{\tau}\big(
            \delta k \ell^{-1} +
            \delta k^{p-1} +
            \delta^{p-1} k^{(p-1)^2/p} +
            \delta \ell^{1-1/p} k^{(p-1)^2/p} \big)\enspace.
    \end{align*}
    We want this expression to be at most $\eps$.
    For this, we can set $\ell = \Theta_{\tau}\left(k^{2-p}\right) \geq 1$ (note that we can do so because $p < 2$), and
    $$\delta = \Theta_{\tau}\left( \min \Big\{\frac{\eps}{k^{p-1}}, \frac{\veps^{1/(p-1)}}{k^{(p-1)/p}}\Big\} \right) \enspace.$$
    Above, when deducing that $\delta \ell^{1-1/p} k^{(p-1)^2/p} \leq O(\veps)$, we have used the fact that $\veps<1$.

    Finally, from \lemmaref{expanders_exist} we can choose $d=O(\frac{\log n}{\delta})$ and get the following number of rows:
    $$
        m = O\left(\frac{dl}{\delta}\right) = O\left(\frac{\ell \cdot \log n}{\delta^2}\right)
        = O_{\tau}\left( \max \Big\{ k^p \frac{\log n}{\veps^2}, k^{4-2/p-p} \frac{\log n}{\veps^{2/(p-1)}} \Big\} \right) \enspace.
    $$
    This finishes the proof of \theoremref{rip_upper_12}.
    \end{proof}

\section{Dimension Lower Bounds}
\label{sec:lower-bound-dimension}
In this section, we prove dimension lower bounds for RIP-$p$ matrices.
\begin{theorem}
\label{thm:lower-bound-dimension}
Let $A$ be an $m\times n$ $(k,D)$-RIP-$p$ matrix with distortion $D>1$. Then,
\begin{align*}
&\text{If $1 < p < 2$,} &&\text{either} \quad m \geq \Omega\Big(\frac{(2-p)n}{pD^2}\Big)^{p/2}
	\quad \text{or} \quad m \geq \Omega\Big( \frac{k^p}{D^{2p/(2-p)}} \Big) \enspace, \\
&\text{If $p>2$,} &&\text{either} \quad m \geq \frac{n}{2k} \quad
	\text{or} \quad m \geq \Omega\Big( \frac{k^p}{D^{p^2/(p-2)}}\Big) \enspace.
\end{align*}
\end{theorem}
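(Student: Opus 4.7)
My plan is to establish three consequences of the $(k,D)$-RIP-$p$ property and then combine them through a case analysis on the magnitudes of entries of $A$. First, \emph{Fact (a)}: plugging $x = e_j$ into the RIP inequality yields $1 \leq \|Ae_j\|_p^p = \sum_i |A_{ij}|^p \leq D^p$, and summing over $j$ gives $n \leq \sum_{i,j} |A_{ij}|^p \leq D^p n$. Next, \emph{Fact (b)} (bounds on large entries): applying RIP to the sign pattern $x = \sum_{j \in T}\operatorname{sgn}(A_{i_0,j})\,e_j$, where $T$ indexes the $k$ largest entries of row $i_0$, gives $\sum_{j \in T}|A_{i_0,j}| \leq \|Ax\|_p \leq Dk^{1/p}$; hence the top-$k$ $\ell_1$-mass of every row is at most $Dk^{1/p}$, each row has at most $k$ entries exceeding the threshold $M := Dk^{1/p-1}$, and analogous H\"older arguments yield per-row estimates such as top-$k$ $\ell_p$-mass $\leq D^p k^{2-p}$ and top-$k$ $\ell_2$-mass $\leq D^2 k^{(2-p)/p}$.

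For \emph{Fact (c)} (Frobenius-norm estimate): apply RIP to the random-sign vector $x_\sigma = \sum_{j\in S}\sigma_j e_j$ with $|S|=k$ and $\sigma \in \{\pm1\}^k$ uniform, take expectation over $\sigma$, and invoke the appropriate direction of Khintchine's inequality (equivalently, Jensen on $t\mapsto t^{p/2}$, which is convex for $p\geq 2$ and concave for $p\leq 2$) row by row. Averaging over $S$, reapplying Jensen in the same direction, and invoking a power-mean inequality to pass from $\sum_i\|A_i\|_2^p$ to $\|A\|_F^2$ yields
\[
\|A\|_F^2 \leq O_p(D^2)\,n\,(k/m)^{2/p-1}\;\;(p>2),\qquad \|A\|_F^2 \geq \Omega_p\big(n\,(k/m)^{2/p-1}\big)\;\;(1<p<2).
\]
(For $p>2$ the lower Khintchine is paired with the upper RIP; for $1<p<2$ the upper Khintchine is paired with the lower RIP.)

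To combine (a), (b), (c), split $\sum_{i,j}|A_{ij}|^p$ into contributions from entries with $|A_{ij}|>M$ and $|A_{ij}|\leq M$; by (a), at least one part is $\geq n/2$. In the \emph{small-entries} case, use the pointwise inequality $|A_{ij}|^p \leq M^{p-2}|A_{ij}|^2$ for $p>2$, respectively $|A_{ij}|^p \geq M^{p-2}|A_{ij}|^2$ for $1<p<2$, combined with the Frobenius bound from (c); after standard exponent algebra this produces the second alternative, $m \geq \Omega(k^p/D^{p^2/(p-2)})$ or $m \geq \Omega(k^p/D^{2p/(2-p)})$ respectively. In the \emph{large-entries} case, use (b)'s per-row H\"older bounds: for $1<p<2$ the top-$k$ $\ell_2$-bound forces $\sum_{\text{large}} A_{ij}^2 \leq mD^2 k^{(2-p)/p}$, which together with the lower bound on $\|A\|_F^2$ from (c) yields $m^{2/p}\gtrsim n/D^2$ and hence $m\geq \Omega((n/D^2)^{p/2})$; for $p>2$ the parallel analysis produces the clean first alternative $m\geq n/(2k)$.

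The main obstacle will be the combining step's exponent bookkeeping: the threshold $M=Dk^{1/p-1}$ must be chosen so that the pointwise interpolation $|A|^p \lessapprox M^{p-2}|A|^2$ is tight, the \emph{sharper} per-row H\"older bounds in (b) (especially the top-$k$ $\ell_2$ bound) must be invoked in the right subcase to recover the $(n/D^2)^{p/2}$ term for $1<p<2$, and all exponents must line up through the Khintchine/Jensen/power-mean pipeline in (c) to deliver each of the closed-form alternatives in the theorem.
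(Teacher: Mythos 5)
Your proposal correctly identifies the three pillars of the paper's argument: (a) bounds on $\sum_i|A_{ij}|^p$ from $\|Ae_j\|_p$, (b) per-row control on entry magnitudes from signed sparse test vectors, and (c) a two-sided bound on $\|A\|_F^2$ from random-sign test vectors combined with Jensen and the power-mean inequality. These are exactly \lemmaref{lem:pthpowercolumn}, \lemmaref{lem:absolut}, and \lemmaref{lem:sqsum}, and your derivation of (c) matches the paper's step for step. However, the \emph{combining} step in your sketch has two genuine gaps.

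\emph{For $1<p<2$}, you propose to split $\sum_{i,j}|A_{ij}|^p \geq n$ by the magnitude threshold $M=Dk^{1/p-1}$ and argue that one part being $\geq n/2$ drives the dichotomy. But the quantity that must be split is the Frobenius \emph{lower} bound $\|A\|_F^2 \geq n(k/m)^{2/p-1}$ from (c), not the $\ell_p$ mass. Being in the ``small $\ell_p$-mass'' case tells you nothing about which part of $\|A\|_F^2$ is large, so the small-entries branch as you have written it does not close. The paper's route is cleaner: upper-bound $\|A\|_F^2$ by two explicit terms (the top-$k$-per-row part, which is $O\big(\tfrac{p}{2-p}mD^2k^{2/p-1}\big)$ via the per-rank bound $b_{i,t}\leq Dt^{1/p-1}$, and the tail part, which is $\leq nD^p(Dk^{1/p-1})^{2-p}$ via the interpolation $b^2\leq|b|^p M^{2-p}$), then note that the lower bound $n(k/m)^{2/p-1}$ must be dominated by one of these two terms. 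That one observation immediately gives both alternatives without any preliminary split of the $\ell_p$ mass. Relatedly, your top-$k$ $\ell_2$ and $\ell_p$ row estimates do not follow from H\"{o}lder applied to the top-$k$ $\ell_1$ bound alone; they require the per-rank bound $b_{i,t}\leq Dt^{1/p-1}$ for every $t\leq k$ (apply the sign-pattern argument at each scale $t$, as in \lemmaref{lem:absolut}) followed by summing.

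\emph{For $p>2$}, the claim that a ``parallel analysis produces the clean first alternative $m\geq n/(2k)$'' is incorrect for the threshold approach: that route would yield a $D$-dependent alternative of the form $m\geq \Omega_p(n/D^p)$. The paper gets $m\geq n/(2k)$ from a distinct counting argument that your sketch does not contain: removing the top-$k$ entries of each of the $m$ rows deletes at most $mk$ entries, hence the tail of $\sum_{i,j}|A_{ij}|^p$ still contains at least $n-mk$ \emph{complete columns} of $A$, each contributing $\geq 1$ by (a); therefore the tail $\ell_p$ mass is $\geq n-mk$. Only then does the dichotomy ``$m\geq n/(2k)$ or $n-mk\geq n/2$'' appear. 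Since the theorem as stated has $m\geq n/(2k)$ verbatim, this counting idea is a needed ingredient that is missing from your proposal.
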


\subsection{Three Auxiliary Lemmas}
We start with three auxiliary lemmas.
The first one establishes bounds on the sum of $p$-th powers of the entries of $A$.
\begin{lemma}
\label{lem:pthpowercolumn}
For any column $j \in [n]$, the following holds:
$1 \leq \sum_{i=1}^m |A_{i,j}|^p \leq D^p$.
\end{lemma}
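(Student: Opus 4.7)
The plan is to apply the $(k,D)$-RIP-$p$ property directly to the standard basis vector $e_j \in \Rbb^n$, which is certainly $k$-sparse since $k \geq 1$. Since $\|e_j\|_p = 1$, the RIP inequality immediately yields $1 \leq \|A e_j\|_p \leq D$.

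Next, I would observe that $A e_j$ is exactly the $j$-th column of $A$, so $\|A e_j\|_p^p = \sum_{i=1}^m |A_{i,j}|^p$. Raising the two-sided bound above to the $p$-th power gives the claimed inequality $1 \leq \sum_{i=1}^m |A_{i,j}|^p \leq D^p$.

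There is essentially no obstacle here; this is a one-line consequence of the definition of the $(k,D)$-RIP-$p$ property applied to a single basis vector, and the whole proof fits in two or three lines of display math.
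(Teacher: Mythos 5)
Your proof is correct and takes exactly the same route as the paper: apply the $(k,D)$-RIP-$p$ property to the basis vector $e_j$ (which is $1$-sparse, hence $k$-sparse) and note that $\|Ae_j\|_p^p = \sum_{i=1}^m |A_{i,j}|^p$.
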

\begin{proof}
$\sum_{i=1}^m |A_{i,j}|^p$ can be viewed as $\|Ae_j\|_p^p$.
Now, for each $j \in [n]$, due to the $(k,D)$-RIP-$p$ property, we have
	$1 \leq \|Ae_j\|_p^p \leq D^p$ completing the proof.
\end{proof}
Next, for any $i \in [m]$ and $t\in [n]$, we denote by $b_{i,t}$ be the $t$-th largest absolute value in row $i$, that is, the $t$-th largest value among $|A_{i,1}|,|A_{i,2}|,\ldots,|A_{i,n}|$. The following lemma establishes upper bounds on individual entries of $A$. Its proof relies on the RIP property for a $k$-sparse vector $x\in\{-1,1\}^n$, chosen so that its entries `match' the sign of the entries of $A$.
\begin{lemma}
\label{lem:absolut}
 We have
$$ \max\limits_{i \in [m]} b_{i,t}
				\leq \left\{
                  \begin{array}{ll}
                    D \cdot t^{1/p-1},
                    & \hbox{if $t\leq k$;} \\
                    D \cdot k^{1/p-1},
                    & \hbox{if $t>k$.}
                  \end{array}
                \right.$$
\end{lemma}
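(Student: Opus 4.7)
The plan is to exploit the RIP property by choosing, for each fixed row $i$, a carefully designed $k$-sparse test vector $x$ whose nonzero pattern matches the largest entries in row $i$ and whose signs match the signs of those entries. This way, the inner product between row $i$ of $A$ and $x$ is exactly a sum of absolute values, which gives a direct lower bound on the $i$-th coordinate of $Ax$ in terms of the $b_{i,t}$'s.

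First, I would treat the case $t \leq k$. Fix $i \in [m]$, let $T \subseteq [n]$ be the set of $t$ column indices achieving the top $t$ largest values of $|A_{i,j}|$, and define $x \in \{-1,0,1\}^n$ by $x_j = \sgn(A_{i,j})$ for $j \in T$ and $x_j = 0$ otherwise. Then $x$ is $k$-sparse (since $t \leq k$), and $\|x\|_p = t^{1/p}$. On the one hand, by the $(k,D)$-RIP-$p$ property, $\|Ax\|_p \leq D \|x\|_p = D \cdot t^{1/p}$. On the other hand,
\[
\|Ax\|_p \geq |(Ax)_i| = \Bigl|\sum_{j \in T} A_{i,j} \sgn(A_{i,j})\Bigr| = \sum_{j \in T} |A_{i,j}| \geq t \cdot b_{i,t},
\]
where the last inequality is because each of the $t$ summands is at least $b_{i,t}$ by definition. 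Combining the two bounds gives $t \cdot b_{i,t} \leq D \cdot t^{1/p}$, hence $b_{i,t} \leq D \cdot t^{1/p-1}$, as claimed.

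Next, for the case $t > k$, I would simply observe that $b_{i,t} \leq b_{i,k}$ by monotonicity of the sorted absolute values, and then apply the previously established bound at the threshold $t = k$ to conclude $b_{i,k} \leq D \cdot k^{1/p-1}$. Taking a maximum over $i \in [m]$ on both sides yields the claimed uniform bounds.

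There is no real obstacle here; the only thing to be mildly careful about is that the construction of $x$ uses only the $(k,D)$-RIP-$p$ property in the easy direction (upper bound $\|Ax\|_p \leq D\|x\|_p$), so we never need the lower-distortion side of RIP. Also, the sign-matching trick is what makes the lower bound $|(Ax)_i| \geq \sum_{j\in T} |A_{i,j}|$ tight; without it, cancellations could occur in the inner product and the bound would degrade.
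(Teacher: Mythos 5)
Your proof is correct and follows essentially the same approach as the paper: pick a sign-matching $t$-sparse test vector, use the upper side of RIP to bound $\|Ax\|_p$, and lower bound $\|Ax\|_p$ by the single coordinate $|(Ax)_i| = \sum_{j\in T}|A_{i,j}| \geq t\,b_{i,t}$, which is the same monotonicity step the paper uses. The $t>k$ case by monotonicity is also identical.
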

\begin{proof}
We first prove the lemma for any $t\leq k$. Consider any fixed row $i' \in [m]$.
Let $x$ be a $t$-sparse vector such that $x_j = \sgn(A_{i',j})$
	if $A_{i',j}$ is one of the $b_{i',1},\ldots,b_{i',t}$
	and $x_j = 0$ otherwise.
Then, the RIP-$p$ property implies that
$$\|Ax\|_p^p = \sum_{i=1}^m \Big|\sum_{j =1}^n A_{i,j} x_j \Big|^{p} \leq D^p \cdot t \enspace.$$
In particular, since it is the sum over $i$ of $m$ non-negative terms, the above inequality also implies that for any specific row $i' \in [m]$:
$$\Big|\sum_{j = 1}^n A_{i',j} x_j \Big|^{p}
= \Big(\sum_{t = 1}^t \big|b_{i',t}\big|\Big)^{p}
\leq D^p \cdot t
\implies
\sum_{t = 1}^t \big|b_{i',t}\big|
\leq D \cdot t^{1/p} \enspace.$$
Since $|b_{i',t}|$ does not increase as $t$ increases, we get $|b_{i',t}| \leq \frac{D \cdot t^{1/p}}{t} = D \cdot t^{1/p-1}$. This finishes the proof for $t = 1,2,\dots,k$.
For $t > k$, we have $|b_{i',t}| \leq |b_{i',k}| \leq D \cdot k^{1/p-1}$.
\end{proof}

Our third lemma below establishes a lower (or upper) bound on the sum of squares of the entries of $A$. The proof of this lemma relies on the RIP property $\|Ax\|_p \approx \|x\|_p$ examined upon a \emph{random} $k$-sparse vector $x$ sampled from the uniform distribution over $\{-1,1\}^k$.
\begin{lemma}
\label{lem:sqsum}
If $1 < p \leq 2$ then $\sum_{i,j} A_{i,j}^2 \geq n \big(\frac{k}{m}\big)^{2/p - 1}$; if $p \geq 2$ then $\sum_{i,j} A_{i,j}^2 \leq n D^2 \big(\frac{k}{m}\big)^{2/p - 1}$.
\end{lemma}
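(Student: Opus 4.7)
The plan is to probe $A$ with random sign vectors supported on a fixed $k$-subset, and exploit Jensen's inequality applied in the correct direction for each regime of $p$. Fix $S \subseteq [n]$ with $|S|=k$ and let $x$ be a uniform random vector in $\{-1,+1\}^S$ extended by zero outside $S$. Then, using that the $x_j$ are independent with mean zero and the cross-terms vanish,
$$
\EX_x\bigl[(Ax)_i^2\bigr] = \sum_{j\in S} A_{i,j}^2 =: c_i^{(S)} \enspace.
$$

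For the case $1 < p \leq 2$, I would apply Jensen's inequality with the concave map $t \mapsto t^{p/2}$ to get $\EX_x|(Ax)_i|^p \leq (c_i^{(S)})^{p/2}$ for every row $i$. Combined with the RIP lower bound $\|Ax\|_p^p \geq \|x\|_p^p = k$ and summing over $i$,
$$
k \leq \EX_x\|Ax\|_p^p \leq \sum_{i=1}^m (c_i^{(S)})^{p/2} \enspace.
$$
A second application of Jensen (same concave $t\mapsto t^{p/2}$, since $p/2\leq 1$) yields $\sum_i (c_i^{(S)})^{p/2}\leq m^{1-p/2}\bigl(\sum_i c_i^{(S)}\bigr)^{p/2}$, so after rearranging,
$$
\sum_{j\in S}\sum_{i=1}^m A_{i,j}^2 = \sum_{i=1}^m c_i^{(S)} \geq k^{2/p}\,m^{1-2/p} \enspace.
$$
Since this holds for every $k$-subset $S$, choose $S$ to be the $k$ columns minimising $\sum_i A_{i,j}^2$ (equivalently, average over a uniformly random $S$ of size $k$); then $\sum_{j\in S}\sum_i A_{i,j}^2 \leq \tfrac{k}{n}\sum_{i,j} A_{i,j}^2$, which gives the desired bound $\sum_{i,j} A_{i,j}^2 \geq n(k/m)^{2/p-1}$.

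For $p \geq 2$, both Jensen applications reverse because $t\mapsto t^{p/2}$ is now convex: $\EX_x|(Ax)_i|^p \geq (c_i^{(S)})^{p/2}$ and $\sum_i(c_i^{(S)})^{p/2}\geq m^{1-p/2}\bigl(\sum_i c_i^{(S)}\bigr)^{p/2}$. Combining with the RIP \emph{upper} bound $\|Ax\|_p^p \leq D^p\|x\|_p^p = D^p k$ and choosing $S$ to be the $k$ columns with \emph{largest} squared column norm (so that $\sum_{j\in S}\sum_i A_{i,j}^2 \geq \tfrac{k}{n}\sum_{i,j}A_{i,j}^2$), I would obtain $\sum_{i,j} A_{i,j}^2 \leq D^2 n(k/m)^{2/p-1}$.

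The proof is essentially mechanical once the setup is in place; the one point requiring care is bookkeeping. There are three nested inequalities whose directions all depend on whether $p \leq 2$ or $p \geq 2$ (the Jensen step relating $\EX|(Ax)_i|^p$ to $\EX(Ax)_i^2$, the power-mean step relating $\sum_i c_i^{p/2}$ to $(\sum_i c_i)^{p/2}$, and the choice of which extremal $S$ to select), and the entire argument collapses if any one of them points the wrong way. I do not anticipate any genuine technical obstacle beyond keeping these signs consistent.
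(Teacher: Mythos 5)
Your argument is correct and is essentially the same as the paper's: random $\pm1$ probe on a fixed $k$-subset, two applications of the power-mean (Jensen) inequality whose directions flip with $p\lessgtr 2$, then a final step relating the restricted sum $\sum_{j\in S}\sum_i A_{i,j}^2$ to the global sum. The only cosmetic difference is that the paper closes by averaging over all $\binom{n}{k}$ subsets $U$, whereas you pick the extremal $S$, but as you note yourself these are equivalent.
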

\begin{proof}
Let $U$ be any set of $k$ distinct indices in $[n]$ (i.e., columns).
Let $\mathcal{X}$ be the distribution of vectors $x \in \mathbb{R}^n$ such that
	$x_i = 0$ if $i \not \in U$ and
	otherwise $x_i$ is an independent random variable attaining values $1$ and $-1$ with probability $1/2$ each.
The $(k,D)$-RIP-$p$ property implies that $k \leq \| Ax \|_p^p \leq D^p k$.

Let us now evaluate the following expectation:
\begin{equation}
\label{eq:expnormp}
\EX\limits_x[ \| Ax \|_p^p ]
	= \sum\limits_{i=1}^m \EX\limits_x\big[ |\langle A_i,x \rangle| ^p\big]
	= \sum\limits_{i=1}^m \EX\limits_x\big[\big(\langle A_i,x \rangle^2\big)^{p/2}\big]
	= \sum\limits_{i=1}^m \sum\limits_{x \in \mathcal{X}}\frac{(\langle A_i,x \rangle^2)^{p/2}}{2^k}
\end{equation}
Comparing the $(p/2)$-th power mean to the arithmetic mean, we have that if $p \leq 2$ then
	$\sum_{x \in \mathcal{X}}\frac{(\langle A_i,x \rangle^2)^{p/2}}{2^k}
		\leq \big(\frac{\sum_{x \in \mathcal{X}} \langle A_i,x \rangle^2}{2^k}\big)^{p/2}$,
and if $p \geq 2$ then
	$\sum_{x \in \mathcal{X}}\frac{(\langle A_i,x \rangle^2)^{p/2}}{2^k}
		\geq \big(\frac{\sum_{x \in \mathcal{X}} \langle A_i,x \rangle^2}{2^k}\big)^{p/2}$.
Finally, because of the way we have defined the distribution $\mathcal{X}$,
	we have $\sum_{x \in \mathcal{X}} \langle A_i,x \rangle^2 = 2^k \cdot \sum_{j \in U} A_{i,j}^2$.

Combining~\eqref{eq:expnormp} with the above pieces for $1<p \leq 2$ gives:
$$k \leq \EX\limits_x[ \| Ax \|_p^p ] \leq \sum\limits_{i=1}^m \big(\sum\limits_{j \in U} A_{i,j}^2 \big)^{p/2} \enspace.$$
\noindent On the other hand, for $p\geq 2$, we get
$$\sum\limits_{i=1}^m \big(\sum\limits_{j \in U} A_{i,j}^2 \big)^{p/2} \leq \EX\limits_x[ \| Ax \|_p^p ] \leq D^p k \enspace.$$

Let us first focus on the case of $1<p\leq 2$.
By again comparing the $(p/2)$-th power mean to the arithmetic mean we can extend our inequality to
\begin{equation}
\label{eq:sqpless1}
\frac{k}{m} \leq \frac{\sum_{i=1}^m \big(\sum_{j \in U} A_{i,j}^2\big)^{p/2}}{m} \leq
\Big(\frac{\sum_{i=1}^m  \sum_{j \in U} A_{i,j}^2}{m}\Big)^{p/2}
\implies \sum_{i=1}^m  \sum_{j \in U} A_{i,j}^2 \geq \frac{k^{2/p}}{m^{2/p-1}} \enspace.
\end{equation}
Enumerating over all possible choices of indices $U$, we get the desired result:
$$ {n \choose k} \cdot \big( \sum_{i=1}^m \sum_{j=1}^n A_{i,j}^2 \big) \cdot \frac{k}{n}
= \sum_U \sum_{i=1}^m \sum_{j \in U} A_{i,j}^2
\geq {n \choose k} \cdot \frac{k^{2/p}}{m^{2/p-1}} \enspace.$$
If $p\geq 2$ then analogously to~\eqref{eq:sqpless1} we get the following inequality:
$$\Big(\frac{\sum_{i=1}^m  \sum_{j \in U} A_{i,j}^2}{m}\Big)^{p/2} \leq
\frac{\sum_{i=1}^m \big(\sum_{j \in U} A_{i,j}^2\big)^{p/2}}{m} \leq
\frac{D^p k}{m}
\implies \sum_{i=1}^m  \sum_{j \in U} A_{i,j}^2 \leq \frac{D^2 k^{2/p}}{m^{2/p-1}} \enspace,$$
and after enumerating over all possible sets of indices $U$ gives:
$\sum_{i=1}^m \sum_{j=1}^n A_{i,j}^2 \leq nD^2 \big(\frac{k}{m}\big)^{2/p-1}$.
\end{proof}

\subsection{Proof of Theorem~\ref{thm:lower-bound-dimension}}
We first focus on the case of $1<p<2$.
Using~\lemmaref{lem:absolut} we can evaluate
\begin{equation}
\label{eq:firstkb}
\sum_{i=1}^m \sum_{j=1}^k b_{i,j}^2
\leq \sum_{i=1}^m \sum_{j=1}^k \left(D \cdot j^{1/p-1}\right)^2
= m D^2 \sum_{j=1}^k j^{2/p-2}
\leq m D^2 \int_{j=0}^{k} j^{2/p-2} dj
\leq O\Big(\frac{p}{2-p} m D^2 k^{2/p-1}\Big)
\end{equation}
\noindent and for the remaining terms:
\begin{align}
\sum_{i=1}^m \sum_{j=k+1}^n b_{i,j}^2
&\overset{\text{\ding{172}}}\leq \sum_{i=1}^m \sum_{j=k+1}^n |b_{i,j}|^p
	\left(D \cdot k^{1/p-1}\right)^{2-p}
\leq \left(\sum_{i=1}^m \sum_{j=1}^n |b_{i,j}|^p\right)
	\left(D \cdot k^{1/p-1}\right)^{2-p} \nonumber \\
&\overset{\text{\ding{173}}}= \left(\sum_{i=1}^m \sum_{j=1}^n |A_{i,j}|^p\right)
	\left(D \cdot k^{1/p-1}\right)^{2-p}
\overset{\text{\ding{174}}}\leq n \cdot D^p \left(D \cdot k^{1/p-1}\right)^{2-p} \enspace, \label{eq:lineusel1}
\end{align}
where \ding{172} follows from \lemmaref{lem:absolut}, \ding{173} follows from the definition of $b_{i,j}$, and \ding{174} follows from~\lemmaref{lem:pthpowercolumn}.
Adding~\eqref{eq:firstkb} and~\eqref{eq:lineusel1} gives:
$$\sum_{i,j} A_{i,j}^2
= \sum_{i=1}^m \sum_{j=1}^n b_{i,j}^2
\leq O\Big(\frac{p}{2-p} m D^2 k^{2/p-1}\Big) +  n \cdot D^p \left(D \cdot k^{1/p-1}\right)^{2-p}$$
\noindent and using~\lemmaref{lem:sqsum} we conclude that:
\begin{align*}
n \cdot \Big(\frac{k}{m}\Big)^{2/p - 1} \leq
O\Big(\frac{p}{2-p} m D^2 k^{2/p-1}\Big) +  n \cdot D^p \left(D \cdot k^{1/p-1}\right)^{2-p} \enspace.
\end{align*}
Therefore, either $n \cdot D^p \big(D \cdot k^{1/p-1}\big)^{2-p}$
or $\frac{p}{2-p}m D^2 k^{2/p-1}$ must be at least $\Omega \big( n \cdot \big(\frac{k}{m}\big)^{2/p - 1} \big)$.
These two cases exactly correspond (after rearranging terms) to the desired inequalities.

(We remark here that when $p=2$, the factor $\big(\frac{k}{m}\big)^{2/p-1}$ on the left hand side becomes $1$, and therefore no interesting lower bound on $m$ can be deduced.)

Next, we focus on the case of $p>2$.
Let us compute again using~\lemmaref{lem:absolut}:
\begin{align*}
\sum_{i=1}^m \sum_{j=k+1}^n b_{i,j}^2
&\geq \sum_{i=1}^m \sum_{j=k+1}^n |b_{i,j}|^p
	\left(D \cdot k^{1/p-1}\right)^{2-p}
\geq \left(\sum_{i=1}^m \sum_{j=k+1}^n |b_{i,j}|^p\right)
	\left(D \cdot k^{1/p-1}\right)^{2-p} \enspace.
\end{align*}

Now, recall that the entries $\{b_{i,j}\}_{i,j}$ are by definition renamed from the entries of $A$, so the summation $\sum_{i=1}^m \sum_{j=k+1}^n |b_{i,j}|^p$ is missing precisely $km$ entries from $A$.
Therefore, this sum contains the $p$-th powers of all of the entries from at least $n-mk$ full columns of $A$, which is at least $n-mk$ (since any full column $j$ of $A$, by~\lemmaref{lem:pthpowercolumn}, has its $p$-th power summing up to at least $1$).
Plugging this into the above inequality we get:
$$\sum_{i=1}^m \sum_{j=k+1}^n b_{i,j}^2
	\geq (n-km) \left(D \cdot k^{1/p-1}\right)^{2-p} \enspace.$$
\noindent On the other hand,
$$\sum_{i,j} A_{i,j}^2
= \sum_{i=1}^m \sum_{j=1}^n b_{i,j}^2
\geq \sum_{i=1}^m \sum_{j=k+1}^n b_{i,j}^2
\geq (n-km) \left(D \cdot k^{1/p-1}\right)^{2-p} \enspace,$$
\noindent and using~\lemmaref{lem:sqsum} we conclude that
$$n D^2 \cdot \Big(\frac{k}{m}\Big)^{2/p - 1}
\geq (n-km) \left(D \cdot k^{1/p-1}\right)^{2-p} \enspace.$$
Now, we either have $m \geq \frac{n}{2k}$ or
\begin{align*}
D^2 \cdot \Big(\frac{k}{m}\Big)^{2/p - 1} \geq \Omega \Big( \left(D \cdot k^{1/p-1}\right)^{2-p} \Big)
&\implies \big(\frac{m}{k}\big)^{(p-2)/p}
	\geq \Omega \Big( \frac{k^{(p-1)(p-2)/p}}{D^p} \Big) \\
&\implies
m \geq \Omega \Big( \frac{k^p}{D^{p^2/(p-2)}}\Big) \enspace.
\end{align*}
Again, we emphasize that we used the strict inequality $p>2$ in the above implication.
\qed

\section{Column Sparsity Lower Bound}
Below we provide a simple lower bound of $\Omega(k^{p-1})$ on the column sparsity of RIP-$p$ matrices.
The proof is a simple extentsion of an argument from~\cite{c-sgccs-10}.
We remark that we are aware of an alternative proof of a slightly stronger lower bound that extends the argument
of Nelson and Nguy$\tilde{\hat{\mbox{e}}}$n~\cite{nn-slbdr-13},
but since the better bound does not seem to be optimal, and the argument is much more complicated,
we decided not to include its proof here.
\begin{theorem}
    Let $A$ be an $m \times n$ matrix with $(k, D)$-RIP-$p$ property and column sparsity $d$.
    Then, either $m > n / k$, or $d \geq k^{p-1} / D^p$.
\end{theorem}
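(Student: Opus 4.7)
The plan is to extend Chandar's argument via a two-step pigeonhole combined with a careful sign choice. Assume throughout that $m \leq n/k$; the goal is to deduce $d \geq k^{p-1}/D^p$.

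First, I would apply the RIP lower bound to each standard basis vector $e_j$ to get $\|A e_j\|_p \geq 1$, i.e.\ $\sum_i |A_{ij}|^p \geq 1$ for every column $j$. Since column $j$ has at most $d$ non-zero entries, the power-mean (or pigeonhole) inequality forces at least one entry of the column to have absolute value at least $d^{-1/p}$. Call any entry $A_{i,j}$ with $|A_{i,j}| \geq d^{-1/p}$ \emph{heavy}; the above observation yields at least $n$ heavy entries in $A$ (at least one per column).

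Second, since these $n$ heavy entries are distributed over only $m$ rows, averaging produces some row $r$ containing at least $n/m \geq k$ heavy entries. Pick $k$ such column indices $j_1, \ldots, j_k$, and define the $k$-sparse sign vector $x \in \Rbb^n$ by $x_{j_\ell} = \sgn(A_{r,j_\ell})$ for $\ell \in [k]$ and $x_j = 0$ elsewhere. Then $\|x\|_p = k^{1/p}$, and the single coordinate $r$ of $Ax$ already satisfies
$$
    |(Ax)_r| \;=\; \sum_{\ell=1}^{k} |A_{r,j_\ell}| \;\geq\; k \cdot d^{-1/p} \enspace,
$$
by the definition of ``heavy.'' On the other hand, the RIP upper bound gives $\|Ax\|_p \leq D \|x\|_p = D k^{1/p}$, and obviously $\|Ax\|_p \geq |(Ax)_r|$.

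Combining these two estimates yields $k \cdot d^{-1/p} \leq D k^{1/p}$, which rearranges to $d \geq k^{p-1}/D^p$, completing the argument. I do not expect any real obstacle here: the entire content is that Chandar's $\ell_2$ argument adapts cleanly to the $\ell_p$ setting once one uses the threshold $d^{-1/p}$ for heavy entries and $D k^{1/p}$ for the RIP upper bound, and the two-step pigeonhole converts the column sparsity $d$ into a single row with $k$ jointly heavy entries whose signs can be aligned.
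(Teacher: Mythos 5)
Your argument is correct and is essentially the same as the paper's proof: both apply the RIP lower bound to basis vectors to find one ``heavy'' entry of size at least $d^{-1/p}$ per column, pigeonhole the $n$ heavy entries into $m$ rows to find a row with $k$ of them, build a $k$-sparse sign vector aligned with that row, and compare the resulting coordinate lower bound $k\,d^{-1/p}$ against the RIP upper bound $D\,k^{1/p}$. No substantive difference.
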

\begin{proof}
    Assume that $m \leq n/k$.
    Since for every basis vector $e_j \in \Rbb^n$ we have $\|Ae_j\|_p \geq 1$,
    it implies that for every column of $A$ there is an entry with absolute value at least $d^{-1/p}$.
    Thus, there exists a row with at least $n / m \geq k$ such entries.
    Without loss of generality, let us assume that this is the first row, and the entries are located in columns from $1$ to $k$.
    There exists a $k$-sparse vector $x$ such that
    \begin{itemize}[nolistsep, topsep=3pt]
        \item for every $1 \leq j \leq k$ we have $x_j = \sgn(A_{1j}) \in \set{-1, 1}$;
        \item for every $j > k$ we have $x_j = 0$;
        \item the first coordinate of the vector $Ax$ is at least $\frac{k}{d^{1/p}}$.
    \end{itemize}
    By the RIP property, we have $\frac{k}{d^{1/p}} \leq \|Ax\|_p \leq D \cdot \|x\|_p = D \cdot k^{1/p}$. Thus,
    $d \geq k^{p-1} / D^p$.
\end{proof}

\section*{Acknowledgments}

We thank Piotr Indyk for encouraging us to work on this project and for many valuable conversations.
We are grateful to Piotr Indyk and Ronitt Rubinfeld for teaching ``Sublinear Algorithms'', where
parts of this work appeared as a final project.
We thank Art\={u}rs Ba\v{c}kurs, Chinmay Hegde, Gautam Kamath,
Sepideh Mahabadi, Jelani Nelson, Huy Nguy$\tilde{\hat{\mbox{e}}}$n, Eric Price
and Ludwig Schmidt
for useful conversations and feedback. Thanks to Leonid Boytsov for pointing us to~\cite{n-iltt1-69,n-iltt2-69}.
We are grateful to anonymous referees for pointing out some relevant literature.
The first author is partly supported by a Simons Graduate Student Award under grant no. 284059.

{
    \bibliographystyle{alpha}
    \bibliography{desk}

\newcommand{\etalchar}[1]{$^{#1}$}
\begin{thebibliography}{BDDW08}

\bibitem[AGMS14]{AGMS2014}
Zeyuan {Allen-Zhu}, Rati {Gelashvili}, Silvio {Micali}, and Nir {Shavit}.
\newblock {Johnson-Lindenstrauss Compression with Neuroscience-Based
  Constraints}.
\newblock {\em ArXiv e-prints}, abs/1411.5383, November 2014.
\newblock Also appeared in the Proceedings of the National Academy of Sciences
  of the USA, vol 111, no 47.

\bibitem[BDDW08]{bddw-sprip-08}
Richard Baraniuk, Mark Davenport, Ronald DeVore, and Michael Wakin.
\newblock A simple proof of the restricted isometry property for random
  matrices.
\newblock {\em Constructive Approximation}, 28(3):253--263, 2008.

\bibitem[BGI{\etalchar{+}}08]{bgiks-cgcua-08}
Radu Berinde, Anna~C. Gilbert, Piotr Indyk, Howard Karloff, and Martin~J.
  Strauss.
\newblock Combining geometry and combinatorics: A unified approach to sparse
  signal recovery.
\newblock In {\em Proceedings of the 46th Annual Allerton Conference on
  Communication, Control, and Computing (Allerton~'2008)}, pages 798--805,
  2008.

\bibitem[BMRV02]{bmrv-bo-02}
Harry Buhrman, Peter~Bro Miltersen, Jaikumar Radhakrishnan, and Srinivasan
  Venkatesh.
\newblock Are bitvectors optimal?
\newblock {\em SIAM Journal on Computing}, 31(6):1723--1744, 2002.

\bibitem[Can08]{c-ripii-08}
Emmanuel~J. Cand\`{e}s.
\newblock The restricted isometry property and its implications for compressed
  sensing.
\newblock {\em Comptes Rendus Mathematique}, 346(9--10):589--592, 2008.

\bibitem[Cha10]{c-sgccs-10}
Venkat~B. Chandar.
\newblock {\em Sparse Graph Codes for Compression, Sensing, and Secrecy}.
\newblock PhD thesis, Massachusetts Institute of Technology, 2010.

\bibitem[CRT06]{crt-ssrii-06}
Emmanuel Cand\`{e}s, Justin Romberg, and Terence Tao.
\newblock Stable signal recovery from incomplete and inaccurate measurements.
\newblock {\em Communications on Pure and Applied Mathematics},
  59(8):1207--1223, 2006.

\bibitem[CT05]{ct-dlp-05}
Emmanuel Cand\`{e}s and Terence Tao.
\newblock Decoding by linear programming.
\newblock {\em IEEE Transactions on Information Theory}, 51(12):4203--4215,
  2005.

\bibitem[DIPW10]{dipw-lbsr-10}
Khanh {Do Ba}, Piotr Indyk, Eric Price, and David~P. Woodruff.
\newblock Lower bounds for sparse recovery.
\newblock In {\em Proceedings of the 21st Annual ACM-SIAM Symposium on Discrete
  Algorithms (SODA~'2010)}, pages 1190--1197, 2010.

\bibitem[Don06]{d-cs-06}
David~L. Donoho.
\newblock Compressed sensing.
\newblock {\em IEEE Transactions on Information Theory}, 52(4):1289--1306,
  2006.

\bibitem[Efr65]{e-ippfe-65}
Bradley Efron.
\newblock Increasing properties of {P}{\'{o}}lya frequency functions.
\newblock {\em Annals of Mathematical Statistics}, 36(1):272--279, 1965.

\bibitem[GI10]{gi-srusm-10}
Anna~C. Gilbert and Piotr Indyk.
\newblock Sparse recovery using sparse matrices.
\newblock {\em Proceedings of IEEE}, 98(6):937--947, 2010.

\bibitem[GSTV07]{gstv-osfaf-07}
Anna~C. Gilbert, Martin~J. Strauss, Joel~A. Tropp, and Roman Vershynin.
\newblock One sketch for all: fast algorithms for compressed sensing.
\newblock In {\em Proceedings of the 39th Annual {ACM} Symposium on Theory of
  Computing (STOC~2007)}, pages 237--246, 2007.

\bibitem[IR13]{ir-mbrm-13}
Piotr Indyk and Ilya Razenshteyn.
\newblock On model-based {RIP}-1 matrices.
\newblock In {\em Proceedings of the 40th International Colloquium on Automata,
  Languages, and Programming (ICALP~'2013)}, pages 564--575, 2013.

\bibitem[JDP83]{jp-narva-83}
Kumar Joag-Dev and Frank Proschan.
\newblock Negative association of random variables with applications.
\newblock {\em Annals of Statistics}, 11(1):286--295, 1983.

\bibitem[KBG{\etalchar{+}}10]{kbgsw-pspms-10}
Raghunandan~M. Kainkaryam, Angela Bruex, Anna~C. Gilbert, John Schiefelbein,
  and Peter~J. Woolf.
\newblock pool{MC}: Smart pooling of m{RNA} samples in microarray experiments.
\newblock {\em BMC Bioinformatics}, 11(299), 2010.

\bibitem[Lat97]{l-emsir-97}
Rafa{\l} Lata{\l}a.
\newblock Estimation of moments of sums of independent real random variables.
\newblock {\em Annals of Probability}, 25(3):1502--1513, 1997.

\bibitem[LMN05]{lmn-msdsa-05}
James~R. Lee, Manor Mendel, and Assaf Naor.
\newblock Metric structures in {$L_1$}: dimension, snowflakes, and average
  distortion.
\newblock {\em European Journal of Combinatorics}, 26(8):1180--1190, 2005.

\bibitem[Mut05]{m-dsaa-05}
S.~Muthukrishnan.
\newblock {\em Data Streams: Algorithms and Applications}, volume~1 of {\em
  Foundations and Trends in Theoretical Computer Science}.
\newblock 2005.

\bibitem[Nac10]{nachin}
Mergen Nachin.
\newblock Lower bounds on the column sparsity of sparse recovery matrices.
\newblock undergraduate thesis, MIT, 2010.

\bibitem[Nag69a]{n-iltt1-69}
A.V. Nagaev.
\newblock Integral limit theorems taking large deviations into account when
  {C}ram\'{e}r's condition does not hold. {I}.
\newblock {\em Theory of Probability and Its Applications}, 14(1):51--64, 1969.

\bibitem[Nag69b]{n-iltt2-69}
A.V. Nagaev.
\newblock Integral limit theorems taking large deviations into account when
  {C}ram\'{e}r's condition does not hold. {II}.
\newblock {\em Theory of Probability and Its Applications}, 14(2):193--208,
  1969.

\bibitem[NN13]{nn-slbdr-13}
Jelani Nelson and Huy~L. Nguy$\tilde{\hat{\mbox{e}}}$n.
\newblock Sparsity lower bounds for dimensionality reducing maps.
\newblock In {\em Proceedings of the 45th ACM Symposium on the Theory of
  Computing (STOC~'2013)}, pages 101--110, 2013.

\bibitem[PS97]{ps-rdece-97}
Alessandro Panconesi and Aravind Srinivasan.
\newblock Randomized distributed edge coloring via an extension of the
  {C}hernoff-{H}oeffding bounds.
\newblock {\em SIAM Journal on Computing}, 26(2):350--368, 1997.

\bibitem[Rau10]{rauhut2010compressive}
Holger Rauhut.
\newblock Compressive sensing and structured random matrices.
\newblock {\em Theoretical foundations and numerical methods for sparse
  recovery}, 9:1--92, 2010.

\end{thebibliography}
}

\appendix
\iflong\else\counterwithin{theorem}{section}\fi

\section{RIP Matrices and Stable Sparse Recovery}
\label{appendix_a}

In this section we extend the main result from \cite{crt-ssrii-06}
to the case of the general $\ell_p$ norms.
Namely, we show that RIP-$p$ matrices for $p > 1$ give rise to the polynomial-time stable sparse recovery
with $\ell_p / \ell_1$ guarantee and approximation factors $C_1 = O(k^{-1 + 1/p})$ and $C_2 = O(1)$.

Suppose that we are given a sketch $y = Ax + e \in \Rbb^m$ for a signal $x \in \Rbb^n$,
where $A \in \Rbb^{m \times n}$, and $\|e\|_p \leq \varepsilon$.
Our goal is to recover from $y$ a good approximation $\widehat{x}$ to $x$.
One of the standard ways to accomplish this is to solve the following \emph{$\ell_1$-minimization} convex program:
\begin{equation}
    \label{l1_minimization}
    \min_{\widehat{x} \in \Rbb^n} \|\widehat{x}\|_1 \quad \text{such that} \quad \|A \widehat{x} - y\|_p \leq \eps \enspace.
\end{equation}

Let $S \subseteq [n]$ be the set of $k$ largest (in absolute value) coordinates of $x$, and $h \defeq \widehat{x} - x$ be the error vector. For a parameter $\alpha > 0$ to be chosen later, we consider the following partition of $[n] \setminus S$:
let $T_0 \subseteq [n] \setminus S$ be the set of $\alpha k$ largest (in absolute value) coordinates of $h$,
let $T_1 \subseteq [n] \setminus (S \cup T_0)$ be the set of $\alpha k$ next largest coordinates, and so on. We state and prove some simple claims first that are true for \emph{every} measurement matrix $A$.

\begin{claim}
    \label{tube_constraint}
    $$
        \|Ah\|_p \leq 2 \eps
    $$
\end{claim}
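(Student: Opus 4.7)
The plan is to prove this by a direct application of the triangle inequality, using two facts: the feasibility of $\widehat{x}$ in the convex program \eqref{l1_minimization}, and the assumed bound $\|e\|_p \leq \eps$ on the noise vector. No properties of $A$ (such as RIP) are needed at this stage — the claim holds for arbitrary measurement matrices.

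First I would expand $Ah$ using the definitions. Since $h = \widehat{x} - x$ and $y = Ax + e$, one has
$$
Ah = A\widehat{x} - Ax = (A\widehat{x} - y) + (y - Ax) = (A\widehat{x} - y) + e \enspace.
$$
Applying the triangle inequality for the $\ell_p$ norm then yields
$$
\|Ah\|_p \leq \|A\widehat{x} - y\|_p + \|e\|_p \enspace.
$$

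For the first term, I would invoke the constraint of the convex program \eqref{l1_minimization}: the output $\widehat{x}$ is by definition feasible, so $\|A\widehat{x} - y\|_p \leq \eps$. For the second term, the assumption on the noise gives $\|e\|_p \leq \eps$. Combining these two bounds gives $\|Ah\|_p \leq 2\eps$, as desired. The only potential subtlety — if one can call it that — is verifying feasibility: since $y = Ax + e$ and $\|e\|_p \leq \eps$, the true signal $x$ itself satisfies $\|Ax - y\|_p = \|e\|_p \leq \eps$, so the feasible region is non-empty and the minimizer $\widehat{x}$ exists and inherits the same feasibility bound. There is no real obstacle here; this claim is essentially a sanity check that sets up the more substantive later claims, where RIP and the $\ell_1$-minimality of $\widehat{x}$ will be used to control $\|h\|_p$ in terms of the tail of $x$ outside $S$ and the noise.
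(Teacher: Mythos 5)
Your proof is correct and takes essentially the same approach as the paper: decompose $Ah$ via the intermediate point $y$, apply the triangle inequality, and bound each term by $\eps$ using the feasibility of $\widehat{x}$ in \eqref{l1_minimization} and $\|e\|_p \leq \eps$. The only cosmetic difference is that the paper writes the second term as $\|Ax - y\|_p$ rather than $\|e\|_p$, which is the same quantity.
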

\begin{proof}
    $$
        \|Ah\|_p = \|A\widehat{x} - Ax\|_p \leq \|A \widehat{x} - y\|_p + \|Ax - y\|_p \leq 2 \eps \enspace,
    $$
    since $\widehat{x}$ is a feasible solution of~(\ref{l1_minimization}), and $\|Ax - y\|_p = \|e\|_p \leq \eps$.
\end{proof}

\begin{claim}
    \label{tail_optimality}
    We have
    $$
        \|h_{\overline{S}}\|_1 \leq \|h_S\|_1 + 2 \|x_{\overline{S}}\|_1 \enspace.
    $$
\end{claim}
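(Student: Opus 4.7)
The plan is to exploit the optimality of $\widehat{x}$ for the convex program~\eqref{l1_minimization}. First I would observe that the true signal $x$ is itself feasible for~\eqref{l1_minimization}, because $\|Ax - y\|_p = \|e\|_p \leq \eps$. Consequently, by the minimality of $\|\widehat{x}\|_1$, we have $\|\widehat{x}\|_1 \leq \|x\|_1$.

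Next, since $S$ and $\overline{S}$ partition the coordinates, I would split both sides according to this partition. Writing $\widehat{x} = x + h$ and using the fact that the $\ell_1$ norm decomposes over disjoint supports, the inequality $\|x + h\|_1 \leq \|x\|_1$ becomes
$$
\|x_S + h_S\|_1 + \|x_{\overline{S}} + h_{\overline{S}}\|_1 \leq \|x_S\|_1 + \|x_{\overline{S}}\|_1 \enspace.
$$

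The key step is then a pair of triangle-inequality estimates applied termwise: lower-bound $\|x_S + h_S\|_1 \geq \|x_S\|_1 - \|h_S\|_1$ and $\|x_{\overline{S}} + h_{\overline{S}}\|_1 \geq \|h_{\overline{S}}\|_1 - \|x_{\overline{S}}\|_1$. Substituting these into the previous display and cancelling $\|x_S\|_1$ from both sides yields
$$
-\|h_S\|_1 + \|h_{\overline{S}}\|_1 - \|x_{\overline{S}}\|_1 \leq \|x_{\overline{S}}\|_1 \enspace,
$$
which rearranges exactly to the desired inequality. There is no real obstacle here: the argument is entirely a consequence of feasibility of $x$ combined with the reverse triangle inequality, and does not even use the RIP property of $A$. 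The proof is identical in spirit to the classical argument in~\cite{crt-ssrii-06}, only phrased so that it applies verbatim for any $p \geq 1$ since we measure the residual in $\ell_p$ but control the signal in $\ell_1$.
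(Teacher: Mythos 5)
Your proof is correct and is essentially the paper's argument: both rely on feasibility of $x$ for the $\ell_1$-minimization program, the split of $\|\widehat{x}\|_1 = \|x+h\|_1$ over $S$ and $\overline{S}$, and the reverse triangle inequality on each part, followed by cancellation of $\|x_S\|_1$. The paper simply compresses all of this into a single chain of inequalities, whereas you spell out the intermediate steps.
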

\begin{proof}
    Since $x$ is a feasible solution for~\eqref{l1_minimization},
    we have
    $$
        \|x_S\|_1 + \|x_{\overline{S}}\|_1 = \|x\|_1 \geq \|\widehat{x}\|_1 = \|x + h\|_1 \geq
        \|x_S\|_1 - \|h_S\|_1 + \|h_{\overline{S}}\|_1 - \|x_{\overline{S}}\|_1 \enspace. \qedhere
    $$
\end{proof}

\begin{claim}
    \label{sum_of_tails}
    For every $1 \leq p \leq \infty$ we have
    $$
        \sum_{i \geq 1} \|h_{T_i}\|_p \leq \frac{1}{\alpha^{1-1/p}} \cdot \left(\|h_S\|_p +
        \frac{2 \|x_{\overline{S}}\|_1}{k^{1-1/p}}\right) \enspace.
    $$
\end{claim}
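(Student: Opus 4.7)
The plan is to exploit the decreasing-rearrangement structure of the blocks $T_0, T_1, T_2, \ldots$ to pass from $\ell_p$-norms back to $\ell_1$-norms (which is where Claim~\ref{tail_optimality} will give us control), and then convert the $\ell_1$ bound on $h_S$ into an $\ell_p$ bound using a single Hölder step.

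First I would record the key monotonicity consequence of how the $T_i$ were chosen: since every coordinate of $h$ inside $T_i$ has absolute value at most that of every coordinate in $T_{i-1}$, we have $\|h_{T_i}\|_\infty \leq \|h_{T_{i-1}}\|_1 / (\alpha k)$ for $i \geq 1$. Combined with the trivial bound $\|v\|_p \leq |\supp(v)|^{1/p} \|v\|_\infty$ applied to $v = h_{T_i}$ (with $|\supp(v)| \leq \alpha k$), this yields
\[
    \|h_{T_i}\|_p \;\leq\; (\alpha k)^{1/p} \cdot \frac{\|h_{T_{i-1}}\|_1}{\alpha k} \;=\; \frac{\|h_{T_{i-1}}\|_1}{(\alpha k)^{1-1/p}} \enspace.
\]

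Next I would sum this telescoping-style inequality over $i \geq 1$. Since the blocks $T_0, T_1, T_2, \ldots$ partition $[n]\setminus S$, the resulting sum $\sum_{i\geq 0} \|h_{T_i}\|_1$ equals exactly $\|h_{\overline S}\|_1$, giving
\[
    \sum_{i\geq 1} \|h_{T_i}\|_p \;\leq\; \frac{\|h_{\overline S}\|_1}{(\alpha k)^{1-1/p}} \enspace.
\]
Now I would invoke Claim~\ref{tail_optimality} to replace $\|h_{\overline S}\|_1$ with $\|h_S\|_1 + 2\|x_{\overline S}\|_1$, and finally apply Hölder's inequality to the first term in the form $\|h_S\|_1 \leq |S|^{1-1/p}\|h_S\|_p = k^{1-1/p}\|h_S\|_p$. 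Substituting and distributing the factor $(\alpha k)^{-(1-1/p)}$ produces the claimed bound.

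There is no real obstacle: the argument is entirely mechanical once one spots the right ``$\|h_{T_i}\|_\infty$ controlled by the average over $T_{i-1}$'' step, which is standard in the RIP-based analysis of $\ell_1$-minimization (and indeed is the sole place where the choice of the blocks of size $\alpha k$ enters). The only minor care needed is to handle the boundary cases $p = 1$ and $p = \infty$ in the Hölder step, both of which degenerate to trivial identities.
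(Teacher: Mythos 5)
Your proof is correct and follows exactly the same route as the paper: bound $\|h_{T_i}\|_\infty$ by the average of $|h|$ over $T_{i-1}$, convert to $\ell_p$ via the support-size inequality, sum to get $\|h_{\overline S}\|_1/(\alpha k)^{1-1/p}$, apply Claim~\ref{tail_optimality}, and finish with a single H\"older step $\|h_S\|_1 \leq k^{1-1/p}\|h_S\|_p$. (Incidentally, the paper writes the $\|h_{T_i}\|_\infty$ bound ``for $i \geq 2$'' but then uses it starting at $i=1$; your version correctly states it for $i \geq 1$.)
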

\begin{proof}
    For every $i \geq 2$ we have $\|h_{T_i}\|_{\infty} \leq \|h_{T_{i-1}}\|_1 / (\alpha k)$
    by the definition of $T_i$, which implies $\|h_{T_i}\|_p \leq (\alpha k \cdot \|h_{T_i}\|_{\infty}^p)^{1/p} \leq \|h_{T_{i-1}}\|_1 / (\alpha k)^{1-1/p}$. 		Hence
    \begin{align*}
        \sum_{i \geq 1} \|h_{T_i}\|_p \leq \frac{1}{(\alpha k)^{1-1/p}} \cdot \sum_{i \geq 0} \|h_{T_i}\|_1
        &= \frac{\|h_{\overline{S}}\|_1}{(\alpha k)^{1-1/p}}
        \\ &\leq \frac{\|h_S\|_1 + 2 \|x_{\overline{S}}\|_1}{(\alpha k)^{1-1/p}}
        \leq \frac{1}{\alpha^{1-1/p}} \cdot \left(\|h_S\|_p + \frac{2 \|x_{\overline{S}}\|_1}{k^{1-1/p}}\right) \enspace,
    \end{align*}
    where the second inequality follows from~\claimref{tail_optimality}
    and the third inequality follows from the relation between $\ell_1$ and $\ell_p$ norms, that is, $\|h_S\|_1 \leq k^{1-1/p} \cdot \|h_S\|_p $.
\end{proof}

\begin{claim}
    \label{rem_head}
    For every $1 \leq p \leq \infty$ we have
    $$
        \|h_{\overline{S \cup T_0}}\|_p \leq \frac{1}{\alpha^{1-1/p}} \cdot \left(\|h_S\|_p +
        \frac{2 \|x_{\overline{S}}\|_1}{k^{1-1/p}}\right) \enspace.
    $$
\end{claim}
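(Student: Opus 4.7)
The claim follows almost immediately by combining \claimref{sum_of_tails} with the disjointness of the blocks $T_1, T_2, \ldots$ and a standard monotonicity of $\ell_p$ norms.

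First, I would observe that the sets $S, T_0, T_1, T_2, \ldots$ form a partition of $[n]$, so the sets $T_1, T_2, \ldots$ partition $\overline{S \cup T_0}$. Consequently, the vector $h_{\overline{S \cup T_0}}$ has its support split among the disjoint sets $T_i$ for $i \geq 1$, and therefore
$$
\|h_{\overline{S \cup T_0}}\|_p^p = \sum_{i \geq 1} \|h_{T_i}\|_p^p.
$$

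Next, I would use the fact that for any sequence of non-negative reals $(a_i)$ and any $p \geq 1$, one has $\big(\sum_i a_i^p\big)^{1/p} \leq \sum_i a_i$ (this is just the inequality $\|\cdot\|_p \leq \|\cdot\|_1$ on sequence space). Applying this with $a_i = \|h_{T_i}\|_p$ gives
$$
\|h_{\overline{S \cup T_0}}\|_p = \Big(\sum_{i \geq 1} \|h_{T_i}\|_p^p\Big)^{1/p} \leq \sum_{i \geq 1} \|h_{T_i}\|_p.
$$

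Finally, plugging in the bound from \claimref{sum_of_tails} yields exactly the desired inequality. There is no real obstacle here; the content of the claim is really just a repackaging of \claimref{sum_of_tails} using that the $T_i$'s partition $\overline{S\cup T_0}$ and that $\ell_p$-norms aggregate submultiplicatively across disjoint supports relative to the $\ell_1$-aggregation used in the previous claim.
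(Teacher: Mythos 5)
Your argument is correct and takes essentially the same route as the paper: both reduce to the intermediate bound $\|h_{\overline{S\cup T_0}}\|_p \leq \sum_{i\geq 1}\|h_{T_i}\|_p$ and then invoke \claimref{sum_of_tails}; the paper gets this bound by the triangle inequality applied to $h_{\overline{S\cup T_0}}=\sum_{i\geq 1}h_{T_i}$, while you get it from the disjoint-support identity $\|h_{\overline{S\cup T_0}}\|_p^p=\sum_{i\geq 1}\|h_{T_i}\|_p^p$ followed by the $\ell_p\leq\ell_1$ comparison on the sequence $(\|h_{T_i}\|_p)_i$, which is the same inequality in substance. One small caveat: the identity with $p$-th powers only makes sense for finite $p$, whereas the claim is stated for $1\leq p\leq\infty$, so the $p=\infty$ case needs a one-line separate check (or one can just use the triangle inequality, which handles all $p$ uniformly as the paper does).
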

\begin{proof}
    $$
        \|h_{\overline{S \cup T_0}}\|_p = \Big\|\sum_{i \geq 1} h_{T_i}\Big\|_p
        \leq \sum_{i \geq 1} \|h_{T_i}\|_p \leq
        \frac{1}{\alpha^{1-1/p}} \cdot \left(\|h_S\|_p +
        \frac{2 \|x_{\overline{S}}\|_1}{k^{1-1/p}}\right) \enspace,
    $$
    where the last inequality follows from~\claimref{sum_of_tails}.
\end{proof}

\subsection{RIP-$p$ matrices implies $\ell_p/\ell_1$ recovery}

Here we prove that if $A$ is a matrix with RIP-$p$ property, then the $\ell_1$-minimization in \equationref{l1_minimization} recovers a vector that is close enough to $x$. We begin with an auxiliary estimate.

\begin{lemma}
    \label{rip_tail}
    If $A$ is an $((\alpha + 1) k, D)$-RIP-$p$ matrix for $p>1$ and $1 < D < \alpha^{1-1/p}$, then
    $$
        \|h_{S \cup T_0}\|_p \leq \frac{2D}{\alpha^{1-1/p} - D} \cdot \frac{\|x_{\overline{S}}\|_1}{k^{1-1/p}} \enspace + \frac{2\alpha^{1-1/p}}{\alpha^{1-1/p} - D} \cdot \eps.
    $$
\end{lemma}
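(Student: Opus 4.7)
The plan is to adapt the Cand\`es--Romberg--Tao decoding argument to the general $\ell_p$ setting, exploiting the fact that $h_{S \cup T_0}$ is supported on a set of size $(\alpha+1)k$, hence is a sparsity regime where the RIP-$p$ assumption directly applies. Concretely, I would use the \emph{lower} side of the RIP inequality, $\|h_{S\cup T_0}\|_p \leq \|A h_{S\cup T_0}\|_p$, and then control $\|A h_{S\cup T_0}\|_p$ by the total tube error $\|Ah\|_p$ plus the contributions $\|A h_{T_i}\|_p$ for $i \geq 1$, each of which is $\alpha k$-sparse and therefore controlled from above by $D \|h_{T_i}\|_p$ via the RIP-$p$ property.

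First I would write
\[
\|A h_{S \cup T_0}\|_p \;=\; \Bigl\|A h - \sum_{i \geq 1} A h_{T_i}\Bigr\|_p \;\leq\; \|Ah\|_p + \sum_{i \geq 1} \|A h_{T_i}\|_p.
\]
By \claimref{tube_constraint} the first term is at most $2\eps$, and by the upper RIP-$p$ bound applied to each $\alpha k$-sparse vector $h_{T_i}$, the second sum is at most $D \sum_{i \geq 1}\|h_{T_i}\|_p$. Substituting the bound of \claimref{sum_of_tails} then yields
\[
\|A h_{S\cup T_0}\|_p \;\leq\; 2\eps + \frac{D}{\alpha^{1-1/p}}\left(\|h_S\|_p + \frac{2\|x_{\overline S}\|_1}{k^{1-1/p}}\right).
\]

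Next I would use the lower RIP-$p$ bound on the $(\alpha+1)k$-sparse vector $h_{S \cup T_0}$ to replace the left side by $\|h_{S\cup T_0}\|_p$, and bound $\|h_S\|_p \leq \|h_{S\cup T_0}\|_p$ on the right, giving
\[
\|h_{S\cup T_0}\|_p \;\leq\; 2\eps + \frac{D}{\alpha^{1-1/p}} \|h_{S\cup T_0}\|_p + \frac{2D}{\alpha^{1-1/p}}\cdot \frac{\|x_{\overline S}\|_1}{k^{1-1/p}}.
\]
Since $D < \alpha^{1-1/p}$, the coefficient $1 - D/\alpha^{1-1/p}$ is positive, and dividing through, then multiplying numerator and denominator by $\alpha^{1-1/p}$, gives exactly the stated inequality.

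The argument is essentially routine once the right decomposition is in place; the only subtle point is that we must have $(\alpha+1)k$-sparsity available in the RIP hypothesis (not merely $k$) so that we can both invoke the lower bound on $h_{S\cup T_0}$ and the upper bound on each $h_{T_i}$. The condition $D < \alpha^{1-1/p}$ is what makes the self-bounding step rearrangeable and reflects the usual ``sufficiently small distortion'' requirement inherited from the $p=2$ case. There is no genuine obstacle; the proof is a direct $\ell_p$-generalization of the classical argument, where the exponent $1 - 1/p$ replaces the $p=2$ exponent $1/2$ throughout.
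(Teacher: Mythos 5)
Your argument is correct and follows essentially the same route as the paper: decompose $Ah$ as $Ah_{S\cup T_0}+\sum_{i\geq 1}Ah_{T_i}$, apply the lower RIP-$p$ bound to $h_{S\cup T_0}$ and the upper bound to each $\alpha k$-sparse $h_{T_i}$, plug in \claimref{tube_constraint} and \claimref{sum_of_tails}, bound $\|h_S\|_p\leq\|h_{S\cup T_0}\|_p$, and rearrange using $D<\alpha^{1-1/p}$. The only difference is cosmetic: the paper starts from $2\eps\geq\|Ah\|_p$ and chains downward, whereas you start from $\|Ah_{S\cup T_0}\|_p$ and bound it from above; the inequalities and constants are identical.
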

\begin{proof}
    \begin{align*}
        2\eps &\overset{\text{\ding{172}}} \geq
        		\|Ah\|_p \geq \|Ah_{S \cup T_0}\|_p - \sum_{i \geq 1} \|Ah_{T_i}\|_p
        \\ &\overset{\text{\ding{173}}} \geq
        		\|h_{S \cup T_0}\|_p - D \cdot \sum_{i \geq 1}\|h_{T_i}\|_p
        \\ &\overset{\text{\ding{174}}} \geq
        \|h_{S \cup T_0}\|_p - \frac{D}{\alpha^{1-1/p}} \cdot \left(\|h_S\|_p +
        \frac{2 \|x_{\overline{S}}\|_1}{k^{1-1/p}}\right)
        \geq
        \left(1 - \frac{D}{\alpha^{1-1/p}}\right) \cdot \|h_{S \cup T_0}\|_p -
        \frac{2 D \cdot \|x_{\overline{S}}\|_1}{(\alpha k)^{1-1/p}},
    \end{align*}
    where the inequality \ding{172} is due to Claim~\ref{tube_constraint} both $x$ and $\widehat{x}$ are feasible for~\eqref{l1_minimization}, inequality \ding{173} holds since $A$ satisfies the RIP-$p$ property
    and inequality \ding{174} is due to~\claimref{sum_of_tails}.
\end{proof}

Now we are ready to extend the result from~\cite{crt-ssrii-06}.
We prove that if a measurement matrix $A$ has RIP-$p$ property for $p > 1$,
then one can perform the stable sparse recovery
with the $\ell_p / \ell_1$ guarantee via $\ell_1$-minimization.

\begin{theorem}
    For every $D > 1$, if $A$ is a $((4D)^{p/(p-1)}k,D)$-RIP-$p$ matrix for some $p>1$, then
    $$
        \|h\|_p \leq \frac{O(1)}{k^{1-1/p}} \cdot \|x_{\overline{S}}\|_1 + O(\eps).
    $$
\end{theorem}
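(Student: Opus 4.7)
The plan is to combine Lemma \ref{rip_tail} and Claim \ref{rem_head} via an appropriate choice of the parameter $\alpha$, decomposing $h$ as $h = h_{S\cup T_0} + h_{\overline{S\cup T_0}}$ and using the triangle inequality $\|h\|_p \leq \|h_{S \cup T_0}\|_p + \|h_{\overline{S \cup T_0}}\|_p$. Everything downstream hinges on selecting $\alpha$ so that (a) the RIP hypothesis covers $(\alpha+1)k$-sparse vectors and (b) the denominator $\alpha^{1-1/p} - D$ in Lemma \ref{rip_tail} is bounded away from zero by a constant multiple of $D$.

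The natural choice is $\alpha = (2D)^{p/(p-1)}$, which gives $\alpha^{1-1/p} = 2D$. Then $\alpha+1 \leq 2\alpha = 2\cdot(2D)^{p/(p-1)} \leq (4D)^{p/(p-1)}$, so the RIP-$p$ assumption of the theorem applies with sparsity $(\alpha+1)k$; in particular Lemma \ref{rip_tail} may be invoked (its hypothesis $\alpha^{1-1/p} > D$ is satisfied). Plugging into Lemma \ref{rip_tail}, the coefficients simplify as $\frac{2D}{\alpha^{1-1/p}-D} = 2$ and $\frac{2\alpha^{1-1/p}}{\alpha^{1-1/p}-D} = 4$, yielding
$$
\|h_{S\cup T_0}\|_p \;\leq\; 2\cdot\frac{\|x_{\overline{S}}\|_1}{k^{1-1/p}} + 4\eps.
$$

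Next, applying Claim \ref{rem_head} together with the trivial inequality $\|h_S\|_p \leq \|h_{S\cup T_0}\|_p$ gives
$$
\|h_{\overline{S\cup T_0}}\|_p \;\leq\; \frac{1}{2D}\,\|h_{S\cup T_0}\|_p + \frac{1}{D}\cdot\frac{\|x_{\overline{S}}\|_1}{k^{1-1/p}}.
$$
Adding the two estimates,
$$
\|h\|_p \;\leq\; \Bigl(1 + \tfrac{1}{2D}\Bigr)\|h_{S\cup T_0}\|_p + \frac{1}{D}\cdot\frac{\|x_{\overline{S}}\|_1}{k^{1-1/p}} \;\leq\; \frac{O(1)}{k^{1-1/p}}\cdot\|x_{\overline{S}}\|_1 + O(\eps),
$$
where the hidden constants depend only on $D$ (and are bounded for any fixed $D>1$).

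There is no real obstacle here; the proof is essentially an assembly of Claims \ref{tube_constraint}--\ref{rem_head} and Lemma \ref{rip_tail}. The only point that requires thought is the calibration of $\alpha$: one must balance the prefactor $1/\alpha^{1-1/p}$ from Claim \ref{rem_head} (which wants $\alpha$ large) against the required RIP sparsity $(\alpha+1)k$ (which wants $\alpha$ small), while simultaneously keeping $\alpha^{1-1/p}$ strictly above $D$. The choice $\alpha^{1-1/p} = 2D$ achieves all three, and the exponent $p/(p-1)$ in the theorem statement is exactly the inversion of $1-1/p$.
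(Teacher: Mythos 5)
Your proof is correct and follows essentially the same route as the paper's: the same choice $\alpha = (2D)^{p/(p-1)}$ so that $\alpha^{1-1/p}=2D$, the same triangle-inequality split $h=h_{S\cup T_0}+h_{\overline{S\cup T_0}}$, and the same combination of Lemma~\ref{rip_tail}, Claim~\ref{rem_head}, and the bound $\|h_S\|_p\leq\|h_{S\cup T_0}\|_p$. The only cosmetic difference is that you evaluate the coefficients ($2$ and $4$) numerically before combining, while the paper keeps them symbolic until the last line.
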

\begin{proof}
    Setting $\alpha = (2D)^{p/(p-1)} > 2$, we have $(4D)^{p/(p-1)}k \geq 2^{p/(p-1)}\cdot \alpha \cdot k > (\alpha+1)k$ and therefore the assumptions in \lemmaref{rip_tail} hold. We proceed as follows.
    \begin{align*}
        \|h\|_p \leq \|h_{S \cup T_0}\|_p + \|h_{\overline{S \cup T_0}}\|_p
        &\overset{\text{\ding{172}}} \leq
        		\|h_{S \cup T_0}\|_p + \frac{1}{\alpha^{1-1/p}} \cdot \left(\|h_S\|_p + \frac{2 \|					x_{\overline{S}}\|_1}{k^{1-1/p}}\right)
        \\ &\leq \left(1 + \frac{1}{\alpha^{1-1/p}}\right) \cdot \|h_{S \cup T_0}\|_p
        + \frac{2 \|x_{\overline{S}}\|_1}{(\alpha k)^{1-1/p}}
        \\ &\overset{\text{\ding{173}}} \leq
        		\left(1 + \frac{1}{\alpha^{1-1/p}}\right) \cdot \left(\frac{2D}{\alpha^{1-1/p} - D} \cdot 					\frac{\|x_{\overline{S}}\|_1}{k^{1-1/p}} + \frac{2 \alpha^{1-1/p}}{\alpha^{1-1/p} - D} \cdot \eps\right)
        		+ \frac{2 \|x_{\overline{S}}\|_1}{(\alpha k)^{1-1/p}}
        	\\ &\overset{\text{\ding{174}}} \leq
        		\frac{O(1)}{k^{1-1/p}} \cdot \|x_{\overline{S}}\|_1 + O(\eps).
    \end{align*}
    Above, inequality \ding{172} follows from~\claimref{rem_head},
    inequality \ding{173} follows from~\lemmaref{rip_tail}
    and the last inequality \ding{174} holds because $\alpha^{1-1/p} = 2D$.
\end{proof}

\subsection{$\ell_p/\ell_1$ recovery implies RIP-$p$ matrices}
Here we present a simple argument that any matrix $A$ with $\|A\|_p \leq 1$ that allows stable sparse recovery
with the $\ell_p / \ell_1$ guarantee (with arbitrarily large $C_1$) \emph{must be} $(k, C_2)$-RIP-$p$.
First, observe that the recovery procedure must map $0 \in \Rbb^m$ to $0 \in \Rbb^n$, as long as $C_1$ is finite. Second, let $x \in \Rbb^n$
be any $k$-sparse signal, and consider a sketch $y = Ax + e$, where $e = -Ax$ (thus, $y = 0$).
Since we must recover $0 \in \Rbb^m$ to $0 \in \Rbb^n$, one has from~\eqref{lp_lq}
$$
    \|x\|_p \leq C_2 \cdot \|e\|_p = C_2 \cdot \|Ax\|_p \enspace.
$$
Combining this inequality with $\|Ax\|_p \leq \|x\|_p$ (which follows from $\|A\|_p \leq 1$), we obtain the result.

%
%

\end{document}